\newtheorem{theorem}{Theorem}
\newtheorem{claim}{Claim}
\newtheorem{lemma}{Lemma}
\newtheorem{Definition}{Definition}
\newcommand{\OPT}{\textup{\textsf{OPT}}\xspace}
\newcommand{\ALG}{\textup{\textsf{ALG}}\xspace}
\newcommand{\NSJF}{\textup{\textsf{NSJF}}\xspace}
\newcommand{\SRPT}{\textup{\textsf{SRPT}}\xspace}
\newcommand{\SmallOnly}{\textsf{Small-Only}\xspace}
\newcommand{\LargeOnly}{\textsf{Large-Only}\xspace}
\newcommand{\All}{\textsf{Mixed}\xspace}
\newcommand{\LargeSet}{A}
\definecolor{Gray0}{RGB}{160, 160, 160}
\definecolor{Gray1}{RGB}{105, 105, 105}
\definecolor{FixedBlue}{RGB}{68, 114, 196}
\definecolor{LeftOrange}{RGB}{237, 125, 49}
\definecolor{RightTeal}{RGB}{0, 176, 150}
\newtcolorbox{openq}[1]{
    enhanced,
    frame hidden,
    borderline west={2pt}{0pt}{black!70},
    colback=black!4,
    coltitle=black,
    fonttitle=\bfseries,
    title={#1},
    left=8pt, right=8pt,
    top=4pt, bottom=4pt,
    arc=0mm,
    toptitle=4pt, bottomtitle=0pt
}
\newtcolorbox{construction}{                                                                   enhanced,
      frame hidden,
      colback=black!4,
      left=8pt, right=8pt,
      top=4pt, bottom=4pt,
      arc=0mm
  }
\newcommand{\E}{\mathbb{E}}
\title{Online Flow Time Minimization:\\
Tight Bounds for Non-Preemptive Algorithms}
\author{
  Yutong Geng\\
  Shanghai Jiao Tong University\\
  \texttt{gengyutong@sjtu.edu.cn}
  \and
  Enze Sun\\
  The University of Hong Kong\\
  \texttt{sunenze@connect.hku.hk}
  \and
  Zonghan Yang\\
  Shanghai Jiao Tong University\\
  \texttt{fstqwq@sjtu.edu.cn}
  \and
  Yuhao Zhang\\
  Shanghai Jiao Tong University\\
  \texttt{zhang\_yuhao@sjtu.edu.cn}
}
\date{}
\begin{document}
\maketitle



\begin{abstract}
This paper studies the online scheduling problem of minimizing total flow time for $n$ jobs on $m$ identical machines. A classical $\Omega(n)$ lower bound shows that no deterministic single-machine algorithm can beat the trivial greedy, even when $n$ is known in advance. However, this barrier is specific to deterministic algorithms on a single machine, leaving open what randomization, multiple machines, or the kill-and-restart capability can achieve.

We give a nearly complete answer. For randomized non-preemptive algorithms, we establish a tight $\Theta(\sqrt{n/m})$ competitive ratio, which also improves the best offline approximation to $O(\sqrt{n/m})$. For deterministic non-preemptive algorithms on multiple machines, we prove an $O(n/m^2 + \sqrt{n/m}\log m)$ upper bound and an $\Omega(n/m^2 + \sqrt{n/m})$ lower bound. In the kill-and-restart model, we reveal a sharp transition for deterministic algorithms: $\Omega(n/\log n)$ for $m = 1$ versus $\Theta(\sqrt{n/m})$ for $m \ge 2$; the latter matches the optimal randomized ratio, and we further show that randomization provides no additional power in this model.

We also investigate the setting where $n$ is unknown. We prove that no randomized non-preemptive algorithm achieves $o(n)$ on one machine or $o(n/m^2 + \sqrt{n/m})$ on $m$ machines. In contrast, our kill-and-restart algorithm achieves $O(n^{\alpha}/\sqrt{m})$ for $m \ge 2$, where $\alpha = (\sqrt{5}-1)/2$, breaking the trivial bound without knowledge of $n$.
\end{abstract}



\section{Introduction}

Scheduling is one of the most fundamental problems in combinatorial optimization, with deep roots in both theory and practice. A rich landscape of scheduling variants spans different machine environments (identical, related, and unrelated~\cite{DBLP:journals/algorithmica/EpsteinS04,DBLP:journals/mp/LenstraST90,DBLP:journals/tcs/GairingMW07}) and optimization objectives (makespan~\cite{DBLP:journals/siamam/Graham69}, completion time~\cite{DBLP:journals/mor/HallSSW97}, and flow time~\cite{DBLP:conf/soda/Bansal04,DBLP:journals/siamcomp/KellererTW99,DBLP:journals/jcss/LeonardiR07}). A prominent line of research focuses on the \emph{online} setting~\cite{graham1966bounds,DBLP:books/daglib/0097013}, where jobs arrive over time and the scheduler must make irrevocable decisions without knowledge of future arrivals. In modern applications such as cloud resource management and data center scheduling~\cite{DBLP:journals/jnsm/JenningsS15, DBLP:conf/nsdi/GhodsiZHKSS10,DBLP:conf/analco/Fleischer10}, an online service platform must dispatch jobs submitted by multiple users as they arrive; this naturally leads to the objective of minimizing the total \emph{flow time} (or equivalently, the average flow time), defined as the sum over all jobs of the difference between completion time and release time.

In this paper, we focus on the flow time minimization problem on $m$ identical machines. Most positive results (e.g., \cite{DBLP:conf/esa/LucarelliTST16,DBLP:conf/stoc/ChadhaGKM09,DBLP:conf/soda/ChenIP25,DBLP:conf/icalp/GuptaKL18,DBLP:journals/dam/EpsteinS06}) on the flow time objective require additional capabilities --- such as preemption, resource augmentation, and rejection --- to surpass the trivial greedy algorithm\footnote{The trivial greedy algorithm simply schedules any waiting job whenever any machine becomes available, achieving a competitive ratio of $\Theta(n / m)$.}. This is due to the following single-machine lower bound, which holds even when $n$ is known in advance:
\begin{construction}
  The adversary first releases a unit job at time $0$. As soon as the algorithm starts processing it or after waiting for $n$ units of time, the adversary releases $n - 1$ $\varepsilon$-jobs.
\end{construction}
Any deterministic non-preemptive algorithm is thus forced to incur $\Omega(n)$ flow time, whereas the offline optimum remains $O(1)$~\cite{DBLP:journals/siamcomp/KellererTW99,DBLP:conf/stoc/ChekuriKZ01}.
However, this lower bound applies only to \emph{deterministic} algorithms in the \emph{single-machine} setting. The limitations of this lower bound motivate our exploration of the following open questions.

\paragraph{Utilizing Randomness.} \emph{Randomization} is widely employed in the design of online algorithms. Against an oblivious adversary who does not adapt to the randomness used by the algorithm, randomization often breaks deterministic barriers, including the classic caching problem and matching problem~\cite{DBLP:journals/jal/FiatKLMSY91, DBLP:conf/stoc/KarpVV90}.

In our problem, the previously known lower bound $\Omega(\sqrt{n})$ for oblivious adversaries by \cite{DBLP:journals/tcs/EpsteinS03} applies only to the single-machine case; for $m \geq 2$, no information-theoretic lower bound stronger than $\Omega(1)$ was previously known, leaving a substantial gap from the $O(n/m)$ competitive ratio of the trivial greedy algorithm.

\begin{openq}{Open Question 1: Randomized Algorithms}
 Can we use randomization to beat the trivial greedy algorithm?\\
 Is it possible to find a randomized algorithm much better than $\Theta(\sqrt{n})$ (perhaps even a constant) when $m \geq 2$?
\end{openq}

\paragraph{Deterministic Multi-Machine Case.}
In fact, even without randomization, the problem remains unsolved when $m$ is large: the deterministic lower bound $\Omega(n/m^2)$ established by \cite{DBLP:journals/dam/EpsteinS06} decreases much faster than the trivial upper bound $\Theta(n/m)$ achieved by the greedy algorithm, leaving a substantial gap when $m$ is comparable to $n$.

\begin{openq}{Open Question 2: Deterministic Algorithms}
Is greedy indeed the best deterministic algorithm for the multi-machine case? \\
What is the optimal competitive ratio when $m$ is large?
\end{openq}

\paragraph{Kill-and-Restart Model.} A closely related direction is the
\emph{kill-and-restart} model, first introduced by~\cite{DBLP:journals/siamcomp/ShmoysWW95}, where the scheduler may kill a running job and restart it from scratch, losing all processed work. This model is particularly well-suited to applications where partially processed work cannot be resumed (e.g., data corruption or non-checkpointable tasks). Unlike preemption, it avoids the need to maintain and store potentially unbounded intermediate states of suspended jobs, preserving the non-preemptive nature of job execution while giving the scheduler additional flexibility to correct past decisions.

Kill-and-restart has proven powerful for other scheduling objectives. For makespan and total completion time, it surpasses known non-preemptive barriers~\cite{DBLP:conf/approx/0002KTWZ18,DBLP:journals/jal/SteeP05}; most notably, \cite{DBLP:journals/mp/JagerSWW25} showed that for non-clairvoyant scheduling under total completion time, kill-and-restart improves the competitive ratio from $\Omega(n)$ to $O(1)$.

Despite these successes, the potential of kill-and-restart for flow time minimization has remained largely unexplored. No algorithm was previously known to exploit this capability, and the only known lower bound $\Omega(\sqrt{n})$~\cite{DBLP:journals/tcs/EpsteinS03} is restricted to the single-machine case. It thus remained unclear whether, and to what extent, kill-and-restart can improve upon the $O(n/m)$-competitive greedy algorithm.

\begin{openq}{Open Question 3: Kill-and-Restart}
Can we use kill-and-restart capability to beat the trivial greedy algorithm?\\
Is it possible to find an algorithm under the kill-and-restart model that is much better than $\Theta(\sqrt{n})$ (perhaps even a constant) when $m \geq 2$?
\end{openq}

\subsection{Our Results with Known \texorpdfstring{$n$}{n}}

We give a complete set of answers to Open Questions 1--3 when $n$ is known in advance, summarized in Table~\ref{tab:res_single} (the single-machine case) and Table~\ref{tab:res_multi} (the multi-machine case). A key technical feature of our lower bounds in the multi-machine case is their robustness: unlike bounds that only apply to specific choices of these parameters, they hold for every fixed pair of $(n, m)$.

Throughout this paper, we assume $n \ge m$ when presenting competitive ratio; when $n < m$, all jobs can be processed immediately without waiting, yielding optimal schedules with a competitive ratio of $1$.

\subsubsection*{Randomized Algorithms: Beating the Deterministic Barrier}
We show that randomization can overcome the $\Omega(n)$ deterministic barrier. For the single-machine case, we present an $O(\sqrt{n})$-competitive randomized algorithm, matching the lower bound of \cite{DBLP:journals/tcs/EpsteinS03}. For the multi-machine case, we achieve an $O(\sqrt{n/m})$ competitive ratio and prove a matching $\Omega(\sqrt{n/m})$ lower bound that holds for every pair $(n, m)$. Together, these results completely characterize the power of randomization in non-preemptive flow time minimization.

\subsubsection*{Deterministic Algorithms: Closing the Multi-Machine Gap}
The single-machine case is already resolved, with an optimal competitive ratio of $\Theta(n)$. In the multi-machine case, we nearly close the gap between the known $\Omega(n/m^2)$ lower bound and the trivial $O(n/m)$ upper bound, pinning down the optimal ratio to $\tilde{\Theta}(n/m^2 + \sqrt{n / m})$.\footnote{$\tilde{\Theta}(\cdot)$ hides polylogarithmic factors.}

\subsubsection*{The Power of Kill-and-Restart}

Perhaps our most surprising finding is the non-uniform impact of the kill-and-restart capability on deterministic algorithms.

\begin{itemize}
    \item For the single-machine case, kill-and-restart offers little advantage: we prove a lower bound of $\Omega(n / \log n)$.
    \item For $m \geq 2$, the competitive ratio drops dramatically. We present a deterministic kill-and-restart algorithm achieving $O(\sqrt{n/m})$, a quadratic improvement over the single-machine case.
\end{itemize}

Remarkably, this deterministic kill-and-restart ratio of $O(\sqrt{n/m})$ matches the optimal randomized bound. We further show that this coincidence is tight: the $\Omega(\sqrt{n/m})$ lower bound holds even for randomized algorithms with kill-and-restart.

\subsubsection*{Improving the State-of-the-Art for Offline Scheduling}
A direct consequence of our online results is a new state-of-the-art for the classic offline problem. We improve the long-standing best approximation ratio from $O(\sqrt{n/m} \log(n/m))$ by \cite{DBLP:journals/jcss/LeonardiR07} to $O(\sqrt{n/m})$. The improved bound can be realized by either derandomizing our randomized algorithm or applying our deterministic kill-and-restart algorithm in an offline context where restarts can be ignored at no cost.

\begin{table}[t]
  \belowrulesep=0pt
  \aboverulesep=0pt
  \centering
  \caption{Summary of competitive ratios for the single-machine setting.}
  \vspace{-0.8em}
  \caption*{\small \colorbox{FixedBlue!10}{Shaded cells} indicate results from this paper.}
  \vspace{-0.3em}
  \renewcommand{\arraystretch}{1.5}
  \label{tab:res_single}
  \begin{tabular}{c|c|c|c}
    \toprule
      \multicolumn{2}{c|}{Single machine, $n$ jobs} & \textbf{Non-Preemptive} & \textbf{Kill-and-Restart} \\
    \midrule
    \multirow{4}{*}{\textbf{Deterministic}} & \multirow{2}{*}{\textsc{Upper Bound}} & $O(n)$ & $O(n)$ \\
        & & (Trivial) & (Trivial) \\
    \hhline{~---}
    & \multirow{2}{*}{\textsc{Lower Bound}} & $\Omega(n)$ & \cellcolor{FixedBlue!10} $\Omega(n/\log n)$ \\
    & & \cite{DBLP:journals/siamcomp/KellererTW99,DBLP:conf/stoc/ChekuriKZ01} & \cellcolor{FixedBlue!10} \Cref{thm:restart-lb} \\
    \hhline{----}
    \multirow{4}{*}{\textbf{Randomized}} & \multirow{2}{*}{\textsc{Upper Bound}} & \cellcolor{FixedBlue!10} $O(\sqrt{n})$ & \cellcolor{FixedBlue!10} $O(\sqrt{n})$ \\
    & & \cellcolor{FixedBlue!10} \Cref{thm:random-no-restart-alg-single} & \cellcolor{FixedBlue!10} \Cref{thm:random-no-restart-alg-single} \\
    \hhline{~---}
    & \multirow{2}{*}{\textsc{Lower Bound}} & $\Omega(\sqrt{n})$ & $\Omega(\sqrt{n})$ \\
    & & \cite{DBLP:journals/tcs/EpsteinS03} & \cite{DBLP:journals/tcs/EpsteinS03} \\
    \bottomrule
  \end{tabular}
\end{table}

\begin{table}[t]
  \belowrulesep=0pt
  \aboverulesep=0pt
  \centering
  \caption{Summary of competitive ratios for the multi-machine setting.}
  \vspace{-0.8em}
  \caption*{\small \colorbox{FixedBlue!10}{Shaded cells} indicate results from this paper.}
  \vspace{-0.3em}
  \renewcommand{\arraystretch}{1.5}
  \label{tab:res_multi}
  \begin{tabular}{c|c|c|c}
    \toprule
    \multicolumn{2}{c|}{{$m$ machines, $n$ jobs}} & \textbf{Non-Preemptive} & \textbf{Kill-and-Restart} \\
    \midrule
    \multirow{4}{*}{\textbf{Deterministic}} & \multirow{2}{*}{\textsc{Upper Bound}} & \cellcolor{FixedBlue!10} $O(n/m^2 + \sqrt{n/m}\log m)$ & \cellcolor{FixedBlue!10}
$O(\sqrt{n/m})$  for ${m \geq 2}$ \\
    & & \cellcolor{FixedBlue!10} \Cref{thm:det-no-restart} & \cellcolor{FixedBlue!10} \Cref{thm:det_restart} \\
    \hhline{~---}
    & \multirow{2}{*}{\textsc{Lower Bound}} & \cellcolor{FixedBlue!10} $\Omega(n/m^2 + \sqrt{n/m})$ & \cellcolor{FixedBlue!10} $\Omega(\sqrt{n/m})$ \\
    & & \cellcolor{FixedBlue!10} \Cref{thm:no-restart-lb} & \cellcolor{FixedBlue!10} \Cref{thm:multi-lb-restart} \\
    \hhline{----}
    \multirow{4}{*}{\textbf{Randomized}} & \multirow{2}{*}{\textsc{Upper Bound}} & \cellcolor{FixedBlue!10} $O(\sqrt{n/m})$ & \cellcolor{FixedBlue!10} $O(\sqrt{n/m})$
\\
    & & \cellcolor{FixedBlue!10} \Cref{thm:random-no-restart-multi-alg} & \cellcolor{FixedBlue!10} \Cref{thm:random-no-restart-multi-alg} \\
    \hhline{~---}
    & \multirow{2}{*}{\textsc{Lower Bound}} & \cellcolor{FixedBlue!10} $\Omega(\sqrt{n/m})$ & \cellcolor{FixedBlue!10} $\Omega(\sqrt{n/m})$ \\
    & & \cellcolor{FixedBlue!10} \Cref{thm:multi-lb} & \cellcolor{FixedBlue!10} \Cref{thm:multi-lb-restart} \\
    \bottomrule
  \end{tabular}
\end{table}

\subsection{Our Results without Prior Knowledge of \texorpdfstring{$n$}{n}}
\label{subsec:unknown_n}

All results above assume that the total number of jobs $n$ is known in advance, a convention shared by prior positive results on non-preemptive flow time minimization~\cite{DBLP:journals/dam/EpsteinS06,DBLP:journals/tcs/EpsteinS03}.
This assumption is critical for deterministic algorithms: \cite{DBLP:journals/dam/EpsteinS06} showed that without knowledge of $n$, no deterministic algorithm on any constant number of machines can achieve a competitive ratio of $o(n)$, even when compared against a single-machine offline optimum. Is this assumption also critical for randomized algorithms and algorithms with kill-and-restart capability?

\begin{openq}{Open Question 4: Unknown Number of Jobs}
Can randomization or kill-and-restart still overcome the deterministic $\Theta(n)$ barrier when $n$ is not known in advance?
\end{openq}

We find a sharp separation between the two.

\paragraph{Randomization is insufficient.}
We prove that without knowledge of $n$, no randomized non-preemptive algorithm can achieve a competitive ratio of $o(n)$ on a single machine, or $o(n/m^2 + \sqrt{n/m})$ on $m$ machines.

\paragraph{Kill-and-restart overcomes the barrier.}
In contrast, we extend our deterministic kill-and-restart algorithm to operate without knowledge of $n$, achieving a competitive ratio of $O(n^{\alpha}/\sqrt{m})$ for all $m \ge 2$, where $\alpha = (\sqrt{5}-1)/2 \approx 0.618$.
This significantly improves over the trivial $O(n/m)$ bound and demonstrates that kill-and-restart provides a fundamental advantage that persists even without knowledge of $n$.

\paragraph{Algorithms with predicted $n$.} Finally, we note that if an approximate prediction $\hat{n} \in [n/c, cn]$ is available instead of the exact value of $n$, all our known-$n$ algorithms' guarantees degrade by at most a constant factor of $c$, preserving their asymptotic performance.

\subsection{Our Techniques}
\label{subsec:tech}

A central challenge in non-preemptive scheduling is to partition incoming jobs into two manageable groups: a small set of \emph{large} jobs, and a set of \emph{small} jobs whose processing times are provably limited.  
This distinction is key to circumventing the classic $\Omega(n)$ lower bound, where handling a single large job at the wrong time can delay a swarm of subsequent small jobs, leading to high flow time. Offline algorithms can address this by partitioning jobs based on global properties of the total job set (e.g., the offline approximation algorithms by~\cite{DBLP:journals/siamcomp/KellererTW99} and \cite{DBLP:journals/jcss/LeonardiR07} relied on a pre-solved preemptive solution over all jobs, and used the flow time of jobs in the preemptive solution to partition jobs), but this is impossible to implement in an online context, since future jobs are unknown. Therefore, a fundamentally new approach is required for the online setting where decisions are irrevocable and based on incomplete information.

To overcome this barrier, we introduce a new online rank-based partitioning framework, which is the cornerstone of all our algorithms. Instead of using an unknown future metric (like the flow time in the preemptive solution over all jobs) to classify a job's size, we use its processing time $p_j$, an immediately available property. Jobs are dynamically classified based on their rank: at any time, we designate the $k$ jobs with the largest processing times seen so far as \emph{large}, with all others classified as \emph{small}. This rank-based distinction allows the algorithm to adapt as new jobs arrive. However, it also introduces two significant technical challenges that we resolve:

\begin{itemize}
    \item[\textbf{(1)}] \textbf{Managing dynamic re-classification.} A job initially classified as large may later become small as larger jobs arrive. If we start treating it as a small job from that point on, we must carefully bound its previous waiting time and analyze whether this transition delays future jobs by effectively leaving one more job behind \OPT. 
    To handle this transition gracefully, we introduce the concept of \emph{proxy jobs}. Instead of relabeling a large job when it should become small, we treat it as a newly arrived small job, called a \emph{proxy} job. We then decompose its total flow time into two parts: the waiting time during its large phase, and the flow time of the proxy job. Note that a proxy job is not an actual job from the original instance, so we must further bound the additional impact introduced by creating such proxy jobs.

    \item[\textbf{(2)}] \textbf{Scheduling small jobs without a global view.} Prior offline approaches use a near-optimal preemptive schedule as a guide, and try to simulate the preemptive schedule of small jobs in a non-preemptive way. However, without the guide, we must handle the stream of small jobs using only local information. We demonstrate that a simple greedy strategy, Non-Preemptive Shortest Job First (\NSJF), is remarkably effective. A key part of our technical contribution is a novel and robust analysis of \NSJF. We analyze its performance under the influence of \emph{blocking periods}, where machines may be temporarily occupied by large jobs. This analysis allows us to precisely bound the flow time of small jobs, even with interference.
\end{itemize}

With this online rank-based partitioning framework in place, subsequent algorithmic design can focus solely on coordinating a bounded number of large jobs with an efficiently scheduled set of small jobs. A key advantage of this framework is that it no longer requires a preemptive algorithm to exploit global properties of the entire job set. This distinction not only enables our algorithm to be implemented online, but also improves the offline approximation ratio by a logarithmic factor over the result of \cite{DBLP:journals/jcss/LeonardiR07}, since their logarithmic loss arises from relying on a suboptimal preemptive algorithm in the multi-machine setting.

We remark that our online partitioning framework and the analysis of \NSJF do not inherently rely on the machines being identical, and we expect them to be useful for broader settings such as related machines.

\subsection{Paper Organization}

The remainder of this paper is organized as follows. In \Cref{sec:partition}, we introduce our core technical contribution: the online rank-based partitioning framework that serves as the foundation for all our algorithms, and the analysis of the \NSJF algorithm for handling small jobs within this framework. In the deterministic algorithms, we use a generalized analysis of the \NSJF algorithm, which is detailed in \Cref{sec:SJF}.

We then present our main algorithmic results. \Cref{sec:single-random} develops our $O(\sqrt{n/m})$-competitive randomized non-preemptive algorithm. \Cref{sec:multi-det-no-restart} presents our deterministic non-preemptive $O(n/m^2 + \sqrt{n/m}\log m)$-competitive algorithm. \Cref{sec:det-restart} details our deterministic $O(\sqrt{n/m})$-competitive algorithm using kill-and-restart for $m \ge 2$, and then we extend it to the unknown-$n$ setting in \Cref{sec:unknownn_det_restart}.

Finally, we establish the tightness of our results in \Cref{sec:lower} and \Cref{sec:unknownn_lower}, where we present a comprehensive set of matching lower bounds for both the known-$n$ and unknown-$n$ settings, respectively.

\subsection{Related Works}

The problem of minimizing total flow time has been a central and extensively studied topic in scheduling theory for decades, spanning a wide range of models and settings.

\paragraph{Non-Preemptive Scheduling.}
In the classic non-preemptive offline setting, the problem is known to be NP-hard. A key result by \cite{DBLP:journals/jcss/LeonardiR07} provided an $O(\sqrt{n/m} \log(n/m))$-approximation algorithm, which stood as the best-known guarantee for many years. On the approximation hardness front, \cite{DBLP:journals/siamcomp/KellererTW99} established a polynomial-time lower bound of $\Omega(\sqrt{n})$ for the single-machine case, later complemented by \cite{DBLP:journals/jcss/LeonardiR07}, who presented an $\Omega(n^{1/3 - \varepsilon})$ lower bound for multiple machines when $n \geq m ^ {4/\varepsilon}$. In the online setting, the problem is notoriously difficult, with a simple greedy algorithm achieving a $\Theta(n/m)$ competitive ratio, which for a long time was the best-known result for deterministic algorithms.

\paragraph{Preemptive Scheduling.}
To circumvent the hardness of non-preemptive scheduling, a significant body of work has focused on the preemptive model, where jobs can be suspended and resumed. In this setting, the folklore \SRPT (Shortest Remaining Processing Time) policy is optimal on a single machine. For multiple machines, \SRPT achieves a tight competitive ratio of $\Theta(\log(n/m))$ when job migration is allowed \cite{DBLP:journals/jcss/LeonardiR07}. For the more restrictive model without migration, \cite{DBLP:journals/siamcomp/AwerbuchALR02} presented an algorithm with a competitive ratio of $O(\log(\min\{\log n, \log P\}))$.

\paragraph{Models with Algorithmic Relaxations.}
Another prominent line of research has explored models that grant additional power to the online non-preemptive algorithm. One common approach is \emph{resource augmentation}, where the algorithm is provided with faster or more machines than the adversary. \cite{DBLP:journals/algorithmica/PhillipsSTW02} showed that an $O(1)$-competitive ratio is achievable with $m \log P$ machines, where $P$ is the max-to-min job size ratio. Another popular relaxation is the \emph{rejection} model, where the algorithm can choose to reject a small fraction of jobs.  \cite{DBLP:journals/jcss/ChoudhuryDGK18} first introduced this notion for flow time, with subsequent works \cite{DBLP:conf/icalp/GuptaKL18, DBLP:conf/esa/LucarelliMTST18} achieving an $O(1/\varepsilon^3)$-competitive algorithm for rejecting an $\varepsilon$-fraction of jobs.

\paragraph{The Kill-and-Restart Model.}
The kill-and-restart capability, introduced by \cite{DBLP:journals/siamcomp/ShmoysWW95} in the context of makespan minimization, represents a minimal form of preemption. This model has been successfully applied to other objectives, demonstrating its power to surpass non-preemptive lower bounds. For minimizing total completion time, \cite{DBLP:journals/jal/SteeP05} designed a $3/2$-competitive algorithm, breaking the non-preemptive barrier of $1.582$. Even more dramatically, for non-clairvoyant scheduling, \cite{DBLP:journals/mp/JagerSWW25} showed that kill-and-restart reduces the competitive ratio for total completion time from $\Omega(n)$ to $O(1)$. Despite these successes, the potential of kill-and-restart for flow time minimization has remained largely unexplored. The primary work in this specific area was by \cite{DBLP:journals/tcs/EpsteinS03}, who established several lower bounds, including an $\Omega(\sqrt{n})$ bound for randomized algorithms on a single machine, but no upper bounds were known.

\section{Preliminaries}
\label{sec:pre}

\subsection{Problem Formulation}

An instance of our scheduling problem is defined by a set of $m$ identical machines and a set of $n$ jobs, $J = \{1, 2, \dots, n\}$. Each job $j \in J$ is characterized by\footnote{In our algorithms, when comparing job release times or processing times, we break ties in an arbitrary but fixed manner. For convenience, one may assume that all values are distinct.}:
\begin{itemize}
    \item A \emph{release time} $r_j$, at which the job becomes available for processing.
    \item A \emph{processing time} $p_j$, representing the duration required to complete the job.
\end{itemize}
A job becomes known to the algorithm at its release time $r_j$ and must be assigned to a machine at or after $r_j$. Each job must be processed on a single machine at a time and requires $p_j$ units of processing to complete. Each machine can process at most one job at a time.

We introduce standard terminology for describing the state of jobs and machines. A machine is said to be \emph{busy} at time~$t$ if it is processing a job, and \emph{idle} otherwise. Time~$t$ is said to be \emph{busy} if all machines are busy at time~$t$. A job~$j$ is said to be \emph{waiting} at time~$t$ if it has been released but is not currently being processed by any machine.

\subsection{Scheduling Models}

This paper focuses on two primary models:

\begin{itemize}
    \item \textbf{Non-Preemptive:} Once a job begins execution on a machine, it must run to completion without interruption.
    \item \textbf{Kill-and-Restart:} The algorithm may terminate (kill) a job that is currently running. The job returns to the pool of available jobs and can be started again from the beginning at a later time, possibly on a different machine. All work done prior to the termination is lost.
\end{itemize}

For the purpose of analysis, we also refer to the following preemptive models.

\begin{itemize}
    \item \textbf{Preemptive (without Migration):} The algorithm may interrupt a job and later resume it on the same machine without losing any processed work.
    \item \textbf{Preemptive with Migration:} The algorithm may interrupt a job and later resume it on any machine without losing any processed work.
\end{itemize}

\subsection{Objective and Performance Metric}
We define $C_j$ as the time when job~$j$ finishes its last unit of processing and its flow time as $F_j = C_j - r_j$. For the non-preemptive and kill-and-restart models, we write $s_j$ for the start time of job~$j$'s (final successful) execution, and we have $C_j = s_j + p_j$. The objective is to minimize the total flow time:
$$
F(\S) = \sum_{j \in J} F_j.
$$

\paragraph{Competitive Analysis.} The performance of an online algorithm is measured by its competitive ratio. An algorithm $\ALG$ is $\Gamma$-competitive\footnote{The competitive ratio~$\Gamma$ is often a function $\Gamma(n, m)$ of the number of jobs $n$ and machines $m$ in this paper.} if for any job instance $J$:
$$ F(\ALG(J)) \leq \Gamma \cdot F(\OPT(J)), $$
where $\OPT(J)$ is the total flow time of an optimal offline solution. We define $\OPT$ differently for upper and lower bounds:
\begin{itemize}
    \item For upper bounds, $\OPT$ is the optimal offline \emph{preemptive} schedule \emph{without migration}.
    \item For lower bounds, $\OPT$ is the optimal offline \emph{non-preemptive} schedule.
\end{itemize}
For convenience, we use $\OPT$ and $\ALG$ to denote the costs $F(\OPT(J))$ and $F(\ALG(J))$, respectively, when the context is clear.

\section{Online Rank-Based Partitioning Framework}

\label{sec:partition}

This section introduces our core technical contribution: a framework for partitioning jobs into \emph{large} and \emph{small} categories in an online manner. Unlike offline methods that rely on global knowledge, our approach uses a fixed rank-based threshold, $\ell$, whose value will be set by our algorithms based on $n$ and $m$.

In \Cref{subsec:reclassification}, we introduce this framework, centered around the primary challenge of dynamic job re-classification and our solution using \emph{proxy} jobs. We formalize this logic in \Cref{alg:partition} and prove its key properties. In \Cref{subsec:nsjf}, we show why the resulting set of \emph{small} jobs is manageable and can be scheduled efficiently using a simple greedy algorithm.

\subsection{The Challenge of Dynamic Re-Classification}
\label{subsec:reclassification}

Our framework is built on a dynamic partitioning strategy designed to maintain two key invariants at all times:
\begin{itemize}
    \item The number of active jobs classified as \emph{large} is bounded by $\ell$.
    \item The processing time of any job classified as \emph{small} is bounded by $\OPT / \ell$.
\end{itemize}

To achieve this, we maintain a set of \emph{active large jobs}, denoted $\LargeSet$, containing up to $\ell$ jobs with the largest processing times seen so far. The logic of our framework defines a clear life cycle for every job based on its interaction with this set.

\paragraph{Arrival and Classification.}
Upon arrival, a job is evaluated. If it is not large enough to enter the top-$\ell$ group, it is permanently classified as a \emph{small job}. Otherwise, it is designated as an \emph{active large job} and added to the set $\LargeSet$.

\paragraph{Retirement and the Proxy Mechanism.}

The main complexity arises when a job $j \in \LargeSet$ is forced out (or \emph{retired}) at time $r_i$ due to the arrival of a new, larger job $i$. The state of $j$ at this moment determines its new classification:
\begin{itemize}
    \item If job $j$ is still waiting, it becomes \emph{proxied}. A new proxy job $j'$ with the same processing time of $j$ is created immediately, which is then handled along with the small jobs. This is the primary mechanism our analysis relies on.
    \item If job $j$ has already started processing by the online algorithm, it cannot be proxied, as the online algorithm must not revoke historical scheduling decisions. We call this a \emph{committed} large job. Although no longer in the active set $\LargeSet$, it must still be tracked and scheduled as a large job. This is a subtle point in the analysis and can be safely ignored for a high-level understanding, but it is an important case required for the correctness of our online algorithm.
\end{itemize}

\begin{algorithm}[ht]
\caption{Online Rank-Based Partitioning Step}
    \label{alg:partition}
    \SetKwFunction{FPartition}{Partition}
    \SetKwProg{Proc}{Procedure}{:}{end}
    \KwData{
        Active large job set $\LargeSet$, and every job's other current states
    }
    \Proc{\FPartition{$i$, $\ell$}}{
        \uIf{$|\LargeSet| < \ell$}{
            Add $i$ to $\LargeSet$\;
            Mark $i$ as \textit{large}\;
        }
        \uElseIf{$p_i > \min_{j \in \LargeSet} p_j$}{
            Let $j = \arg\min_{j \in \LargeSet} p_j$\tcp*{$j$ will be retired.}
            Remove $j$ from $\LargeSet$ and add $i$ to $\LargeSet$\;
            \uIf{job $j$ is still waiting}{
                Create a new \textit{proxy} job $j'$ with $r_{j'} = r_i$ and $p_{j'} = p_{j}$\;
                Mark $j'$ as \textit{small}\;
                Mark $j$ as \textit{proxied} by $j'$\;    \tcp{The scheduler will actually process $j$ when processing $j'$.}
            }
            \Else{
                Mark $j$ as \textit{committed large}\;
            }
            Mark $i$ as \textit{large}\;
        }
        \Else{
            Mark $i$ as \textit{small}\;
        }
    }
\end{algorithm}

We now present the pseudocode for this partitioning step in \Cref{alg:partition}. For simplicity, the function only explicitly manages the active large job set $\LargeSet$; the other sets (small, proxied, and committed jobs) are implicitly defined by the procedure. This partitioning mechanism guarantees our desired invariants along with a key monotonicity property: the set of all small jobs (including proxy jobs) forms an ``easier'' scheduling instance than the original job set. This is formalized in the following lemma.

\begin{lemma}
    \label{lem:num_partition}
    Under the online rank-based partitioning method, the following properties hold:
    \begin{enumerate}
        \item The number of active large jobs (the set $\LargeSet$) is at most $\ell$ at any time.
        \item The processing time of any job classified as small is upper bounded by $\OPT / \ell$.
        \item Let ${S}$ be the set of all jobs ever classified as small, including all proxy jobs. Let $J$ be the original job set. The instance ${S}$ is no harder to schedule than $J$, meaning $F(\OPT({S})) \leq F(\OPT(J))$.
    \end{enumerate}
\end{lemma}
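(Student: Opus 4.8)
The plan is to prove the three properties in order, with the first two being essentially bookkeeping and the third requiring an exchange/charging argument. For property (1), I would observe directly from \Cref{alg:partition} that $\LargeSet$ only grows when $|\LargeSet| < \ell$, and otherwise a job is swapped in for a job swapped out, so $|\LargeSet|$ never exceeds $\ell$; a one-line induction on the sequence of arrivals suffices. For property (2), the key point is that when a job $i$ is classified as small, it is because $p_i \le \min_{j \in \LargeSet} p_j$ at that moment, so at that time there exist $\ell$ distinct jobs (those in $\LargeSet$) each with processing time $\ge p_i$, all already released. Since $\OPT$ must process each of these $\ell$ jobs for its full processing time and each contributes at least its own processing time to the total flow time (a job released at $r_j$ and of length $p_j$ has flow time $\ge p_j$), we get $\OPT \ge \sum$ over those $\ell$ jobs of their processing times $\ge \ell \cdot p_i$, hence $p_i \le \OPT/\ell$. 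I should be a little careful that this argument is applied at the time of classification and that $\OPT$ here refers to the fixed optimal preemptive-without-migration schedule on the whole instance $J$, which only makes the bound stronger since more jobs are present.

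The substance is property (3). I want to exhibit, from the fixed optimal schedule $\OPT(J)$, a feasible schedule for the instance $S$ of no greater total flow time. The natural idea is: every job in $S$ is either an original small job of $J$ (same $r$, same $p$) or a proxy job $j'$ created for some retired waiting large job $j$, with $p_{j'} = p_j$ and $r_{j'} = r_i \ge r_j$. So I map each original small job to itself in $\OPT(J)$, and each proxy job $j'$ to the original job $j$ in $\OPT(J)$ — reusing $j$'s slot in the optimal schedule. I then need (a) feasibility: the slot $\OPT(J)$ uses for $j$ starts at $s_j(\OPT(J)) \ge r_j$, but I need it to start at $\ge r_{j'} = r_i$; this is where I have to argue that in fact $j$ is not started before $r_i$ in $\OPT(J)$ — but that need not hold, so the cleaner route is a counting/volume argument rather than literal slot-reuse. (b) cost: $F_{j'}(S\text{-schedule}) = C_j(\OPT(J)) - r_{j'} \le C_j(\OPT(J)) - r_j = F_j(\OPT(J))$ since $r_{j'} \ge r_j$, so mapping is cost-non-increasing; and distinct jobs in $S$ map to distinct jobs in $J$ (each original large job is proxied at most once, since once retired it leaves $\LargeSet$ and can only become proxied or committed a single time), so no machine-capacity conflicts are introduced beyond those already resolved in $\OPT(J)$.

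To make (a) rigorous I would phrase property (3) as: $S \subseteq_{\text{multiset}} J$ after replacing some jobs by copies with \emph{later} release times, i.e., there is an injection $\phi: S \to J$ with $p_{\phi(x)} = p_x$ and $r_{\phi(x)} \ge r_x$ for all $x \in S$. Given such a $\phi$, take $\OPT(J)$, restrict attention to the jobs in the image $\phi(S)$, delete all other jobs from the schedule (this only frees machine time and cannot increase anyone's completion time in the remaining sub-schedule — here I use that deleting jobs from a feasible preemptive-without-migration schedule leaves a feasible schedule for the remaining jobs, with each remaining job's completion time unchanged), and then relabel each $\phi(x)$ as $x$; feasibility is preserved because $x$ is released no later than $\phi(x)$, and the total flow time can only decrease since $r_x \le r_{\phi(x)}$. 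This yields a feasible schedule for $S$ with cost $\le \sum_{x \in S} F_{\phi(x)}(\OPT(J)) \le F(\OPT(J))$, giving $F(\OPT(S)) \le F(\OPT(J))$.

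The main obstacle is constructing the injection $\phi$ and verifying it is well-defined over the entire run of \Cref{alg:partition}: I must check that the original small jobs of $J$ and the retired-while-waiting large jobs together account for all of $S$, that each original job of $J$ is used at most once (an original job is either permanently small, or enters $\LargeSet$ and is retired at most once — after which it is proxied or committed but never re-enters — or is never retired and stays large, contributing nothing to $S$), and that committed large jobs produce no proxy job and hence no element of $S$. Once $\phi$ is in hand, the schedule-surgery step is standard and I would not belabor it.
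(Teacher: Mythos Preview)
Your arguments for properties (1) and (2) are correct and essentially match the paper's.

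For property (3) there is a genuine gap. Your injection maps each proxy job $j'$ to the original job $j$ it proxies, which gives $p_{\phi(j')}=p_{j'}$ but $r_{\phi(j')}=r_j\le r_{j'}$. You correctly identify that this breaks feasibility when you try to reuse $j$'s slot in $\OPT(J)$. Your attempted fix then asserts an injection with $r_{\phi(x)}\ge r_x$ and claims ``the total flow time can only decrease since $r_x\le r_{\phi(x)}$'' --- but both assertions are wrong. First, for the map $\phi(j')=j$ you actually have $r_{\phi(x)}\le r_x$, not $\ge$. Second, even if you had $r_x\le r_{\phi(x)}$ with the same completion time $C$, you would get $F_x=C-r_x\ge C-r_{\phi(x)}=F_{\phi(x)}$, so flow time \emph{increases}. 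The two desiderata (feasibility via $r_x\le r_{\phi(x)}$, and cost bound via $r_x\ge r_{\phi(x)}$) pull in opposite directions; with equal processing times, slot-reuse works only when release times are equal.

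The paper's fix is to choose a different injection: map the proxy job $j'$ not to the proxied job $j$, but to the \emph{displacing} job $i$ --- the new arrival that forced $j$ out of $\LargeSet$. Then $r_{\phi(j')}=r_i=r_{j'}$ (equal release times, so feasibility of reusing $i$'s slot is immediate) and $p_{\phi(j')}=p_i>p_j=p_{j'}$ (strictly larger processing time, so processing $j'$ in the first $p_{j'}$ units of $i$'s slot in $\OPT(J)$ finishes no later than $C_i$, giving $F_{j'}\le F_i$). Injectivity holds because each large arrival retires at most one job from $\LargeSet$, and each such $i$ is itself classified large, so it never collides with the identity map on original small jobs. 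Summing over $S$ yields $F(\OPT(S))\le\sum_{x\in S}F_{\phi(x)}(\OPT(J))\le F(\OPT(J))$.
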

\begin{proof}
\begin{itemize}
    \item[(1)] Holds by construction, as the algorithm explicitly maintains $|\LargeSet| \le \ell$.
    \item[(2)] Consider any job $j$ classified as small. By definition of the algorithm, at the time of its arrival, there must have been at least $\ell$ other jobs released that were larger than $j$. The total processing time of these $\ell$ larger jobs alone is a lower bound on the optimal solution's value, thus $p_j < \OPT / \ell$.
    \item[(3)] This property holds due to the proxy mechanism. The set ${S}$ differs from $J$ by replacing some proxied large jobs with proxy jobs. Crucially, each proxy job $j'$ is created at the exact moment $r_{j'}$ that a new, larger job $i \in J$ ($p_i > p_{j'}$) arrives and displaces it. We can therefore charge the existence of each proxy job in ${S}$ to a unique, strictly larger job in $J$ with the same release time. This implies that the instance ${S}$ is a ``weaker'' instance than $J$, and thus its optimal flow time can be no larger.
\end{itemize}
\end{proof}

\subsection{Handling Small Jobs with Non-Preemptive Shortest Job First}
\label{subsec:nsjf}

The partitioning framework effectively isolates a small, bounded number of active large jobs at any time, allowing our main algorithms to focus on managing them. This leaves the much larger set of small jobs, which must be handled efficiently. In this section, we show that a natural greedy algorithm, Non-Preemptive Shortest Job First (\NSJF), is provably effective for this task.

The \NSJF algorithm is simple: whenever a machine is idle and there are waiting jobs, it immediately starts processing the one with the smallest processing time. The following lemma guarantees the performance of \NSJF on a set of jobs whose sizes are bounded. This result is central to our analysis, as our framework guarantees that all jobs in the set ${S}$ have bounded size.

\begin{lemma}
\label{lem:sjfmain}
Consider running \NSJF on a set of jobs whose processing times are all bounded by $\tau$. Compared to an optimal offline preemptive schedule (even with migration), the total flow time of \NSJF is bounded by: 
    $F(\NSJF) \leq \OPT + O(n\tau)$.
\end{lemma}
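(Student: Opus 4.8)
The plan is to compare \NSJF against \SRPT (Shortest Remaining Processing Time with migration), which is an optimal preemptive algorithm for total flow time on a single machine and within a constant of optimal on $m$ machines --- but more conveniently, to compare against the optimal preemptive schedule \OPT directly via a hybrid/charging argument. First I would observe that \NSJF is \emph{work-conserving}: a machine is idle only when no job is waiting. Let $V(t)$ denote the total remaining (unfinished) volume of released jobs at time~$t$ under some reference work-conserving schedule; since all work-conserving schedules process the same total volume at all times when no machine is idle, the \emph{total time during which at least one machine is busy} is the same for \NSJF and for \OPT, and in particular the set of ``all-idle'' intervals coincides. This lets me decompose the timeline into maximal \emph{busy blocks} separated by all-idle gaps, and argue block by block, since no job released in one block is processed in another.

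Fix one busy block $[a,b]$, and let $k$ be the number of jobs released within it. The key step is to bound $F(\NSJF)$ restricted to this block. I would use the standard ``delay'' accounting: for \NSJF, a job $j$'s flow time is $p_j$ plus its waiting time, and its waiting time is at most the total remaining volume of jobs that \NSJF prioritizes over it (shorter or already-started jobs), divided by $m$, plus a loss of at most $\tau$ per ``blocking'' event where a longer job was started before $j$ arrived and must run to completion. The crucial point that makes \NSJF almost as good as preemptive \SRPT is: whenever \NSJF makes a ``wrong'' non-preemptive commitment, it commits to a job of size $\le \tau$, so the extra delay this inflicts on any single later job is at most $\tau$, and across all $k$ jobs in the block this contributes at most $O(k\tau)$ to the block's flow time. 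Summing over all busy blocks, the total extra loss telescopes to $O(n\tau)$. For the remaining ``ideal'' part --- the flow time \NSJF would incur if it could preempt --- I would show it is at most $\OPT$: \SRPT with migration minimizes total flow time among all preemptive schedules, and the ``preemptive relaxation of \NSJF'' (which is exactly \SRPT on each busy block, since shortest-job-first with preemption \emph{is} \SRPT when processing times equal remaining times at release) has flow time at most that of \OPT on the same instance. Assembling: $F(\NSJF) \le F(\SRPT) + O(n\tau) \le \OPT + O(n\tau)$.

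Concretely the steps are: (i) establish work-conservation and the identical busy/idle structure of \NSJF and \OPT; (ii) reduce to a single busy block; (iii) within a block, write each job's \NSJF waiting time as (preemptive-\SRPT waiting time) $+$ (non-preemptive blocking loss), bounding the latter by $\tau$ per job via the size bound; (iv) sum the blocking losses over all jobs to get $O(n\tau)$; (v) invoke optimality of \SRPT (with migration) for the preemptive part to bound it by \OPT; (vi) combine.

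The main obstacle I anticipate is step~(iii): making the ``\NSJF waiting time $\le$ \SRPT waiting time $+\ \tau\cdot(\text{blocking count})$'' precise in the multi-machine setting, where a job can be blocked by up to $m$ simultaneously-running long jobs and where \SRPT-with-migration's schedule does not line up job-for-job with \NSJF's. I would handle this not by a direct per-job coupling but via a \emph{volume} argument: at any time $t$, the total remaining volume under \NSJF exceeds that under \SRPT by at most $m\tau$ (the at-most-$m$ machines running a job \NSJF is ``stuck'' on, each with $\le\tau$ work left that \SRPT would have deprioritized), and then Little's-law-style integration of the remaining-volume / number-of-unfinished-jobs curves converts this uniform $O(m\tau)$ volume gap and the $O(k)$ count of jobs per block into the claimed $O(n\tau)$ additive flow-time loss. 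This volume-comparison lemma is the technical heart; everything else is bookkeeping.
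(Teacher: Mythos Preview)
Your proposal has a genuine gap at the core technical step. The ``volume argument'' you offer as the fix in your final paragraph --- bound the total-remaining-volume gap between \NSJF and \SRPT by $m\tau$, then convert to a flow-time bound via Little's law --- cannot work as stated. On a single machine (a case the lemma must cover), \emph{every} work-conserving schedule has identical total remaining volume at every time $t$, so your volume gap is exactly zero; yet \NSJF and the preemptive optimum can differ in flow time by $\Theta(n\tau)$ (one job of size $\tau$ at time $0$, then $n-1$ jobs of size $\epsilon$ at time $\epsilon$). The entire $\Theta(n\tau)$ difference lives in the \emph{count} of unfinished jobs, not the volume --- one job of size $\tau$ and $\tau/\epsilon$ jobs of size $\epsilon$ have the same volume but very different contributions to $\int(\#\text{unfinished})\,dt$. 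Converting a volume statement to a count statement is precisely where the work is, and your sketch supplies no mechanism for it.

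The paper closes this gap by tracking \emph{size-parameterized} volumes: for every threshold $p$, let $V_{\le p}(t)$ be the volume of size-$\le p$ jobs processed by \OPT by time $t$, and $V'_{\le p}(t)$ the total size of such jobs \emph{started} by \NSJF by time $t$. One shows $V'_{\le p}(t+\tau)\ge V_{\le p}(t)$ for all $p$ and $t$ simultaneously (a single $\tau$-shift absorbs one non-preemptive commitment per machine). Because this holds for \emph{every} $p$, a rank argument lifts it to counts --- the $i$-th smallest job started by \NSJF by $t+\tau$ is no larger than the $i$-th smallest job completed by \OPT by $t$ --- giving $N'(t+2\tau)\ge N(t)$, and integration then yields $F(\NSJF)\le\OPT+2n\tau$ directly against \OPT, with no detour through \SRPT.

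Two further problems. Your step~(i) claim that all-idle intervals coincide for \NSJF and \OPT is false on multiple machines: with $m=2$ and jobs of sizes $3,1,1$ at time $0$, \NSJF runs the two $1$'s first and finishes at time $4$, while the schedule that puts the size-$3$ job alone on one machine finishes at time $3$; add a job at time $3.5$ and the block structures genuinely differ. And your step~(v) is false: \SRPT with migration is \emph{not} optimal for preemptive flow time on parallel machines --- the Leonardi--Raz $\Omega(\log(n/m))$ lower bound for online algorithms applies to \SRPT itself, since \SRPT's decisions use no future information --- so even if the rest of your argument went through, routing it through \SRPT would introduce a $\log(n/m)$ factor you cannot afford.
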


\Cref{lem:sjfmain} demonstrates the power of this partition by bounding the flow time contribution from this large group of small jobs. For example, when we set the rank threshold $\ell = \sqrt{n}$, our framework ensures that all small jobs have a processing time $\tau \le \OPT/\sqrt{n}$. For clarity, we will first prove the lemma as stated. In \Cref{sec:SJF}, we prove a more general version of this result (\Cref{lem:SJFgeneral}) that is robust to interference from large jobs, where machines may be subject to initial ``blocking periods,'' during which they are unavailable for scheduling small jobs. Specifically, given a vector of initial blocking times $\vec{b}$ such that \NSJF cannot use a machine $i$ in its blocking period $[0,b_i)$, we show that \NSJF remains competitive with respect to the optimal flow time on an instance without blocking, up to an extra additive term depending on the total blocking length.
This general version is a fundamental component in the final analysis of our main algorithms.

\paragraph{Proof Strategy.}
The proof of \Cref{lem:sjfmain} follows a three-step structure. First, we show that the total volume of work \emph{started} by \NSJF quickly catches up to the volume of work \emph{processed} by the optimal solution after a small time shift. Second, we lift this volume-based argument to show that the \emph{number of jobs completed} by \NSJF also catches up to \OPT after a slightly larger time shift. Finally, we use this bound on the number of completed jobs to prove the lemma. Note that the general version of \Cref{lem:SJFgeneral} follows the same strategy; however, the analysis must additionally account for how the available processing power varies over time in the presence of blocking periods.

For any processing time threshold $p$ and time $t$, let $J_{\leq p}(t)$ be the set of jobs with $p_j \leq p$ and release time $r_j \leq t$.

\paragraph{Step 1: Volume Matching.}
For each algorithm, we track the volume of work associated with jobs in $J_{\leq p}(t)$. Let $V_{\leq p}(t)$ denote the total volume of work processed on jobs in $J_{\leq p}(t)$ by time $t$ under \OPT, and let $V'_{\leq p}(t)$ denote the total volume of jobs in $J_{\leq p}(t)$ that have been started by time $t$ under \NSJF.

\begin{lemma}[Volume Matching]
\label{lem:sjfvolume_warm}
For every $t \geq 0$ and every threshold $p$, we have
$$V'_{\leq p}(t + \tau) \geq V_{\leq p}(t).$$
\end{lemma}

\begin{proof}
Fix an arbitrary $p$, and let $t_0$ be the last time some machine is idle under $\NSJF$ before $t + \tau$. Let $j$ be the job with $p_j > p$ that is started latest by \NSJF within the interval $(t_0, t+\tau)$. If no such job exists, we let $s_j = t_0$ for a fictitious job $j$ of infinite size.

Since \NSJF starts job $j$ with $p_j > p$ at time $s_j$, it must have already started all jobs with size at most $p$ released up to $s_j$ (due to its SJF policy). Thus, we have:
$$
    V'_{\leq p}(s_j) = \sum_{h \in J_{\leq p}(s_j)} p_h \geq V_{\leq p}(s_j).
$$

We now analyze two cases. If $s_j\geq t$, we immediately have:
$$
    V'_{\leq p}(t + \tau) \geq V'_{\leq p}(s_j) \geq V_{\leq p}(s_j) \geq V_{\leq p}(t).
$$
Otherwise, if $s_j< t$, then since $s_j$ is the latest start of a job with $p_j > p$ after the last idle time $t_0$, all processing in $[s_j + \tau, t + \tau)$ is on jobs of size at most $p$, and no machine is idle in this interval. Therefore:
$$
    V'_{\leq p}(t + \tau) \geq V_{\leq p}(s_j) + m(t - s_j)  \geq V_{\leq p}(t).
$$
\end{proof}

\paragraph{Step 2: Completion Matching.}
Let $N_{\leq p}(t)$ and $N'_{\leq p}(t)$ denote the number of jobs in $J_{\leq p}(t)$ completed by time $t$ under \OPT and \NSJF, respectively. When the subscript ${\leq p}$ is omitted, we count all completed jobs regardless of size.

\begin{lemma}[Completion Matching]
\label{lem:sjfnumber_warm}
    For every $t \geq 0$ and every threshold $p$, we have
    $$N'_{\leq p}(t + 2\tau) \geq N_{\leq p}(t).$$
\end{lemma}
\begin{proof}
For each rank $i$, let $p_i$ be the size of the $i$-th smallest job completed under \OPT by time $t$, and let $p'_i$ be the size of the $i$-th smallest job started under \NSJF by time $t + \tau$. Let $k$ and $k'$ denote the total number of such jobs, respectively. We claim that $k' \ge k$ and $p'_i \le p_i$ for all $i \le k$.

Assume for contradiction that $i \le k$ is the smallest rank such that either $i > k'$ or $p'_i > p_i$. For \OPT, at least $i$ jobs of size at most $p_i$ are completed by time $t$, so:
$$
    V_{\leq p_i}(t) \geq \sum_{j=1}^{i} p_j.
$$
For \NSJF, all started jobs of size $\leq p_i$ are among the first $i - 1$ jobs (either because only $i - 1$ jobs have started, or because the $i$-th smallest is strictly larger than $p_i$). Therefore:
$$
    V'_{\leq p_i}(t + \tau) \leq \sum_{j=1}^{i-1} p'_j \leq \sum_{j=1}^{i-1} p_j,
$$
where the second inequality uses $p'_j \leq p_j$ for all $j < i$. This gives $V'_{\leq p_i}(t + \tau) < V_{\leq p_i}(t)$, contradicting \Cref{lem:sjfvolume_warm}.

Therefore $k' \ge k$ and $p'_i \leq p_i$ for all $i \le k$, which implies that for every threshold $p$, the number of jobs of size $\leq p$ started by \NSJF by time $t + \tau$ is at least the number completed by \OPT by time $t$. Since all jobs started by time $t + \tau$ are completed by time $t + 2\tau$, the lemma follows.
\end{proof}

\paragraph{Step 3: Bounding the Flow Time.}
\begin{proof}[Proof of \cref{lem:sjfmain}]
Let $u$ denote the maximum completion time under \NSJF. The total flow times can be expressed as:
$$
    F(\OPT) = \int_0^\infty \left( |J(t)| - N(t) \right) \, dt, \quad
    F(\NSJF) = \int_0^u \left( |J(t)| - N'(t) \right) \, dt,
$$
where $|J(t)|$ is the number of jobs released by time $t$.
\begin{align*}
    F(\NSJF)
    &= \int_0^u \left( |J(t)| - N'(t) \right) \, dt \\
    &\leq \int_0^{u - 2\tau} \left( |J(t)| - N'(t + 2\tau) \right) \, dt + \int_{u - 2\tau}^u |J(t)| \, dt \\
    &\leq \int_0^{u - 2\tau} \left( |J(t)| - N(t) \right) \, dt + 2n\tau \\
    &\leq F(\OPT) + 2n\tau.
\end{align*}
\end{proof}

\section{Randomized Non-Preemptive Algorithm}
\label{sec:single-random}

This section presents our $O(\sqrt{n/m})$-competitive randomized algorithm. We begin by detailing the core ideas in the single-machine setting before sketching the straightforward extension to multiple machines.

\subsection{Single-Machine Case}

Our algorithm is built on the online partitioning framework (\Cref{sec:partition}) with the large-job threshold set to $\ell = \lfloor \sqrt{n} \rfloor$. This ensures that all small jobs have a processing time of at most $\OPT/\sqrt{n}$. The high-level strategy is to combine two principles:
\begin{enumerate}
    \item \textbf{Prioritize Small Jobs:} A simple greedy strategy, \NSJF, is used to schedule all small jobs, ensuring their contribution to the total flow time is bounded by $O(\sqrt{n}) \cdot \OPT$ (\Cref{lem:sjfmain}).
    \item \textbf{Randomly Delay Large Jobs:} Each large job is carefully inserted into the schedule created by the small jobs. To minimize the disruption, each large job $j$ is randomly assigned a ``patience'' level, delaying its start until a certain amount of machine idle time has passed.
\end{enumerate}

Instead of presenting the online algorithm directly, we describe a \emph{dynamic algorithm} that reconstructs a schedule upon each job's arrival while keeping the decisions before unchanged. This dynamic view is a powerful tool for analysis, as it allows us to cleanly decompose the total flow time.

\begin{algorithm}[ht]
    \caption{Dynamic (Online Stable) Non-Preemptive Randomized Algorithm for Single-Machine}
    \label{alg:dynamic}
    \SetKwFunction{FPartition}{Partition}
    \SetKwInOut{Input}{Include} 
    \Input{\Cref{alg:partition}}
    \KwData{Every job's current states as determined by \Cref{alg:partition}.}
    \SetKwBlock{OnJobRelease}{\textbf{On job $i$ release:}}{}
    \SetKwBlock{OnEvent}{\textbf{On any job completion or after any job release:}}{}
    \OnJobRelease
    {
        Call \FPartition{$i, \ell$} with $\ell = \lfloor \sqrt{n} \rfloor$ to classify job $i$ and update related jobs\;
        \If{$i$ is classified as large} {
            Sample $w_i \sim \text{Unif}\{1, \dotsc, \lfloor \sqrt{n} \rfloor \}$
        }
        Run $\NSJF$ on all small jobs to get schedule $\S_1$ \tcp*{Including proxy jobs.}
        $\S_2 \gets \S_1$ \;
        \For{each unproxied (active or committed) large job $j$ in  order of release time $r_j$ } {
            $t \gets $ the first time in $\S_2$ where the cumulative idle time since $r_j$ is at least $w_j p_j$\;
            Insert $j$ into $\S_2$ to start at $t$ \tcp*{Overlapping jobs are shifted right.}
        }
        Schedule jobs according to $\S_2$\;
    }
\end{algorithm}

The key requirement, \emph{online stability}, is that the schedule reconstructed at time~$t$ must remain consistent with the scheduling decisions made prior to~$t$. The proof of this property is given below.

\begin{lemma}
\label{lem:single-online-simulatable}
\Cref{alg:dynamic} is online stable; that is, for every time $t$, its schedule before time $t$ depends solely on the jobs released before $t$.
\end{lemma}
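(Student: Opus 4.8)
The plan is to show that all quantities consulted by \Cref{alg:dynamic} when it reconstructs the schedule on the arrival of job~$i$ (at time~$r_i$) are functions only of jobs with release time at most~$r_i$, and then to argue that the portion of the reconstructed schedule $\S_2$ lying in the time interval $[0, r_i)$ coincides with the portion that was in force immediately before~$r_i$. Since the algorithm only reconstructs on job releases, combined with the obvious fact that nothing changes between consecutive releases except the passage of committed jobs already fixed in $\S_2$, this gives online stability.

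First I would check that the classification step is itself online: \Cref{alg:partition} only inspects $\LargeSet$ and the processing times/states of previously seen jobs, and each random patience $w_j$ is sampled once at $r_j$ and never resampled, so the entire ``state'' feeding the reconstruction at time $r_i$ depends only on jobs released in $[0, r_i]$. Next I would handle $\S_1$: it is the output of \NSJF on the set of small jobs (including proxy jobs) released so far. Because proxy jobs are created at the release time of a strictly larger job that has arrived by $r_i$, the set of small-and-proxy jobs with release time $< r_i$ is determined, and \NSJF is a causal rule (a machine's decision at time $t$ depends only on jobs released by $t$); hence the restriction of $\S_1$ to $[0, r_i)$ is unaffected by any job arriving at or after $r_i$. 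Then I would turn to the insertion loop that builds $\S_2$ from $\S_1$: large jobs are inserted in order of release time, and for a large job $j$ with $r_j < r_i$, the insertion point $t$ is ``the first time in $\S_2$ where the cumulative idle time since $r_j$ is at least $w_j p_j$''. I would argue by induction on the insertion order that this $t$, and the resulting placement of $j$, lies in $[0, \text{something} \le r_i)$ only insofar as it depends on the part of $\S_2$ already built from jobs released before $r_i$ — crucially, because idle time is accumulated forward from $r_j$ and each earlier large job was itself released before $r_i$, the prefix of $\S_2$ up to the start time of $j$ is a function of jobs released before $r_i$ alone. Jobs released at or after $r_i$ are inserted later in the loop and (being placed at times $\ge r_i$, or at least never displacing already-fixed earlier work since insertion only consumes idle slots) cannot alter the prefix of $\S_2$ before $r_i$.

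The main obstacle I anticipate is making the last claim fully rigorous: one must rule out the possibility that inserting a \emph{later}-released large job $j$ (with $r_j \ge r_i$, or even $r_j < r_i$ but with the newly arrived job $i$ changing $\S_1$) shifts work that was already committed before $r_i$. This requires two observations: (i) inserting $j$ at the first idle slot of sufficient cumulative length only fills idle time and pushes strictly later jobs rightward, never moving anything that started before the insertion point; and (ii) the arrival of $i$ at time $r_i$ can only add small/proxy jobs with release time $\ge r_i$, so \NSJF's schedule $\S_1$ on $[0, r_i)$ is unchanged, and hence the part of the insertion loop concerning large jobs released before $r_i$ sees an unchanged $\S_1$-prefix and produces an unchanged $\S_2$-prefix. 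I would formalize (i) as a short monotonicity/causality lemma about the insertion operation, and (ii) as a direct consequence of the causality of \NSJF together with \Cref{lem:num_partition}'s description of when proxy jobs are created. Assembling these, every scheduling decision of \Cref{alg:dynamic} active strictly before time $t$ is determined by jobs released strictly before $t$, which is exactly online stability.
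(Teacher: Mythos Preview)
Your high-level decomposition matches the paper's: first show the $\S_1$-prefix on $[0, r_i)$ is unchanged (because \NSJF is causal and any new small/proxy jobs have release time $\ge r_i$), then show the insertion loop building $\S_2$ preserves that prefix. However, you miss the one case the paper's proof is actually built around: the arrival of $i$ can \emph{remove} a job from the insertion loop, not just add one. Concretely, if $i$ is large and displaces a still-waiting large job $j \in \LargeSet$ with $r_j < r_i$, then $j$ becomes proxied; in the previous reconstruction $j$ was an unproxied large job and was iterated over in the \textsf{for}-loop, while in the new reconstruction $j$ is absent from that loop. Your observation~(ii) --- that ``the part of the insertion loop concerning large jobs released before $r_i$ sees an unchanged $\S_1$-prefix and produces an unchanged $\S_2$-prefix'' --- tacitly assumes the set of such large jobs is the same in both reconstructions, which it need not be.

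The fix is short but is the crux of the lemma: since $j$ is proxied only if it is still \emph{waiting} at time $r_i$, its scheduled start time $s'_j$ in the \emph{previous} schedule $\S'_2$ must satisfy $s'_j \ge r_i$. Hence the old loop's insertion of $j$ affected $\S'_2$ only at times $\ge r_i$, and deleting $j$ from the loop leaves the prefix on $[0, r_i)$ intact; every later large job in release order then sees an identical idle-time profile on $[0, r_i)$ and inserts identically there. Your monotonicity observation~(i) correctly handles additions to the loop (namely $i$ itself, whose insertion point is trivially $\ge r_i$), but you need this complementary argument to handle the deletion.
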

\begin{proof}
Consider the arrival of a new job $i$ at time $r_i$, which triggers a reconstruction of the schedule. Let $\S'_1$ and $\S'_2$ be the schedules produced in the previous round (before $i$'s arrival), and let $\S_1$ and $\S_2$ be the newly computed schedules. We must show that for any time $t < r_i$, the schedules are identical.

First, consider the base schedule $\S_1$. The set of small jobs is only augmented by jobs with release time $r_i$ (either $i$ itself, if it is small, or a new proxy job). Since the set of small and proxy jobs released before $r_i$ is unchanged, and \NSJF is an online algorithm, $\S_1$ is identical to $\S'_1$ on the interval $[0, r_i)$.

Next, consider the final schedule $\S_2$. The set of unproxied large jobs changes only if job $i$ is large and retires a waiting large job $j$. In this case, $j$ is part of the \textsf{for}-loop that constructs $\S'_2$, but it is absent from the loop that constructs $\S_2$. However, because $j$ was retired while waiting at time $r_i$, its scheduled start time $s'_j$ in $\S'_2$ must be greater than or equal to $r_i$. The removal of $j$ from the insertion process for $\S_2$ can only affect the schedule at or after time $s'_j$. Any jobs scheduled in $\S'_2$ before time $s'_j$ will be scheduled at the same time in $\S_2$. Since $s'_j \ge r_i$, the schedule $\S_2$ is identical to $\S'_2$ on the interval $[0, r_i)$.

In all other cases, the set of unproxied large jobs remains the same, so the construction is identical up to time $r_i$. Therefore, the schedule is online stable.

\end{proof}

\begin{theorem}
    \label{thm:random-no-restart-alg-single}
    \Cref{alg:dynamic} is an online polynomial-time randomized non-preemptive algorithm for total flow time minimization on a single machine that is $O(\sqrt{n})$-competitive against the preemptive offline solution.
\end{theorem}
\begin{proof}
Recall that $F(\S)$ denotes the total flow time of a schedule $\S$. The final schedule produced by \Cref{alg:dynamic} has total flow time $F(\S_2)$. Our algorithm constructs this final schedule incrementally by transitioning through intermediate schedules $\S_1 \rightarrow \S_2$. We analyze the additional flow time incurred during this transition. 
Let $J$ denote the real job set, which includes only actual jobs and excludes proxy jobs. Let $L$ and $S$ denote the sets of large jobs (including all active and retired large jobs) and small jobs (including proxy jobs), respectively. Note that a job belongs to $L$ or $S$ according to its classification, large or small, at its arrival time.

First, $\S_1$ is competitive with the optimal schedule for the job set $S$. By \Cref{lem:num_partition}, all jobs in $S$ have size at most $\OPT / \lfloor \sqrt{n} \rfloor$. Therefore, by \Cref{lem:sjfmain}, we have:
$$
F(\S_1) = O(\sqrt{n}) \cdot F(\OPT(S)).
$$
Moreover, since $S$ is weaker than $J$ by \Cref{lem:num_partition}, $F(\OPT(S)) \leq \OPT$, and thus:
\begin{equation}
    \label{eqn:random_small}
    F(\S_1) = O(\sqrt{n}) \cdot \OPT.
\end{equation}

It remains to analyze the additional flow time incurred during the transition from $\S_1$ to $\S_2$ due to the insertion of large jobs.

Consider an unproxied large job $j$ inserted at time $t$. This insertion can delay previously scheduled jobs (large or small) by at most $p_j$ time units. Let $\delta_j$ denote the total increase in flow time due to such delays, which we refer to as the \emph{external delay} of job $j$. If a large job is proxied, it incurs no external delay, and thus we define $\delta_j = 0$. In \Cref{lem:delay}, we prove that only $O(\sqrt{n})$ jobs are delayed in expectation by any unproxied large job, giving:
\begin{equation}
    \label{eqn:random_exdely}
    \E[\delta_j] = O(\sqrt{n}) \cdot p_j.
\end{equation}

We must also account for the \emph{self-delay} of each large job. If $j$ is proxied by a virtual job $j'$ released at $r_{j'}$, then the actual flow time of $j$ is $C_{j'} - r_j$. Since $C_{j'} - r_{j'}$ is already included in $F(\S_1)$ via the flow time of $j'$, the only additional term to consider is $r_{j'} - r_j$.

On the other hand, if $j$ is not proxied and is directly inserted into the schedule, we only need to consider its current flow time upon insertion. Any subsequent increases to this flow time are charged to the external delays ($\delta_{j'}$) caused by later jobs.

Let $\hat{F}_j$ denote the self-delay of $j$:
$$
\hat{F}_j =
\begin{cases}
r_{j'} - r_j, & \text{if } j \text{ is proxied by } j' \\
\hat{C}_j - r_j,   & \text{if } j \text{ is unproxied}
\end{cases},
$$
where $\hat{C}_j$ is the completion time of $j$ in $\S_2$ immediately after its insertion (in the last update round of our algorithm). In \Cref{lem:self-blow-up}, we show:
\begin{equation}
    \label{eqn:random_self_large}
    \sum_{j \in L} \hat{F}_j = O(\sqrt{n}) \cdot \OPT.
\end{equation}

In conclusion, assuming the correctness of \Cref{lem:delay} (which proves \eqref{eqn:random_exdely}), \Cref{lem:self-blow-up} (which proves \eqref{eqn:random_self_large}), and using \eqref{eqn:random_small}, we obtain:
$$
F(\S_2) \leq F(\S_1) + \sum_{j \in L} \delta_j + \sum_{j \in L} \hat{F}_j \leq O(\sqrt{n}) \cdot \OPT + O(\sqrt{n}) \cdot \sum_{j \in L} p_j.
$$
Since $\sum_{j \in L} p_j \leq \OPT$, it follows that:
$$
F(\S_2) = O(\sqrt{n}) \cdot \OPT.
$$
Thus, we conclude that our algorithm is $O(\sqrt{n})$-competitive.

\end{proof}

\begin{lemma}[External Delay]
    \label{lem:delay}
    For each large job $j$, the additional flow time arising from its delay of other previous jobs is bounded by $O(\sqrt{n})\cdot p_j$ in expectation, i.e., $\E(\delta_j) = O(\sqrt{n})\cdot p_j$.
\end{lemma}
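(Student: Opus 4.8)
The plan is to reduce the statement to a counting claim and then use the uniform randomness of the patience level $w_j$ to average over $\lfloor\sqrt n\rfloor$ disjoint ``slots''. Since inserting the single job $j$ (of length $p_j$) into the single-machine schedule $\S_2$ can raise the flow time of each other job by at most $p_j$, it suffices to bound the expected number of jobs that this insertion delays by $O(\sqrt n)$. To set this up I would fix the instance and condition on all random choices made by the algorithm for the large jobs released before $r_j$; this fixes the partial schedule $\S_2^{(j)}$ (namely, \NSJF run on all small and proxy jobs together with all unproxied large jobs released before $r_j$) into which $j$ is inserted during the final for-loop of \Cref{alg:dynamic} --- and if $j$ is ever proxied then $\delta_j=0$ and there is nothing to show. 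The only remaining randomness is then $w_j$, uniform on $\{1,\dots,\lfloor\sqrt n\rfloor\}$ and independent of $\S_2^{(j)}$.

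The heart of the argument is a structural description of which jobs get delayed. For $w\ge 1$ let $t_w$ be the first time at which the cumulative idle time of $\S_2^{(j)}$ since $r_j$ reaches $w\,p_j$; the algorithm starts $j$ at $t_{w_j}$. Overlaying a length-$p_j$ block of work at $t_{w_j}$ pushes every later-scheduled job to the right, but each unit of idle time of $\S_2^{(j)}$ appearing after $t_{w_j}$ cancels one unit of this push, so a job is actually delayed exactly when it is scheduled in the half-open interval $[t_{w_j},t_{w_j+1})$ (the idle tail at the end of the schedule guarantees $t_{w_j+1}$ exists, so this is well defined even when $j$ lands near the end). Writing $N_w$ for the number of jobs of $\S_2^{(j)}$ scheduled in $[t_w,t_{w+1})$, this gives $\delta_j\le p_j\cdot N_{w_j}$.

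Finally I would average. The times $t_1,t_2,\dots,t_{\lfloor\sqrt n\rfloor+1}$ are strictly increasing --- between consecutive levels there must be $p_j>0$ units of idle time --- so the intervals $[t_w,t_{w+1})$ for $w=1,\dots,\lfloor\sqrt n\rfloor$ are pairwise disjoint. Each job of $\S_2^{(j)}$ lies in at most one of them, and $\S_2^{(j)}$ contains only $O(n)$ jobs in total (at most $n$ original jobs and at most $n$ proxy jobs), so $\sum_{w=1}^{\lfloor\sqrt n\rfloor} N_w\le O(n)$. Since $w_j$ is uniform on a set of size $\lfloor\sqrt n\rfloor$ and independent of $(N_1,\dots,N_{\lfloor\sqrt n\rfloor})$,
\[
\E[\delta_j]\;\le\;p_j\cdot\E[N_{w_j}]\;=\;p_j\cdot\frac{1}{\lfloor\sqrt n\rfloor}\sum_{w=1}^{\lfloor\sqrt n\rfloor}N_w\;\le\;p_j\cdot\frac{O(n)}{\lfloor\sqrt n\rfloor}\;=\;O(\sqrt n)\cdot p_j .
\]

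The step I expect to be the main obstacle is making the structural description fully rigorous: tracking how the length-$p_j$ push propagates through alternating busy and idle segments of $\S_2^{(j)}$ (and behaves at the end of the schedule), so as to pin down the delayed set as exactly $[t_{w_j},t_{w_j+1})$. A secondary but necessary point is justifying the conditioning --- verifying that $\S_2^{(j)}$, and hence the landmarks $t_w$, can be treated as fixed while averaging over $w_j$; since $j$ is inserted last among the large jobs released up to $r_j$ in the for-loop, this amounts to checking that its insertion does not retroactively alter the placement or the proxied/committed classification of those earlier large jobs.
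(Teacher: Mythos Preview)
Your proposal is correct and follows essentially the same argument as the paper: both observe that the sets of jobs delayed for different values of $w_j$ are pairwise disjoint (because consecutive insertion points are separated by exactly $p_j$ units of idle time in the base schedule), and then average over the $\lfloor\sqrt n\rfloor$ uniform choices to get an expected count of at most $n/\lfloor\sqrt n\rfloor=O(\sqrt n)$ delayed jobs. Your version is more explicit---giving the structural characterization of the delayed set as the interval $[t_{w_j},t_{w_j+1})$---whereas the paper simply asserts disjointness of the $D_k$; the conditioning concern you flag at the end is likewise glossed over in the paper's proof.
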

\begin{proof}
    Let $D_k$ be the set of jobs delayed by large job $j$ conditioned on the randomness $w_j = k$. To show that $\E[|D_{w_j}|] = O(\sqrt{n})$, the key claim is that every job $i~(i \neq j)$ appears in at most one of $D_1, D_2, \dots, D_{\lfloor \sqrt{n} \rfloor}$. This is because when inserting job $j$, there is exactly $p_j$ idle time between any two adjacent insertion locations.
    This prevents a single job from being delayed by more than one possible insertion position of $j$. Therefore, we have:
$$
\E[|D_{w_j}|] = \sum_{k=1}^{\lfloor \sqrt{n} \rfloor} |D_k| \cdot \Pr[w_j = k] \leq n \cdot \frac{1}{\lfloor \sqrt{n} \rfloor} = O(\sqrt{n}).
$$

Thus, the expected number of jobs delayed by $j$ is $O(\sqrt{n})$, and since each of them is delayed by at most $p_j$, this implies the lemma.
\end{proof}

\begin{lemma}[Self-Delay]
    \label{lem:self-blow-up}
    For all large jobs, the total self-delay is bounded by $O(\sqrt{n})\cdot \OPT$, i.e., $\sum_{j \in L} \hat{F}_j = O(\sqrt{n}) \cdot \OPT$. 
\end{lemma}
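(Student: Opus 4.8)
The plan is to bound $\sum_{j \in L} \hat{F}_j$ separately for proxied and unproxied large jobs, since $\hat{F}_j$ has two different meanings in the two cases. For a proxied job $j$, $\hat{F}_j = r_{j'} - r_j$ is the length of the ``large phase'' of $j$: the time it spent waiting in $\LargeSet$ before being displaced. For an unproxied job $j$, $\hat{F}_j = C'_j - r_j$ is its flow time at the moment of insertion, which decomposes into the time it spent in $\LargeSet$ plus the time between its displacement/commitment and its insertion point in $\S_2$. First I would observe that at any fixed time $t$, the set $\LargeSet$ contains at most $\ell = \lfloor \sqrt n \rfloor$ jobs (Lemma~\ref{lem:num_partition}(1)), so the total ``large-phase waiting time'' summed over all large jobs is at most $\sqrt n$ times the length of the time horizon during which some large job is waiting --- and that horizon is controlled by $\OPT$, since $\OPT$ itself must pay at least one unit of flow time for each unit of time a released-but-uncompleted job exists. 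More carefully, $\sum_{j\in L}(\text{time }j\text{ spends in }\LargeSet) \le \int_0^\infty |\LargeSet(t)|\,dt \le \ell \cdot |\{t : \exists \text{ released uncompleted job at } t\}| \le \sqrt n \cdot \OPT$, giving the bound for the proxied case and the first part of the unproxied case.

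Next I would handle the remaining piece of $\hat{F}_j$ for unproxied large jobs: the gap between when $j$ leaves $\LargeSet$ (or is released, if it never leaves) and when it is actually inserted into $\S_2$. By construction of \Cref{alg:dynamic}, $j$ is inserted at the first time where the cumulative idle time of $\S_2$ since $r_j$ reaches $w_j p_j \le \sqrt n \cdot p_j$. The key point is that this waiting time is bounded by $\sqrt n \cdot p_j$ plus the ``busy time'' accumulated in that window --- but during busy time, every machine is working, so this busy work can be charged against actual processing volume, which is at most $\OPT$ in aggregate (more precisely, against the flow time of the jobs being processed, already accounted for in $F(\S_1)$ and the external-delay terms). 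So the self-delay contribution from this gap is at most $O(\sqrt n)\cdot p_j$ per job, and summing, $O(\sqrt n)\sum_{j\in L} p_j \le O(\sqrt n)\cdot \OPT$ since the total processing volume of large jobs is a lower bound on $\OPT$.

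The main obstacle I anticipate is making the ``busy time charges against $\OPT$'' argument rigorous without double-counting: the same interval of busy machine-time could in principle be blamed for delaying many different large jobs' insertions, and it is also being used to pay for $F(\S_1)$ and the $\delta_j$ terms. The cleanest fix is probably to argue at the level of the idle-time bookkeeping built into the algorithm: because \Cref{alg:dynamic} requires a \emph{fresh} $w_j p_j$ units of idle time for each large job $j$ (measured since $r_j$, and insertions of earlier large jobs consume idle time), and because the large jobs are processed in release order, the intervals of idle time ``assigned'' to distinct large jobs are essentially disjoint --- so the total idle time consumed across all large-job insertions is $\sum_{j \in L} w_j p_j = O(\sqrt n)\sum_j p_j$, while the total busy time preceding any insertion is bounded by the makespan of $\S_1$, which is itself $O(\OPT + \sqrt n \sum p_j)$-ish via \Cref{lem:sjfmain}. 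I would formalize this by writing, for each unproxied $j$, $\hat F_j \le (\text{large-phase time of }j) + w_j p_j + (\text{busy time in the idle-collection window of }j)$, summing, and bounding each of the three sums by $O(\sqrt n)\cdot \OPT$ using respectively Lemma~\ref{lem:num_partition}(1), $\sum_j p_j \le \OPT$, and a disjointness/makespan argument for the last.
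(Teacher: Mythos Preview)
Your decomposition into ``time spent in $\LargeSet$'' plus ``gap to insertion'' has a genuine gap. The chain
\[
\int_0^\infty |\LargeSet(t)|\,dt \;\le\; \ell \cdot \bigl|\{t: \exists\ \text{released uncompleted job}\}\bigr| \;\le\; \sqrt n \cdot \OPT
\]
fails at the middle inequality: a job remains in $\LargeSet$ until it is displaced by a larger arrival, not until anyone completes it, so $\LargeSet(t)$ can stay nonempty long after $\OPT$ has finished everything. Take $n=2$ (so $\ell=1$), a size-$1$ job at time $0$ and a size-$10$ job at time $100$: then $|\LargeSet(t)|=1$ for all $t\ge 0$ while $\OPT=11$, so already on $[1,100)$ the integral accumulates far more than $\sqrt n\cdot\OPT$. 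In that same example the first job becomes \emph{committed}; its ``large-phase time'' under your convention is $100$, and its ``gap from leaving $\LargeSet$ to insertion'' is negative, which signals this is not the natural split.

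The paper instead splits each self-delay interval into idle time and busy time of the final schedule $\S_2$. The idle contribution is at most $w_j p_j \le \sqrt n\,p_j$ for \emph{every} large $j$: for unproxied $j$ this is exactly how the insertion point is defined, and for proxied $j$ it holds because $j$ is still waiting at $r_{j'}$, so by online stability the idle time accumulated in $[r_j, r_{j'})$ has not yet reached $w_j p_j$. Summing gives $\sqrt n\sum_{j\in L}p_j\le\sqrt n\cdot\OPT$. For the busy contribution, the observation you already have --- at any moment at most $O(\sqrt n)$ large jobs are in their self-delay interval, since each such job is unproxied --- is precisely what is needed, but it must be paired with the bound ``total \emph{busy} time in $\S_2$ is at most $\sum_j p_j\le\OPT$'' rather than with a bound on the whole time horizon. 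Summing the busy time over large jobs and swapping the order of summation then gives $\sqrt n\cdot\OPT$ directly; no disjointness-of-idle-intervals or makespan argument is required. In short, you applied the $|\LargeSet|\le\sqrt n$ bound to the full horizon, where it fails, instead of to the busy time only, where it works.
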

\begin{proof}
For a fixed large job $j$, the self-delay is either $r_{j'} - r_j$ (if proxied by $j'$) or $\hat{C}_j - r_j$ (if unproxied). We decompose the self-delay into three sources: collecting idle time, passing through busy time, and the job's own processing time $p_j$.

First, consider the idle time collected. Since $w_j \leq \sqrt{n}$, the total idle time accumulated for each large job $j$ is at most $\sqrt{n} \cdot p_j$. Summing over all large jobs:
$$
\sum_{j \in L} \text{(idle contribution)} \leq \sqrt{n} \cdot \sum_{j \in L} p_j \leq \sqrt{n} \cdot \OPT.
$$

Next, we analyze the busy time passed through. For each large job $j$, this occurs during the interval $[r_j, r_{j'})$ (if proxied) or $[r_j, \hat{s}_j)$ (if unproxied, where $\hat{s}_j$ denotes the start time of $j$ immediately after its insertion). In both cases, the large job remains unproxied throughout the interval. By \Cref{lem:num_partition}, the number of active unproxied large jobs at any time $t$ is at most $\sqrt{n}$, and there is at most one committed unproxied large job at any time $t$. Since the total length of busy intervals is at most $\OPT$:
$$
\sum_{j \in L} \text{(busy-time contribution)} \leq (\sqrt{n}+1) \cdot \OPT.
$$

Finally, the processing time contribution is $\sum_{j \in L} p_j \leq \OPT$. Combining all three:
$$
\sum_{j \in L} \hat{F}_j = O(\sqrt{n}) \cdot \OPT.
$$
\end{proof}

\subsection{Generalization to Multi-Machine Case}

The analysis for the multi-machine setting directly parallels the single-machine case. The core proof structure --- bounding the flow time of small jobs, external delay, and self-delay --- remains identical.  The generalization simply requires adjusting the key parameters to account for the presence of $m$ machines. Specifically, the threshold for large jobs becomes $\sqrt{nm}$, and we sample an additional random variable, $m_j$, for each large job $j$ to denote its assigned machine. These adjustments alter each component of the analysis as follows:

\begin{itemize}
    \item \textbf{Small Job Flow Time:} The large-job threshold $\ell$ is now set to $\sqrt{nm}$, which implies that the size of each small job is upper bounded by $\OPT/\sqrt{nm}$. Therefore, the total flow time incurred by small jobs is bounded by $O(\sqrt{n/m}) \cdot \OPT$.
    
    \item \textbf{External Delay:} The external delay for any large job $j$ is bounded by $O(\sqrt{n/m}) \cdot p_j$. In the single-machine setting, each job has $\sqrt{n}$ insertion choices, resulting in an expected delay affecting $\sqrt{n}$ jobs. In the multi-machine setting, we allocate $\sqrt{n/m}$ choices to $w_j$ and $m$ choices to $m_j$, resulting in an expected delay affecting only $O(\sqrt{n/m})$ jobs.
    
    \item \textbf{Self-Delay (Idle Time):} The self-delay incurred by collecting idle time is at most $\sqrt{n/m} \cdot p_j$, due to the smaller upper bound on $w_j$.
    
    \item \textbf{Self-Delay (Busy Time):} The self-delay incurred by passing through busy periods is bounded by $O(\sqrt{n/m}) \cdot \OPT$. This is because the number of waiting large jobs increases to $\sqrt{nm}$, while the total length of busy time across all machines decreases to $\OPT/m$.
\end{itemize}

For completeness, we present the full algorithm and its analysis in \Cref{sec:multi-random}.

\section{Deterministic Non-Preemptive Algorithm}
\label{sec:multi-det-no-restart}

This section presents a deterministic non-preemptive algorithm with a competitive ratio of $O(n/m^2 + \sqrt{n/m} \log m)$. This nearly closes the gap between the trivial $O(n/m)$ upper bound and the $\Omega(n/m^2)$ lower bound of~\cite{DBLP:journals/dam/EpsteinS06}, particularly when $m$ is large relative to $n$.

The algorithm builds on the online rank-based partitioning framework (\Cref{alg:partition}) with $\ell = \lfloor \sqrt{nm} \rfloor$.
Based on this classification, machines are assigned different roles to handle large and small jobs separately, as described below.

The first machine, called the \LargeOnly machine, is dedicated exclusively to large jobs and schedules the smallest waiting large job if available. For the remaining \All machines $2 \dots m$, a dynamic threshold function
$$
\gamma(k) = \min \left\{\left\lfloor \frac{k}{\sqrt{n/m}} \right\rfloor, \lceil m / 2 \rceil \right\}.
$$
controls how many of them may process large jobs, where $k$ is the number of active large jobs currently waiting. When one of these \All machines becomes idle, it schedules the smallest waiting job regardless of its size if the number of \All machines currently processing large jobs is less than $\gamma(k)$; otherwise, it schedules small jobs only using \NSJF. Since there are at most $\lceil m / 2 \rceil$ \All machines running large jobs at any time, there are always $m - 1 - \lceil m / 2 \rceil$ machines exclusively dedicated to small jobs.

We use $J$ to denote the real job set, which includes only real jobs and excludes proxy jobs.
Let $L$ denote the set of large jobs, and let $S$ denote the set of small jobs, including proxy jobs. 
Recall that, for a proxied large job $j$ (proxied by $j'$), its flow time consists of two components:
$
F_j = C_j - r_j = (C_{j'} - r_{j'}) + (r_{j'} - r_j).
$
The first term $C_{j'} - r_{j'}$, the flow time of the proxy job $j'$, is already included in $\sum_{j \in S} F_j$. Therefore, we define the following as a modified flow time:  
$$
\hat{F}_j = 
\begin{cases}
r_{j'} - r_j & \text{if it is proxied by $j'$} \\
C_j - r_j & \text{if it is unproxied}
\end{cases}.
$$
We have: 
$
\sum_{j \in L} \hat{F}_j + \sum_{j \in S} F_j = \sum_{j \in J} F_j. 
$
Let us begin by bounding the total flow time of jobs in $L$.

\begin{lemma} [Large Jobs]
    \label{lem:det-no-restart-large}
    $\sum_{j\in L} \hat{F}_{j} = O(\sqrt{n/m}) \cdot  \OPT.$
\end{lemma}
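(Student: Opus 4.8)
The plan is to split each large job's modified flow time $\hat F_j$ into a \emph{waiting} part and, for unproxied jobs, a \emph{processing} part. Let $k(t)$ denote the number of active large jobs that are still waiting at time $t$ --- exactly the quantity that $\gamma$ is applied to. A proxied job $j$ (proxied by $j'$) contributes to $k(\cdot)$ precisely on $[r_j, r_{j'})$, an interval of length $\hat F_j$; an unproxied job $j$ contributes to $k(\cdot)$ precisely on $[r_j, s_j)$, and $\hat F_j = (s_j - r_j) + p_j$. Hence
\[
\sum_{j\in L}\hat F_j \;=\; \int_0^\infty k(t)\,dt \;+\!\! \sum_{j\in L \text{ unproxied}}\!\! p_j \;\le\; \int_0^\infty k(t)\,dt \;+\; \OPT ,
\]
where the inequality uses $\sum_{j\in L}p_j \le \sum_{j\in J}p_j \le \OPT$. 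So it remains to show $\int_0^\infty k(t)\,dt = O(\sqrt{n/m})\cdot\OPT$, and I would split this integral according to whether $\gamma(k(t))=0$.

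On the region $\{t:\gamma(k(t))=0\}=\{t:k(t)<\sqrt{n/m}\}$, I would use the fact that Machine~$1$ is reserved for large jobs and is work‑conserving on them: whenever $k(t)\ge 1$ Machine~$1$ is processing some large job, so the total measure of $\{t:k(t)\ge1\}$ is at most Machine~$1$'s busy time, which is at most $\sum_{j\in L}p_j\le\OPT$. Therefore this region contributes at most $\sqrt{n/m}\cdot\OPT$ to $\int k\,dt$, and additionally $T_{\mathrm{high}}:=\bigl|\{t:\gamma(k(t))\ge1\}\bigr|\le\bigl|\{t:k(t)\ge1\}\bigr|\le\OPT$.

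For the high region $\{t:\gamma(k(t))\ge1\}$ I would first apply the pointwise bound $k(t)<(\gamma(k(t))+1)\sqrt{n/m}$, reducing the task to showing $\int_{\gamma(k)\ge1}\gamma(k(t))\,dt=O(\OPT)$ (then $\int_{\gamma(k)\ge1}k\,dt\le\sqrt{n/m}\bigl(\int_{\gamma(k)\ge1}\gamma(k(t))\,dt+T_{\mathrm{high}}\bigr)$). Let $\theta(t)$ and $\sigma(t)$ be the number of \All\ machines processing a large job, respectively a small job, at time $t$. The heart of the argument is the pointwise claim that for every $t$ with $\gamma(k(t))\ge1$,
\[
\gamma(k(t)) \;\le\; 2\bigl(\theta(t)+\sigma(t)+1\bigr).
\]
To see this: if some \All\ machine is idle at such a $t$, the scheduling rule would have it take a waiting small job (none exists) or, since $\theta(t)<\gamma(k(t))$ would then hold and $k(t)\ge1$, a waiting large job --- so an idle \All\ machine forces $\theta(t)\ge\gamma(k(t))$; otherwise all $\lceil m/2\rceil$ \All\ machines are non‑idle, so $\theta(t)+\sigma(t)=\lceil m/2\rceil$. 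Combining the two cases with $\gamma(k)\le m\le2\lceil m/2\rceil$ gives the claim. Integrating it and using $\int\theta\,dt\le\sum_{j\in L}p_j\le\OPT$ (total \All-machine time on large jobs), $\int\sigma\,dt\le\sum_{j\in S}p_j\le F(\OPT(S))\le\OPT$ (by \Cref{lem:num_partition}), and $T_{\mathrm{high}}\le\OPT$, yields $\int_{\gamma(k)\ge1}\gamma(k(t))\,dt\le 2(\OPT+\OPT+\OPT)=O(\OPT)$. Assembling the three pieces gives $\sum_{j\in L}\hat F_j=O(\sqrt{n/m})\cdot\OPT$.

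\textbf{Main obstacle.} The delicate point is the pointwise inequality $\gamma(k(t))\le 2(\theta(t)+\sigma(t)+1)$: this is exactly where the threshold $\gamma(k)=\lfloor k/\sqrt{n/m}\rfloor$ and the \All/\Small\ machine split must interlock, and one must argue that whenever the algorithm "wants" more \All\ machines on large jobs than it has, the shortfall is fully accounted for by \All\ machines busy with (short) small jobs --- whose total machine-time carries a global budget of only $O(\OPT)$ --- and not by idleness, while also handling the regime $\gamma(k)>\lceil m/2\rceil$ where even using every \All\ machine cannot meet the nominal demand. Some care is also needed with non-preemptive commitments (a committed large job on an \All\ machine still counts in $\theta(t)$ and in the large-volume budget) and with the event/tie-breaking model, so that "work-conserving" holds literally. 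A minor caveat is that one should assume $n\ge m$ (otherwise $\sqrt{n/m}<1$) so that the additive $\OPT$ terms are absorbed into $O(\sqrt{n/m})\cdot\OPT$; tracking the $m$-dependence carefully here is harmless for this lemma, though it is the source of the $\sqrt{n/m}\log m$ term that later appears in the analysis of small jobs.
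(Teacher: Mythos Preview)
Your proposal is correct and follows essentially the same approach as the paper: decompose $\hat F_j$ into waiting plus processing, rewrite the waiting part as $\int_0^\infty k(t)\,dt$, establish a pointwise bound of the form $k(t)=O(\sqrt{n/m})\cdot(\text{busy machines among Machine 1 and the \All\ machines})$, and integrate against the volume budget $\sum_j p_j\le\OPT$. The paper's write-up is more streamlined: it skips the low/high split and the $\theta/\sigma$ separation, letting $\theta(t)$ denote the number of busy \All\ machines (regardless of job type) and arguing directly that either $\gamma(k(t))\le\theta(t)+1$ or $\theta(t)=\lceil m/2\rceil$, which together give $k(t)\le O(\sqrt{n/m})(\theta(t)+1)$; the ``$+1$'' then absorbs exactly your Machine-1 argument via $\int(\theta(t)+1)\,\mathbb I_{k(t)\ge1}\,dt\le\sum_{j\in J}p_j$. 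Your version is a bit more verbose but equally valid, and your identification of the ``main obstacle'' --- the pointwise inequality and the need to account for \All\ machines occupied by small jobs via the global volume budget --- is precisely the crux of the paper's argument as well.
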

\begin{proof}
    Consider the waiting time of a large job $j$ during the interval $[r_j, s_j)$ (or $[r_j, r_{j'})$ if proxied). At any time $t$ in this interval, $j$ is an active large job that has not been scheduled, either because all \All machines are busy, or because at least $\gamma(k(t))$ of them are already processing large jobs.

    Let $\theta(t)$ denote the number of busy \All machines at time $t$ (processing any job, large or small), and let $k(t)$ denote the number of active large jobs waiting at time $t$. Since at least $\gamma(k(t))$ of these machines are processing large jobs, we have by the scheduling rule:
    $$
    \theta(t) = m - 1 \quad \text{or} \quad \gamma(k(t)) \leq \theta(t).
    $$
    By the definition of $\gamma$, this gives $k(t) = O(\sqrt{n/m}) \cdot (\theta(t) + 1)$.

    Whenever $k(t) \geq 1$, the \LargeOnly machine must also be busy, so the total number of busy machines is at least $\theta(t) + 1$. Since each busy machine contributes to the total processing time:
    $$
    \int_{0}^{\infty} (\theta(t) + 1) \cdot \mathbb{I}_{k(t) \geq 1}(t) \, dt \leq \sum_{j \in J} p_j.
    $$

    Combining with the processing time of the jobs themselves:
    \begin{align*}
    \sum_{j \in L} \hat{F}_j
    &\leq \sum_{j \in L} p_j + \int_0^\infty k(t)\,dt \\
    &\leq \sum_{j \in L} p_j + O(\sqrt{n/m}) \cdot \int_{0}^{\infty} (\theta(t) + 1) \cdot \mathbb{I}_{k(t)\geq 1}(t) \,dt \\
    &\leq \sum_{j \in L} p_j + O(\sqrt{n/m}) \cdot \sum_{j \in J} p_j
    = O\left( \sqrt{\frac{n}{m}} \right) \cdot \OPT.
    \end{align*}
\end{proof}

Next, we analyze the contribution of small jobs. Since the \LargeOnly machine is reserved for large jobs, small jobs are effectively scheduled on $m - 1$ machines. To compare against the $m$-machine optimum, we first bound the cost of losing one machine.

\begin{lemma}
    \label{lem:opt-m-m-k}
    For every $0 \leq k \leq m-1$, the optimal preemptive flow time on $m-k$ machines satisfies
    $\OPT_{m-k} \leq \left(1 + \frac{nk}{m(m-k)}\right) \cdot \OPT_{m}$,
    where both optima are under the preemptive model \emph{without migration}.
\end{lemma}

\begin{proof}
Starting from an optimal preemptive schedule on $m$ machines, we construct a feasible schedule on $m-k$ machines. Sort the machines by total workload and let $J_{[k]}$ denote the jobs on the $k$ lightest machines. Their total processing time satisfies $\sum_{j \in J_{[k]}} p_j \leq k \sum_{j \in J} p_j / m$.

Each job $j \in J_{[k]}$ is reassigned to one of the remaining $m-k$ machines, inserted at its original start time. This delays existing jobs on the target machine by at most $p_j$. To minimize impact, each job is placed on the machine with the fewest currently assigned jobs, ensuring at most $n/(m-k)$ jobs per machine. The additional flow time from inserting $j$ is therefore at most $n p_j / (m-k)$.

Summing over all reassigned jobs:
$$
\OPT_{m-k} \leq \OPT_m + \frac{n \sum_{j \in J_{[k]}} p_j}{m-k} \leq \OPT_m + \frac{nk}{m(m-k)} \cdot \OPT_m.
$$
\end{proof}

Our key tool for analyzing small jobs is the generalized \NSJF analysis (\Cref{lem:SJFgeneral}, proved in \Cref{sec:SJF}), which extends \Cref{lem:sjfmain} to handle \emph{initial blocking periods}. If \NSJF runs on $m'$ machines where machine $i$ is unavailable during $[0, b_i)$, and all job sizes are bounded by $\tau$, then
$$
F(\NSJF(\vec{b})) \leq F(\OPT(\vec{0})) + 2n\tau + \frac{nB}{m'},
$$
where $B = \sum_i b_i$ is the total blocking time and $\OPT(\vec{0})$ is the optimal preemptive schedule on $m'$ machines without blocking.

To apply this, we partition the timeline into intervals $[t_{h-1}, t_h)$, where $t_h$ is the start time of the $h$-th large job on an \All machine (excluding the \LargeOnly machine) and $t_0 = 0$. Let $S_h$ denote the small jobs released during $[t_{h-1}, t_h)$. Since no \All machine starts a new large job during $[t_{h-1}, t_h)$, all $m-1$ \All machines are available for small jobs, and every job in $S_h$ is scheduled before $t_h$. Some machines may be initially blocked by jobs still in progress from the previous interval: machine $i$ is unavailable during $[t_{h-1}, t_{h-1} + b_{h,i})$, where $b_{h,i}$ is its remaining processing time at $t_{h-1}$. Shifting the time origin to $t_{h-1}$, the flow time of $S_h$ under our algorithm matches that of $\NSJF(S'_h, \vec{b_h})$, where $S'_h$ is the shifted instance and $\vec{b_h}$ is the blocking vector.

It remains to bound the total blocking time.

\begin{lemma}[Blocking Time]
For each interval $(t_{h-1}, t_h]$, the total initial blocking time satisfies:
$$
\sum_{i=1}^{m-1} b_i = O\left( \sqrt{\frac{m}{n}} \log m \right) \cdot \OPT.
$$
\end{lemma}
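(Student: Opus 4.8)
The plan is to classify the (at most $m-1$) initial blocking jobs on machines $2,\dots,m$ --- machine $1$ is reserved for large jobs and is excluded from the small-job analysis --- according to whether each is \emph{small} or \emph{large}, and to bound the total remaining processing time of the two classes separately. A blocking job that is small has, by part~(2) of \Cref{lem:num_partition} with $\ell=\lfloor\sqrt{nm}\rfloor$, processing time at most $\OPT/\lfloor\sqrt{nm}\rfloor$, hence remaining time $O(\OPT/\sqrt{nm})$; as there are at most $m-1$ blocking jobs altogether, their total contribution is $O\big((m-1)/\sqrt{nm}\big)\cdot\OPT=O(\sqrt{m/n})\cdot\OPT$, which is already within the claimed bound. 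The $\log m$ factor comes entirely from the large blocking jobs, and bounding their total remaining time is the crux of the proof.

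A large blocking job can only reside on a \All machine, so there are at most $\lceil m/2\rceil=O(m)$ of them; order them by start time as $j_1,\dots,j_\Theta$ with $s_{j_1}\le\cdots\le s_{j_\Theta}\le t_{h-1}$ (ties broken by the fixed rule). I would argue that the remaining time of $j_q$ at $t_{h-1}$ is at most $\OPT/(q\sqrt{n/m})$. When the algorithm started $j_q$ on a \All machine, the jobs $j_1,\dots,j_{q-1}$ were already occupying $q-1$ distinct \All machines with large jobs (they started no later than $j_q$ and, being blocking, complete only after $t_{h-1}\ge s_{j_q}$), so the count $\theta$ of \All machines then processing large jobs satisfied $\theta\ge q-1$. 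Since the scheduling rule only starts a large job on a \All machine when $\theta<\gamma(k)$, we get $\gamma(k)\ge q$, i.e.\ $k\ge q\sqrt{n/m}$, where $k$ is the number of active large jobs still waiting at that instant; and because the rule picks the smallest waiting active large job, every one of those $k$ jobs has processing time at least $p_{j_q}$. These $k$ jobs are distinct real jobs, each contributing at least its processing time to the offline optimum, so $\OPT\ge\sum_{j\in J}p_j\ge k\cdot p_{j_q}\ge q\sqrt{n/m}\cdot p_{j_q}$; since the remaining time of $j_q$ at $t_{h-1}$ is at most $p_{j_q}$, it is at most $\OPT/(q\sqrt{n/m})$.

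Summing over the large blocking jobs then yields $\sum_{q=1}^{\Theta}\OPT/(q\sqrt{n/m})=O(\log\Theta)\cdot\sqrt{m/n}\cdot\OPT=O(\sqrt{m/n}\,\log m)\cdot\OPT$ since $\Theta=O(m)$, and adding the $O(\sqrt{m/n})\cdot\OPT$ bound from the small blocking jobs completes the proof. I expect the second paragraph to be the main obstacle: one must carefully verify that $j_1,\dots,j_{q-1}$ are indeed simultaneously running on distinct \All machines at time $s_{j_q}$ (handling ties in start times and committed large jobs, and noting that any extra large jobs running there only strengthen the bound $\theta\ge q-1$), and that the $k$ waiting active large jobs at that instant are genuine distinct jobs of size at least $p_{j_q}$, so that the charge against $\OPT\ge\sum_j p_j$ is legitimate. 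The remaining ingredients --- the size cap on small jobs and the harmonic-sum estimate --- are routine.
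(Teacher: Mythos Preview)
Your proposal is correct and follows essentially the same approach as the paper: split blocking jobs into small and large, bound small ones by the $\OPT/\sqrt{nm}$ size cap, and for the $q$-th large blocking job use the activation rule $\theta<\gamma(k)$ together with $\theta\ge q-1$ to get $k\ge q\sqrt{n/m}$ waiting active large jobs each of size at least $p_{j_q}$, hence $p_{j_q}\le \OPT/(q\sqrt{n/m})$, and sum the harmonic series. The paper's write-up is terser (it asserts $p_i\le \OPT/k$ without spelling out the minimum-selection argument you give), but the logic is identical; your anticipated subtleties about ties, committed large jobs, and extra running large jobs only strengthening $\theta\ge q-1$ are all handled correctly.
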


\begin{proof}
At time $t_{h-1}$, each \All machine is blocked by at most one job still in progress. We bound the total blocking separately for small and large blocking jobs.

\paragraph{Small blocking jobs.} Each small job has size at most $O(\OPT/\sqrt{nm})$ by \Cref{lem:num_partition}. Since at most $m-1$ machines can be blocked, the total blocking from small jobs is at most $O(\sqrt{m/n} \cdot \OPT)$.

\paragraph{Large blocking jobs.} Order the large jobs that are still running on \All machines at time $t_{h-1}$ by their start times, and let $p_i$ denote the size of the $i$-th such job. When the $i$-th job was started, at least $i - 1$ other \All machines were already processing large jobs, so the scheduling rule required $\gamma(k) \geq i$, where $k$ is the number of active large jobs waiting at that moment. By the definition of $\gamma$, this implies $k \geq i\sqrt{n/m}$. Since these $k$ waiting jobs are all at least as large as $p_i$ (otherwise they would have been scheduled first), we have $k \cdot p_i \leq \OPT$, giving:
$$
p_i \leq \frac{\OPT}{i\sqrt{n/m}} = \frac{1}{i} \cdot O\left( \sqrt{\frac{m}{n}} \right) \cdot \OPT .
$$
There are at most $\lceil m/2 \rceil$ such jobs, so the total blocking from large jobs is:
$$
\sum_{i=1}^{\lceil m/2 \rceil} p_i \leq O\left( \sqrt{\frac{m}{n}} \right)  \cdot \OPT \cdot \sum_{i=1}^{\lceil m/2 \rceil} \frac{1}{i} = O\left( \sqrt{\frac{m}{n}} \log m \right) \cdot \OPT.
$$

Combining both parts gives the claimed bound.
\end{proof}

With the blocking time bounded, \Cref{lem:SJFgeneral} applies directly. We now prove the main result.

\begin{theorem}
\label{thm:det-no-restart}
There exists an online polynomial-time deterministic non-preemptive algorithm for total flow time minimization that is $O(n/m^2 + \sqrt{n/m} \log m)$-competitive against the preemptive offline solution.
\end{theorem}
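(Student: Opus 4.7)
The plan is to split the total flow time as $F(\ALG) = \sum_{j \in L} \hat{F}_j + \sum_{j \in S} F_j$, where $L$ and $S$ are the sets of large and small jobs (with proxies) produced by the online rank-based partitioning. The first sum is already controlled by \Cref{lem:det-no-restart-large}, which gives $\sum_{j \in L} \hat{F}_j \leq O(\sqrt{n/m}) \cdot \OPT$, so the remaining task is to bound the small-job contribution by $O(n/m^2 + \sqrt{n/m}\log m) \cdot \OPT$.

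To handle the small jobs, I would follow the interval decomposition already sketched before the theorem statement: let $0 = t_0 < t_1 < t_2 < \cdots$ be the start times of large jobs, and let $S_h$ be the set of small jobs released during $[t_{h-1}, t_h)$. Within each such interval, no large job is newly initiated, so the algorithm's behavior on the $m-1$ non-large-reserved machines coincides with running $\NSJF$ on the shifted instance $S_h'$ equipped with initial blocking times $\vec{b}_h$ (the residual processing at $t_{h-1}$). Applying the generalized $\NSJF$ guarantee \Cref{lem:SJFgeneral} inside each interval produces a bound of the form $F(S_h) \leq \OPT_{m-1}(S_h) + O(|S_h|\tau) + \text{(penalty depending on } \vec{b}_h\text{)}$, where $\tau = O(\OPT/\sqrt{nm})$ is the uniform size bound on small jobs from \Cref{lem:num_partition}.

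Summing across intervals, the $\OPT_{m-1}(S_h)$ terms telescope to at most $\OPT_{m-1}(S)$, which by part 3 of \Cref{lem:num_partition} is at most $\OPT_{m-1}(J)$, and by \Cref{lem:opt-m-m-k} applied with $k=1$ is at most $O(n/(m(m-1))) \cdot \OPT = O(n/m^2) \cdot \OPT$. The additive $O(|S_h|\tau)$ terms sum to $O(n\tau) = O(\sqrt{n/m}) \cdot \OPT$. For the blocking penalty, the blocking-time lemma proved immediately above bounds each period's $\sum_i b_i$ by $O(\sqrt{m/n}\log m) \cdot \OPT$, which, fed into \Cref{lem:SJFgeneral} and spread across the $m-1$ working machines, translates to a global contribution of $O(\sqrt{n/m}\log m) \cdot \OPT$.

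Combining the three pieces yields $F(\ALG) = O(n/m^2 + \sqrt{n/m}\log m) \cdot \OPT$, proving the theorem. The main obstacle is to invoke \Cref{lem:SJFgeneral} cleanly inside every interval so that the per-period blocking penalty aggregates to the advertised $O(\sqrt{n/m}\log m) \cdot \OPT$ bound instead of picking up an extra factor proportional to the number of intervals; this requires confirming that within each period the blocking penalty scales with $|S_h|$, so that the summation over $h$ only contributes $\sum_h |S_h| \leq n$ multiplied by the per-period blocking bound divided by $m-1$, matching exactly the $O(\sqrt{n/m}\log m) \cdot \OPT$ term in the final ratio.
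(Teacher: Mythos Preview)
Your proposal is correct and follows essentially the same argument as the paper's proof: the same decomposition $F(\ALG)=\sum_{j\in L}\hat F_j+\sum_{j\in S}F_j$, the same interval splitting at large-job start times, the same invocation of \Cref{lem:SJFgeneral} per interval with the blocking lemma supplying the bound on $\sum_i b_i$, and the same use of \Cref{lem:opt-m-m-k} with $k=1$ to control $\OPT_{m-1}$. You also correctly isolate the one subtle point, namely that the blocking term in \Cref{lem:SJFgeneral} is $|S_h|B/(m-1)$, so that after summing over $h$ only $\sum_h|S_h|\le n$ appears and no extra factor from the number of intervals enters.

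Two minor wording remarks: the aggregation $\sum_h \OPT_{m-1}(S_h)\le \OPT_{m-1}(S)$ is subadditivity over disjoint release windows rather than a telescoping sum; and when passing from $\OPT_{m-1}(S)$ to $\OPT_{m-1}(J)$ via part~(3) of \Cref{lem:num_partition}, note that the ``weaker instance'' argument there is machine-count-independent, so it applies equally well with $m-1$ machines (the paper leaves this implicit).
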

\begin{proof}
By \Cref{lem:det-no-restart-large}, the total modified flow time of large jobs is bounded by:
$$
\sum_{j \in L} \hat{F}_j = O\left( \sqrt{\frac{n}{m}} \right) \cdot \OPT.
$$

Let $\tau$ be the size upper bound for small jobs, i.e., $\tau = O(\OPT / \sqrt{nm})$. For each small-job segment $[t_{h-1}, t_h)$, our algorithm performs the same as $\NSJF(S'_h)$ with initial blocking vector $\vec{b_h}$. By \Cref{lem:SJFgeneral} on $m-1$ machines, let $B_h = \sum_i b_{h,i}$:
$$
\sum_{j \in S_h} F_j \leq F(\OPT_{m-1}(S_h)) + 2 |S_h| \tau + \frac{|S_h| B_h}{m-1}.
$$
Summing over all disjoint small-job segments:
$$
\sum_{j \in S} F_j \leq F(\OPT_{m-1}) + 2n\tau + \frac{n \cdot \max_h B_h}{m-1}.
$$
Using the bounds $\tau = O(\OPT/\sqrt{nm})$ and $\max_h B_h = O(\sqrt{m/n} \log m \cdot \OPT)$, we get:
$$
\sum_{j \in S} F_j \leq \OPT_{m-1} + O\left( \sqrt{\frac{n}{m}} \right) \cdot \OPT + O\left( \sqrt{\frac{n}{m}} \log m \right) \cdot \OPT.
$$

By \Cref{lem:opt-m-m-k}, which bounds the loss in flow time from using $m-1$ instead of $m$ machines, we have:
$$
\OPT_{m-1} \leq \OPT + O(n/m^2) \cdot \OPT.
$$
Combining all terms:
\begin{align*}
\sum_{j \in S} F_j 
&\leq \OPT + O(n/m^2) \cdot \OPT + O\left( \sqrt{\frac{n}{m}} \right) \cdot \OPT + O\left( \sqrt{\frac{n}{m}} \log m \right) \cdot \OPT \\
&= O\left( \frac{n}{m^2} + \sqrt{\frac{n}{m}} \log m \right) \cdot \OPT.
\end{align*}

Finally, adding the contributions from large and small jobs:
$$
\sum_{j \in J} F_j = \sum_{j \in L} \hat{F}_j + \sum_{j \in S} F_j = O\left( \frac{n}{m^2} + \sqrt{\frac{n}{m}} \log m \right) \cdot \OPT.
$$
\end{proof}

\section{Deterministic Algorithm with Kill-and-Restart}
\label{sec:det-restart}

To utilize the power of kill-and-restart without relying on randomization, we first introduce a refined online partitioning framework with the following modification (see \Cref{alg:partition_restart} for details, and note that the threshold will be set to $\sqrt{nm}$):

\paragraph{Modified partition rule.} Let $P$ be the total size of all previously released jobs (excluding proxy jobs), which serves as a lower bound on $\OPT$. We classify a new job $j$ as a small job if $p_j \leq 4 P/\ell$, and large otherwise. The other procedures remain identical to \Cref{alg:partition}.

This modification preserves the original properties of the framework. The additional size filter provides a lower bound on the processing time of large jobs, which will be used in the analysis of kill events (\Cref{claim:fake_busy}).

\begin{algorithm}[ht]
\caption{Refined Online Partitioning for Kill-and-Restart Models}
    \label{alg:partition_restart}
    \SetKwFunction{FPartition}{Partition}
    \SetKwProg{Proc}{Procedure}{:}{end}
    \KwData{
        Active large job set $\LargeSet$, and every job's other current states
    }
    \Proc{\FPartition{$i, \ell$}}{
        Let $P$ be the sum of processing times of all (original) jobs released so far\;
        \uIf{$p_i \leq 4 P / \ell$}{
            Mark $i$ as \textit{small}\;
        }
        \Else{
            \If{$|\LargeSet| = \ell$}{
                Let $j = \arg\min_{j \in \LargeSet} p_j$\tcp*{$j$ will be retired.}
                Remove $j$ from $\LargeSet$\;
                \uIf{job $j$ is still waiting}{
                    Create a new \textit{proxy} job $j'$ with $r_{j'} = r_i$ and $p_{j'} = p_{j}$\;
                    Mark $j'$ as \textit{small}\;
                    Mark $j$ as \textit{proxied} by $j'$\;    \tcp{The scheduler will actually process $j$ when processing $j'$.}
                }
                \Else{
                    Mark $j$ as \textit{committed large}\;
                }
            }
            Add $i$ to $\LargeSet$\;
            Mark $i$ as \textit{large}\;
        }
    }
\end{algorithm}

\begin{lemma}
\label{lem:partition_restart}
Under the online rank-based partitioning framework (refined for restart), the following properties hold:
\begin{enumerate}
    \item The number of active large jobs (the set $\LargeSet$) is at most $\ell$ at any time.
    \item The processing time of any job classified as small, or any retired large job, is upper bounded by $O(\OPT / \ell)$.
    \item Let $S$ be the set of all jobs ever classified as small, including all proxy jobs. Let $J$ be the original job set. The instance $S$ is no harder to schedule than $J$, meaning $F(\OPT(S)) \leq F(\OPT(J))$.
    \item The processing time of any large job $j$ is at least $4P(r_j)/\ell$, where $P(r_j)$ denotes the total processing time of all jobs released up to $r_j$, excluding proxy jobs.
\end{enumerate}
\end{lemma}

\begin{proof}
The only property that requires verification under the modified framework is the size bound for small jobs. By the new partition rule, we classify a job $j$ as small if $p_j \leq 4P(r_j)/\ell$. Since $P(r_j)$ is a valid lower bound on $\OPT$, it follows that $p_j \leq O\left(\OPT/\ell\right)$, 
as required. The remaining properties are preserved from the original online partitioning framework.
\end{proof}

After introducing the refined partitioning framework, we now describe the basic design of our scheduling policy, which divides the $m$ machines into two roles:

\begin{itemize}
    \item {\SmallOnly machines ($1$ to $\lfloor m / 2 \rfloor$):} These machines accept \emph{only} small jobs.
    \item {\All machines ($\lfloor m / 2 \rfloor + 1$ to $m$):} These machines accept both small and large jobs.
\end{itemize}

Our scheduling policy builds on the partitioning produced by \Cref{alg:partition_restart}. Each job is classified as either large or small; each large job is further classified as active or retired, and as proxied or unproxied.

\paragraph{Scheduling rule.}
When an \All machine becomes idle, it schedules the smallest waiting unproxied job (regardless of whether it is large or small). When a \SmallOnly machine becomes idle, it schedules the smallest waiting small job.

\paragraph{Kill-and-restart rule.}
We say that a newly released small job is \emph{blocked} if any active large job is currently being processed, and \emph{unblocked} otherwise. The algorithm maintains a counter $\phi$, which is incremented by one for each blocked small job and reset to zero whenever an active large job begins processing or is killed. When $\phi$ reaches $\lfloor\sqrt{nm}\rfloor$, the algorithm kills all currently processing active large jobs and resets $\phi$ to zero.

The full pseudocode is presented in \Cref{alg:multi-deterministic-restart-m}.
\begin{algorithm} [h]
    \caption{Deterministic Algorithm with Kill-and-Restart on $m$ Machines}
    \label{alg:multi-deterministic-restart-m}
    \SetKwInOut{Input}{Include}  
    \Input{\Cref{alg:partition_restart}}
    \KwData{Blocked small job counter $\phi$ initialized to $0$; every job's current states as determined  by \Cref{alg:partition_restart}.}
    \SetKwBlock{OnJobRelease}{\textbf{On job $i$ release:}}{}
    \SetKwBlock{OnEvent}{\textbf{On any job completion or after any job release:}}{}
    \OnJobRelease
    {
        Call \texttt{Partition}$(i, \ell)$ with $\ell = \lfloor \sqrt{nm} \rfloor$ to classify job $i$ and update related jobs\;
         $j \gets \begin{cases}
            i & \text{if $i$ is classified as small} \\
            j' & \text{if $i$ is classified as large, and leads to a new proxy job $j'$} \\
            \bot & \text{otherwise (no new small job)}
        \end{cases}$ \;
        \If{$j \neq \bot$, i.e., there exists a new small job $j$}{
        Call $j$
        $\begin{cases}
            \emph{blocked} & \text{there exists a processing active large job} \\
            \emph{unblocked} & \text{no active large job is processing}
        \end{cases}$ \;
        $\phi \gets \phi + 1$ if $j$ is \emph{blocked} \;
        \If{$\phi = \lfloor \sqrt{nm} \rfloor$}
        {
            Kill all processing active large jobs and $\phi \gets 0$\;
        }
        }
    }
    \OnEvent
        {  
                \For {each idle \SmallOnly machine $i$} {
                    \If{ there exists a waiting small job }{
                        $j \gets$ the smallest small job \;  
                        Schedule $j$ on $i$\;
                    }
                }
                \For {each idle \All machine $i$} {
                    \If{ there exists a waiting job that is not proxied }{
                        $j \gets$ the smallest waiting unproxied job \;
                        Schedule $j$ on machine $i$ \;
                        $\phi \gets 0$ if $j$ is an active large job \;
                    }
                }
        }
\end{algorithm}

The basic invariant we aim to keep in the algorithm is \Cref{lem:restart_block_num}.
\begin{lemma}
\label{lem:restart_block_num}
The number of waiting blocked small jobs at any time is at most $\sqrt{nm}$.
\end{lemma}

\begin{proof}
We partition the timeline into intervals, where each interval begins when the algorithm starts processing an active large job and ends just before the next such event. By the algorithm, the counter $\phi$ is reset to $0$ at the start of each interval. Within an interval, every newly released blocked small job increments $\phi$ by one, and when $\phi$ reaches $\lfloor\sqrt{nm}\rfloor$, all currently processing active large jobs are killed and $\phi$ is reset. Therefore, at most $\lfloor\sqrt{nm}\rfloor$ blocked small jobs are created in each interval.

It remains to show that all blocked small jobs created in one interval have started processing before the next interval begins. After a kill event (or if the active large jobs complete naturally), no active large job is being processed. At this point, every waiting blocked small job is smaller than every waiting active large job, so by the shortest-first scheduling rule, all blocked small jobs are scheduled before any active large job can start processing again.
\end{proof}

Moving into the analysis, we now distinguish between several types of job sets: The \emph{real} job set $J$, which includes only actual jobs and excludes proxy jobs. The \emph{unblocked small job set} $S$, and the \emph{blocked small job set} $K$, both of which may include proxy jobs. The \emph{large job set} $L$, which includes all large jobs, even if they become retired at some point.

Note that $K \cup S \cup L \not\subseteq J$ with the presence of proxy jobs. We analyze the total flow time contributed by $S$, $K$, and $L$ separately.

\begin{lemma}[Unblocked Small Job]
\label{lem:dunblocked-small-job}
Let $S$ be the set of unblocked small jobs (including proxy jobs). Then:
\[
\sum_{j \in S} F_j = O\left( \sqrt{\frac{n}{m}} \right) \cdot \OPT.
\]
\end{lemma}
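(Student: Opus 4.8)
The plan is to mirror the analysis of small jobs in \Cref{thm:det-no-restart}, with the kill rule playing the role that the $\gamma$-function played there. I would partition the timeline into segments $[t_{h-1},t_h)$ delimited by the times at which an active large job starts (or restarts) processing on a \All\ machine, with $t_0=0$. First I bound the number of segments using the counter bookkeeping: $\phi$ resets only at a kill event or at a large-job start, and between two resets at most $\lfloor\sqrt{nm}\rfloor$ blocked small jobs are released; since there are at most $|B|=O(n)$ blocked small jobs overall and each kill ``uses up'' $\Theta(\sqrt{nm})$ of them, there are $O(\sqrt{n/m})$ kills, hence $O(m)\cdot O(\sqrt{n/m})=O(\sqrt{nm})$ restarts, which together with the at most $\ell=\lfloor\sqrt{nm}\rfloor$ initial starts gives $O(\sqrt{nm})$ segments in total. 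Within a segment no new large job starts, so the large jobs processing during it are exactly those processing at $t_{h-1}$, at most $\lceil m/2\rceil$ of them.

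The heart of the proof is the following structural claim: \emph{if a large job $g$ (active or committed) occupies a \All\ machine at some moment when an unblocked small job $j$ is waiting, then $p_g=O(\OPT/\sqrt{nm})=O(\tau)$.} Let $[\sigma,\cdot)$ be $g$'s current processing run. If $r_j\le\sigma$, then $j$ is already released and waiting at $\sigma$, and since the scheduler picked $g$ over the smallest unproxied waiting job at $\sigma$ we get $p_g\le p_j\le\tau$. If $r_j>\sigma$, then $g$ runs continuously on $[\sigma,r_j]$, hence is processing at $r_j$; if $g$ is still \emph{active} at that instant then $j$ would be blocked, contradicting $j\in S$, so $g$ must already be retired, i.e.\ committed, and by \Cref{lem:partition_restart} its size is $O(\OPT/\ell)=O(\tau)$. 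The key consequence is that a large job of size $\omega(\tau)$ never delays a waiting unblocked small job: at every instant some job of $S$ waits, all $m$ machines are busy with jobs of size $O(\tau)$, so the large-job interference the set $S$ suffers --- as initial blocking at a segment boundary or as an ongoing block within a segment --- comes from at most $\lceil m/2\rceil$ \All\ machines each holding a job of size $O(\tau)$, i.e.\ total blocking $O(m\tau)=O(\sqrt{m/n})\cdot\OPT$ per segment. This is precisely where the kill-and-restart model beats the non-preemptive one: we get a flat $O(\tau)$ bound on each relevant large job instead of the harmonic-sum bound $p_i\le \OPT/(i\sqrt{n/m})$ that produced the extra $\log m$ factor in \Cref{thm:det-no-restart}.

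Given this, the flow time of $S$ is assembled as in \Cref{thm:det-no-restart}. In segment $h$ the algorithm processes $S_h$ (unblocked small jobs released in $[t_{h-1},t_h)$) together with $B_h$ (blocked small jobs released there, $|B_h|\le\lfloor\sqrt{nm}\rfloor$ by the counter bound), which \NSJF\ handles uniformly; relative to the jobs of $S_h$ this is a generalized \NSJF\ run on $m$ machines whose blocking periods, after discarding the machines held by $\omega(\tau)$-size large jobs (irrelevant to $S_h$ by the claim), have total length $b=\sum_i b_i=O(m\tau)$. Applying \Cref{lem:SJFgeneral} to each segment and summing over the disjoint segments, using $\sum_h F(\OPT_m(S_h\cup B_h))\le F(\OPT(S\cup B))\le\OPT$ by \Cref{lem:partition_restart} and $|S\cup B|=O(n)$, yields
\[
\sum_{j\in S}F_j\;\le\;\OPT\;+\;O\!\left(|S\cup B|\,\tau\right)\;+\;O\!\left(\tfrac{|S\cup B|}{m}\cdot m\tau\right)\;=\;\OPT+O(n\tau)\;=\;O\!\left(\sqrt{n/m}\right)\cdot\OPT ,
\]
since $\tau=O(\OPT/\sqrt{nm})$.

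The step I expect to be the main obstacle is making the penultimate sentence fully rigorous. The algorithm interleaves the unblocked jobs $S$, the blocked jobs $B$, and (possibly huge) large jobs in one schedule, and one must extract a bound on $\sum_{j\in S}F_j$ \emph{alone}: the jobs of $B$ are genuinely delayed a lot by huge large jobs, so one cannot simply bound $\sum_{j\in S\cup B}F_j$ using only $O(m\tau)$ blocking. The structural claim is the device that decouples the two --- it shows the waiting intervals of $S$-jobs never overlap an $\omega(\tau)$-size large-job run --- but one still has to re-run the volume/counting argument behind \Cref{lem:SJFgeneral} while tracking only the jobs of $S$ on the time-varying machine power, treating only the $O(\tau)$-size large-job runs as blocking. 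Carrying this out carefully, together with the bookkeeping for committed/retired/proxied large jobs and for restarts straddling segment boundaries, is where the real work lies; the remaining estimates are routine consequences of \Cref{lem:SJFgeneral} and \Cref{lem:partition_restart}.
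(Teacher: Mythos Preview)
Your approach is genuinely different from the paper's. You rely on the structural claim that any large job processing while an unblocked small job waits has size $O(\tau)$, and then want to treat large-job interference as $O(m\tau)$ of initial blocking per segment and invoke \Cref{lem:SJFgeneral} on $S_h\cup B_h$. The paper never tries to bound large-job blocking at all. Instead it shifts the release time of every blocked small job (and every committed large job) to its actual start time in the algorithm, obtaining sets $B'$ and $U'$, so that the algorithm's small-job schedule coincides with plain $\NSJF(S\cup U'\cup B')$; then, per segment, it compares $\NSJF(S_h\cup B'_h)$ to $\OPT^m(S_h\cup B'_h)$ via \Cref{lem:sjfmain}, and separately bounds $\OPT^m(S_h\cup B'_h)-\OPT(S_h)$ by inserting the at most $\sqrt{nm}$ jobs of $B'_h$ migratorily into $\OPT(S_h)$, paying $O(\sqrt{n/m})\sum p_j$. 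Your structural claim is correct and useful---it is in fact the rigorous version of the paper's somewhat loose assertion that ``active large jobs never affect the scheduling of unblocked small jobs''---but it buys you nothing for the blocked jobs $B_h$, and that is exactly where your flagged gap lives.

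Concretely, the penultimate step does not go through: the $B_h$ jobs genuinely wait behind the $\omega(\tau)$-size large job that starts at $t_{h-1}$, so the algorithm's schedule on $S_h\cup B_h$ is \emph{not} $\NSJF(S_h\cup B_h,\vec b)$ with $\sum_i b_i=O(m\tau)$; the true blocking vector has a possibly huge entry on the machine running that large job, and \Cref{lem:SJFgeneral} with the true $\vec b$ gives a useless bound. Decoupling $S_h$ from $B_h$ as you suggest would require re-proving the volume/counting chain behind \Cref{lem:SJFgeneral} for a \emph{subset} of the jobs under time-varying machine power, which is not in the toolbox and is nontrivial. The paper's release-time shift sidesteps exactly this: by moving each $B$-job's release to its start time, the long wait of a $B$-job behind the large job disappears from its flow time in the $\NSJF$ instance (it is instead charged in the separate blocked-job lemma), and what remains per segment is a tame instance where the only extra jobs are $\le\sqrt{nm}$ small ones handled by the migratory-insertion trick. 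A minor side remark: your count of ``at most $\ell$ initial starts'' is not right---the number of jobs ever classified large is not bounded by $\ell$, since every retirement admits a new entry to $\LargeSet$---but your argument does not actually use the segment count, so this is harmless.
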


\begin{proof}
In the absence of blocked small jobs and committed large jobs, the total flow time contributed by the jobs in $S$ is simply $F(\NSJF(S))$, which is at most $O\!\left(\sqrt{n/m}\right)\cdot F(\OPT(S))$ by \Cref{lem:sjfmain}. Indeed, active large jobs never interfere with the scheduling of unblocked small jobs: if a job $j$ is unblocked, then no active large job is running at its release time, and no active large job can be started before $j$, since every such job is larger than $j$.

We now account for the effect of blocked small jobs and committed large jobs. 

To analyze their effect, we partition the time line into intervals
\[
I_h=[t_h^{(\ell)},t_h^{(r)}), \qquad h=1,2,\dots,H,
\]
defined recursively as follows. We let $t_1^{(\ell)}$ be the first time at which an unblocked small job is released. Given $t_h^{(\ell)}$, we let $t_h^{(r)}$ be the first time at or after $t_h^{(\ell)}$ at which the algorithm starts processing an active large job, which may later be killed or retired. If no such time exists, we set $t_h^{(r)}=\infty$ and the construction completes. Otherwise, we let $t_{h+1}^{(\ell)}$ be the first time after $t_h^{(r)}$ at which an unblocked small job is released, and continue recursively. By construction, every unblocked small job is released in some interval $I_h$.

The key properties of each interval $I_h$ are the following. First, no active large job is processed inside $I_h$: 
by definition, no large job is processing at $t_h^{(\ell)}$ since an unblocked small job is released at $t_h^{(\ell)}$, and no large job starts processing inside $I_h$. Second, every blocked small job released before $t_h^{(\ell)}$ must start processing before $t_h^{(r)}$: the algorithm prioritizes small jobs over the large job starting at $t_h^{(r)}$.

For each $h$, let $S_h$ be the set of unblocked small jobs released in $I_h$, and let $K_h$ be the set of blocked small jobs that started processing in $I_h$. For every job in $K_h$, we modify its release time to its start time and obtain the corresponding set $K_h'$.

Shifting the time origin to $t_h^{(\ell)}$, the schedule of $S_h \cup K_h'$ inside $I_h$ can be viewed as $\NSJF(S_h \cup K_h', \vec{b_h})$, where $\vec{b_h}$ is the initial blocking vector: machine $i$ is unavailable during $[0, b_{h,i})$ due to a small job or a committed large job still being processed at time $t_h^{(\ell)}$, with $b_{h,i} = 0$ if machine $i$ is idle. Therefore,
\[
\sum_{j\in S_h} F_j 
\le
\sum_{j\in S_h \cup K'_h} F_j 
=
F(\NSJF(S_h\cup K_h',\vec{b_h})).
\]
We now apply the generalized \NSJF\ bound from \Cref{lem:SJFgeneral}. Since every small job or committed large job has size at most $O(\OPT/\sqrt{nm})$, the total blocking time $B_h = \sum_i b_{h,i}$ satisfies $B_h \leq m \cdot O(\OPT/\sqrt{nm}) = O(\sqrt{m/n} \cdot \OPT)$. Hence,
\[
F(\NSJF(S_h\cup K_h',\vec{b_h}))
\le
F(\OPT^m(S_h\cup K_h'))
+
O\!\left(\frac{|S_h|+|K_h'|}{\sqrt{nm}}\right)\OPT,
\]
where $\OPT^m$ denotes the optimal preemptive schedule \emph{with} migration.

It remains to compare $F(\OPT^m(S_h\cup K_h'))$ with $F(\OPT(S_h))$. For this, consider an optimal preemptive schedule $\OPT(S_h)$. We insert the jobs in $K_h'$ into $\OPT(S_h)$ one by one, in a migratory fashion: for each job $j\in K_h'$, we scan forward from $r_j$ and process $j$ at the first available idle slot without delaying the original schedule. If $j$ overlaps a busy period, we interrupt it and resume it at the next idle slot, possibly on a different machine.

Let $F_j'$ denote the resulting flow time of job $j \in K_h'$ in this schedule. Since $j$ is only delayed during busy periods, its waiting time $F_j' - p_j$ is at most the total busy time, which is $\sum_{i \in S_h \cup K_h'} p_i / m$.

It remains to bound $|K_h'|$. By the restart rule, if a job belongs to $K_h$, then it must have been released while some machine was processing a currently unproxied large job, and hence before $t_h^{(\ell)}$. Therefore, every job in $K_h$ is already waiting at time $t_h^{(\ell)}$. By \Cref{lem:restart_block_num}, the number of waiting blocked small jobs is at most $\sqrt{nm}$, so 
\[
|K_h| = |K_h'| \le \sqrt{nm}.
\]
It follows that
\[
F(\OPT^m(S_h \cup K_h'))
\le
F(\OPT(S_h))
+
O\!\left(\sqrt{\frac{n}{m}}\right)\sum_{j \in S_h \cup K_h'} p_j.
\]

Combining the above bounds and summing over all $h$, we get
\begin{align*}
\sum_{j\in S} F_j
&\le \sum_{h=1}^H F(\NSJF(S_h\cup K_h',\vec{b_h})) \\
&\le \sum_{h=1}^H F(\OPT^m(S_h\cup K_h'))
   + O\!\left(\sum_{h=1}^H \frac{|S_h|+|K_h'|}{\sqrt{nm}}\right)\OPT \\
&\le \sum_{h=1}^H F(\OPT(S_h))
   + O\!\left(\sum_{h=1}^H \frac{|S_h|+|K_h'|}{\sqrt{nm}}\right)\OPT \\
&\quad + O\!\left(\sqrt{\frac{n}{m}}\right)
   \sum_{h=1}^H \sum_{j\in S_h\cup K_h'} p_j.
\end{align*}

Finally, the sets $\{S_h\}_{h=1}^H$ partition $S$, and the sets $\{K_h'\}_{h=1}^H$ partition $K'$. Thus,
\[
\sum_{h=1}^H F(\OPT(S_h)) \le F(\OPT(S)),
\qquad
\sum_{h=1}^H (|S_h|+|K_h'|) \le n.
\]
The total processing volume of the jobs in $\bigcup_h (S_h\cup K_h')$ is at most that of the original jobs in $J$, giving
\[
\sum_{h=1}^H \sum_{j\in S_h\cup K_h'} p_j \le O(\OPT).
\]
Therefore,
\[
\sum_{j \in S} F_j
\le
O\!\left(\sqrt{\frac{n}{m}}\right)\OPT.
\]
\end{proof}

\begin{lemma}[Blocked Small Job]
\label{lem:dblocked-small-job}
Let $K$ be the set of blocked small jobs. Then the total flow time of jobs in $K$ is bounded by:
\[
\sum_{j \in K} F_j \leq O\left( \sqrt{\frac{n}{m}} \right) \cdot \OPT.
\]
\end{lemma}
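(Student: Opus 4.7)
The plan is to mirror the two-stage decomposition used in the proof of \Cref{lem:dunblocked-small-job}, isolating a waiting-time sum whose bound requires the kill-and-restart rule. For each $j \in B$ I would write $F_j = p_j + (s_j - r_j)$. The processing-time sum is immediate: by \Cref{lem:partition_restart}, every small job has $p_j \leq O(\OPT/\sqrt{nm})$, so $\sum_{j \in B} p_j \leq n \cdot O(\OPT/\sqrt{nm}) = O(\sqrt{n/m}) \cdot \OPT$. The bulk of the effort then goes into bounding $\sum_{j \in B}(s_j - r_j)$.

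To handle the total wait time, I would introduce for each $j \in B$ an ``unblocking time'' $t^*_j$: the earliest moment at or after $r_j$ when the large job blocking $j$ at its release either finishes naturally or is killed by the restart rule. Decompose the wait as $(t^*_j - r_j) + (s_j - t^*_j)$. For the post-unblocking portion $\sum_{j \in B}(s_j - t^*_j)$, I would construct a shifted instance $B'$ with release times $r'_j = t^*_j$, note that from $t^*_j$ onward our algorithm's behavior on $B \cup S \cup U$ coincides with \NSJF on $B' \cup S \cup U'$, and segment the timeline between large-job events. Applying the generalized \NSJF analysis (\Cref{lem:SJFgeneral}) in each segment, along with the same instance-weakening and insertion-into-\OPT arguments used in \Cref{lem:dunblocked-small-job}, this portion contributes at most $O(\sqrt{n/m}) \cdot \OPT$.

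The genuinely new piece is the blocker-active wait $\sum_{j \in B}(t^*_j - r_j)$, which I would bound by an epoch-based charging argument tied directly to the killing rule. Partition the timeline into epochs delimited by counter resets; each reset is triggered either by a kill event or by the start of a new large job. Within any single epoch, the counter rule caps the number of blocked arrivals at $\sqrt{nm}$, and each such job waits at most the remaining epoch length before its blocker terminates. The total wall-clock duration across all epochs during which at least one large job is processing on an \All machine is bounded by the aggregate machine-time spent on large jobs; since the number of kill events is at most $|B|/\sqrt{nm} \leq \sqrt{n/m}$, each large job can restart at most $O(\sqrt{n/m})$ times, so this aggregate machine-time is $O(\sqrt{n/m}) \cdot \sum_L p_L \leq O(\sqrt{n/m}) \cdot \OPT$.

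The main obstacle is that a naive product (max pending blocked) $\cdot$ (total blocker-active time) $= \sqrt{nm} \cdot O(\sqrt{n/m}) \cdot \OPT = O(n)\cdot \OPT$ is too loose by a factor of $m$. Closing this gap requires a finer accounting that exploits the continuous drainage of the blocked queue by the \Small machines (always free to serve small jobs) and, immediately after kill events, by the just-freed \All machines. The intended bookkeeping should show that the time-averaged number of pending blocked jobs is much smaller than the peak $\sqrt{nm}$, so that charging each blocked job's wait to the specific processing of its blocker and then applying $\sum_L p_L \leq \OPT$ together with the small-job size bound $p_j \leq O(\OPT/\sqrt{nm})$ yields the claimed $O(\sqrt{n/m}) \cdot \OPT$.
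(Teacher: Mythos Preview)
Your decomposition is far more elaborate than what is needed, and the gap you yourself flag at the end is real: the epoch-based charging against ``aggregate machine-time spent on large jobs'' loses a factor of $m$ and cannot be rescued by arguing that the time-averaged backlog is below the peak $\sqrt{nm}$.

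The paper's argument is a two-line product, and the key observation you are missing concerns the \Small machines rather than the blockers. First, whenever \emph{any} small job $j$ (blocked or not) is waiting, \emph{all} $\lfloor m/2\rfloor$ \Small machines must be busy, since otherwise one of them would immediately start $j$. Hence the set of times at which some blocked small job is waiting has total length at most $(\text{total processing time})/\lfloor m/2\rfloor = O(\OPT/m)$. Second, at every such time the number of waiting blocked small jobs is at most $\sqrt{nm}$: a new large job can start on an \All machine only when no small job is waiting (small jobs are smaller and would be picked first), so at that moment the waiting blocked count is zero; between large-job starts the kill rule caps the counter at $\sqrt{nm}$. Integrating,
\[
\sum_{j\in B}(s_j-r_j)\;=\;\int_0^\infty \#\{\text{waiting blocked jobs at }t\}\,dt \;\le\; \sqrt{nm}\cdot O\!\left(\frac{\OPT}{m}\right)\;=\;O\!\left(\sqrt{\frac{n}{m}}\right)\cdot\OPT,
\]
and adding $\sum_{j\in B}p_j\le\OPT$ finishes.

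So there is no need to split the wait at an ``unblocking time'' $t^*_j$, no need to re-run the \NSJF machinery of \Cref{lem:dunblocked-small-job} on a shifted instance $B'$, and no need to count restarts of large jobs. The factor of $m$ you are looking for comes not from bounding the time large jobs occupy \All machines, but from the fact that waiting small jobs force all \Small machines to be simultaneously busy. You gesture at ``drainage by the \Small machines'' in your final paragraph, but the point is simpler than a drainage-rate argument: it directly shrinks the time measure from something like $O(\sqrt{n/m})\cdot\OPT$ down to $O(\OPT/m)$.
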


\begin{proof}
Consider an arbitrary blocked small job $j$. By our scheduling rule, $j$ can only experience waiting time when all \SmallOnly machines are busy, since $j$ is classified as a small job. During the entire interval $[r_j, s_j)$, all \SmallOnly machines must be fully occupied. We refer to this as a \emph{busy time} of \SmallOnly machines.

Due to the restart rule, at any busy time $t$ of \SmallOnly machines, the number of blocked small jobs still waiting is at most $\sqrt{nm}$ by \Cref{lem:restart_block_num}.

Since there are $\lfloor m/2 \rfloor$ \SmallOnly machines and they are fully occupied during busy periods, the total length of all busy intervals is at most
\[
\frac{2 \cdot \OPT}{m - 1}.
\]
Therefore, if $m \geq 2$, the total waiting time of all blocked small jobs is bounded by
\[
O\left( \sqrt{\frac{n}{m}} \right) \cdot \OPT.
\]
We remark that when $m = 1$, there is no \SmallOnly machine, and this argument no longer holds. Finally, the total flow time of all blocked small jobs is:
\[
\sum_{j \in K} F_j = \sum_{j \in K} (s_j - r_j + p_j) \leq O\left( \sqrt{\frac{n}{m}} \right) \cdot \OPT.
\]

\end{proof}

For each large job $j$,
we only need to analyze the modified flow time:
\[
\hat{F}_j = 
\begin{cases}
r_{j'} - r_j & \text{if $j$ is proxied by $j'$} \\
C_j - r_j & \text{if $j$ is unproxied}
\end{cases}.
\]

\begin{lemma}[Large Job]
\label{lem:dlarge-job}
    For the large-job set $L$, we have $\sum_{j\in L} \hat{F}_j \leq O(\sqrt{n / m}) \cdot \OPT$.
    \end{lemma}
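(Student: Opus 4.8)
The plan is to bound $\sum_{j \in L} \hat{F}_j$ by decomposing the modified flow time of each large job into its processing time $p_j$, which sums to at most $\OPT$, and its waiting time (the interval $[r_j, s_j)$ for an unproxied job, or $[r_j, r_{j'})$ for a proxied one, plus the small extra time a large job may spend being killed and restarted). The key is to argue that during any moment $t$ at which a large job is waiting, either all \All machines are busy, or the \emph{number} of waiting active large jobs is already within an $O(\sqrt{n/m})$ factor of how many \All machines are processing jobs, mirroring the argument in \Cref{lem:det-no-restart-large}. Using the bound $|\LargeSet| \le \ell = \lfloor\sqrt{nm}\rfloor$ from \Cref{lem:partition_restart}, together with the fact that \Small machines are fully occupied whenever a large job waits (since a large job only waits when \All machines are full, and the total busy volume on \All machines is at most $\OPT$), gives an integral bound of the form $\int_0^\infty k(t)\,dt \le O(\sqrt{n/m})\cdot\OPT$, where $k(t)$ is the number of waiting active large jobs.

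The steps I would carry out, in order: first, write $\sum_{j \in L}\hat F_j \le \sum_{j\in L} p_j + \int_0^\infty k(t)\,dt + (\text{total work lost to kills})$, where $k(t)$ counts active large jobs waiting at time $t$; the middle term accounts for the aggregated waiting time of large jobs (including any retired-while-waiting interval up to the proxy release). Second, bound the kill overhead: each kill event is triggered only after $\sqrt{nm}$ blocked small jobs accumulate, and these blocked small jobs have flow time already charged in \Cref{lem:dblocked-small-job}; each kill wastes at most the processing done on the at-most-$(m-\lfloor m/2\rfloor)$ active large jobs currently running, each of size $O(\OPT/\ell)$ by \Cref{lem:partition_restart}, and the number of kill events is at most $n/\sqrt{nm} = \sqrt{n/m}$ since each consumes a disjoint batch of $\sqrt{nm}$ blocked small jobs — so the total wasted work is $O(\sqrt{n/m})\cdot m \cdot \OPT/\sqrt{nm} = O(\sqrt{n/m})\cdot\OPT$; actually more carefully one charges restarted work to the progress of the restarted job, so I would argue each large job is restarted $O(\sqrt{n/m})$ times total (bounded by the number of kill events) and each restart costs at most $p_j$, giving $O(\sqrt{n/m})\sum_{j\in L} p_j = O(\sqrt{n/m})\cdot\OPT$. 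Third, bound $\int_0^\infty k(t)\,dt$: at any $t$ with $k(t)\ge 1$, Machine~1-analog (the \All machines) — here we use that whenever a large job waits, all \All machines are processing, so the number of \All machines processing is $m-\lfloor m/2\rfloor = \Theta(m)$; since $k(t) \le \sqrt{nm} = O(\sqrt{n/m})\cdot m = O(\sqrt{n/m})\cdot(\text{number of busy \All machines})$, and the total processing volume on \All machines is at most $\sum_{j\in J}p_j \le \OPT$, we get $\int_0^\infty k(t)\,dt \le O(\sqrt{n/m})\cdot\OPT$. Combining the three pieces yields $\sum_{j\in L}\hat F_j \le O(\sqrt{n/m})\cdot\OPT$.

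The main obstacle I anticipate is handling the kill-and-restart overhead cleanly: a large job may be killed and restarted several times, and each aborted attempt both wastes the work already done and potentially blocks small jobs during that attempt. I would need to argue that the waiting/processing time accrued across all failed attempts of a single large job $j$ is still $O(\sqrt{n/m})\cdot p_j$ in aggregate over $L$ — the cleanest route is to observe that a large job is killed only inside a kill event, that there are at most $\lceil n/\sqrt{nm}\rceil = O(\sqrt{n/m})$ kill events overall (each triggered by a fresh disjoint batch of $\sqrt{nm}$ blocked small-job releases), hence each large job survives at most $O(\sqrt{n/m})$ kill events and each failed attempt contributes at most $p_j$ of wasted processing time, so the total wasted large-job processing is $O(\sqrt{n/m})\sum_{j\in L}p_j \le O(\sqrt{n/m})\cdot\OPT$. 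A secondary subtlety is that a committed large job (one that already started and then was ranked out) is not proxied and still must be scheduled; it contributes to $\hat F_j = C_j - r_j$, and its waiting time is already captured by the $\int k(t)\,dt$ argument since it counts among active-or-committed large jobs that the \All machines serve — I would fold committed jobs into the same integral bound by noting committed jobs also force \All machines to be busy while they wait, exactly as in \Cref{lem:det-no-restart-large}.
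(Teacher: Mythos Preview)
Your decomposition into processing time, waiting time $\int_0^\infty k(t)\,dt$, and wasted work is natural, but the third step contains a real gap. You assert that ``the total processing volume on \All\ machines is at most $\sum_{j\in J}p_j \le \OPT$,'' and this is precisely what fails in the kill-and-restart model: the busy machine-time on \All\ machines includes all the work spent on large jobs that are later killed, so it can exceed $\sum_j p_j$ by exactly the wasted-work term you bounded separately. If you patch the argument by adding the wasted work back in, the total busy machine-time on \All\ machines becomes at most $\OPT + O(\sqrt{n/m})\cdot\OPT$, and then
\[
\int_0^\infty k(t)\,dt \;\le\; \frac{\sqrt{nm}}{\Theta(m)}\cdot\bigl(\OPT + O(\sqrt{n/m})\cdot\OPT\bigr)\;=\;O(n/m)\cdot\OPT,
\]
which is a $\sqrt{n/m}$ factor too weak. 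The double-counting is unavoidable here: you cannot charge the wasted work once to $\hat F_j$ and then pretend it is absent when bounding the busy volume that the waiting jobs see.

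The paper avoids this multiplicative blow-up by a different mechanism that you did not anticipate. It first shifts each large job's ``effective release'' to $t'_j = r_j + \sqrt{n/m}\cdot p_j$ (absorbing $O(\sqrt{n/m})\cdot\OPT$ up front), and then splits the remaining waiting time at each moment $t\in[t'_j,t_j)$ into \emph{genuine} busy time (at least $\lceil m/4\rceil$ \All\ machines are running jobs that will not be killed) versus \emph{fake} busy time. Genuine busy time is handled exactly as you propose, with the busy volume now honestly bounded by $\OPT$. For fake busy time the paper uses the new size-based property you never invoke --- Property~4 of \Cref{lem:partition_restart}, namely $p_j \ge 4P(r_j)/\ell$ --- to show that once $t\ge t'_j$, the next kill event $\beta_i$ satisfies $\beta_i - t \le p_j$. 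Hence each of the $O(\sqrt{n/m})$ kill events contributes at most $\sum_{j\in L}p_j\le\OPT$ to the aggregate fake-busy delay, giving $O(\sqrt{n/m})\cdot\OPT$ directly. This is where the refined partition rule (\Cref{alg:partition_restart}) earns its keep; your argument, modeled on \Cref{lem:det-no-restart-large}, never uses it and cannot reach the target bound.
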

\begin{proof}
We focus on the delay interval $[r_j, t_j)$, where
$$
t_j = \begin{cases}
r_{j'} & \text{if $j$ is proxied by $j'$,} \\
s_j & \text{otherwise,}
\end{cases}
$$
and aim to bound $t_j - r_j$. We begin by showing that shifting the release time $r_j$ by $\sqrt{n/m} \cdot p_j$ is acceptable. Let $t'_j = r_j + \sqrt{n/m} \cdot p_j$. Then:
\begin{equation}
\label{eqn:restart_large_shifting}
\sum_{j \in L} (t'_j - r_j) = \sqrt{\frac{n}{m}} \cdot \sum_{j \in L} p_j \leq \sqrt{\frac{n}{m}} \cdot \OPT.
\end{equation}

We now analyze the remaining delay interval $[t'_j, t_j)$. For each time $t \in [t'_j, t_j)$, we classify it into one of the following two categories:
\begin{enumerate}
    \item[$(\checkmark)$] \textbf{Genuine busy time:} All \All machines are busy at $t$, and at least $\lceil m/4 \rceil$ of the jobs running on \All machines are not restarted later.
    \item[$(\times)$] \textbf{Fake busy time:} All \All machines are busy at $t$, but more than $\lceil m/4 \rceil$ of the jobs running on \All machines will be restarted later.
\end{enumerate}

We decompose the delay $t_j - t'_j$ into two parts: $F_j^{(\checkmark)}$ and $F_j^{(\times)}$, corresponding to time spent in genuine busy and fake busy intervals, respectively:
\[
t_j - t'_j = F_j^{(\checkmark)} + F_j^{(\times)}.
\]

\paragraph{Bounding $F_j^{(\checkmark)}$ (genuine busy).}  
At any genuine busy time, at least $\lceil m/4 \rceil$ machines are processing jobs that are not restarted. The total duration of such intervals is at most $O(\OPT / m)$. Since there are at most $\ell = \lfloor\sqrt{nm}\rfloor$ large jobs at any time by \Cref{lem:partition_restart}, we have:
\begin{equation}
\label{eqn:restart_large_busy}
\sum_{j \in L} F_j^{(\checkmark)} \leq \sqrt{nm} \cdot O\left( \frac{\OPT}{m} \right) = O\left( \sqrt{\frac{n}{m}} \cdot \OPT \right).
\end{equation}

\paragraph{Bounding $F_j^{(\times)}$ (fake busy).}  
Next, we analyze the contribution of fake busy intervals induced by the kill events. Let the $i$-th kill event occur at time $\beta_i$. The key observation is the following.

\begin{claim}
\label{claim:fake_busy}
For any $t \in [t'_j, t_j)$ that is a fake busy time between consecutive kill events $\beta_{i-1}$ and $\beta_i$, we have $\beta_i - t \leq p_j$.
\end{claim}
\begin{proof}[Proof of \Cref{claim:fake_busy}]
Since $t$ is a fake busy time, more than $\lceil m/4 \rceil$ jobs are restarted at $\beta_i$. Suppose, for contradiction, that $\beta_i - t > p_j$. Then these jobs must have been running continuously since $t$, implying that their processing times are at least $p_j$. By our algorithm, job $j$ had not yet been released when the algorithm started one of these larger jobs. Therefore, letting $U$ denote this set of jobs, we have:
\[
\beta_i < r_j + \min_{u \in U} p_u.
\]
However, by the partition rule $p_j \geq 4P(r_j)/\ell$ and $\ell = \lfloor\sqrt{nm}\rfloor$, we have $t'_j = r_j + \sqrt{n/m} \cdot p_j \geq r_j + 4P(r_j)/m$. Since the jobs in $U$ were released before $r_j$, their total size is at most $P(r_j)$. Moreover, $|U| > \lceil m/4 \rceil$, so:
\[
\frac{4\sum_{u \in U} p_u}{m} \geq \frac{4 \cdot |U| \cdot \min_{u \in U} p_u}{m} > \min_{u \in U} p_u.
\]
Combining, $$t'_j \geq r_j + 4P(r_j)/m \geq r_j + 4\sum_{u \in U} p_u / m > r_j + \min_{u \in U} p_u > \beta_i.$$
This gives $\beta_i - t \leq \beta_i - t'_j < 0$, a contradiction.
\end{proof}

By \Cref{claim:fake_busy}, each large job $j$ contributes at most $p_j$ to $F_j^{(\times)}$ per kill event. Thus, for each kill event, the total contribution to $\sum_j F_j^{(\times)}$ is at most:
\[
\sum_{j \in L} p_j \leq \OPT.
\]

Finally, we bound the number of kill events. Since each kill is triggered only after at least $\lfloor \sqrt{nm} \rfloor$ new blocked small jobs are released, the total number of kill events is at most $O(\sqrt{n/m})$. Therefore:
\begin{equation}
\label{eqn:restart_large_perrestart}
\sum_{j \in L} F_j^{(\times)} \leq \sqrt{\frac{n}{m}} \cdot \OPT.
\end{equation}

Combining \eqref{eqn:restart_large_shifting}, \eqref{eqn:restart_large_busy}, and \eqref{eqn:restart_large_perrestart}, we obtain:
\[
\sum_{j \in L} \hat{F}_j \leq \sum_{j \in L} \left( p_j + t_j - t'_j + t'_j - r_j \right) = O\left( \sqrt{\frac{n}{m}} \right) \cdot \OPT.
\]
\end{proof}

\begin{theorem}
\label{thm:det_restart}
    \Cref{alg:multi-deterministic-restart-m} is an online polynomial-time deterministic algorithm with kill-and-restart for total flow time minimization that is $O(\sqrt{n/m})$-competitive against the preemptive offline solution, provided that $m\geq 2$.
\end{theorem}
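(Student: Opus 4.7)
The plan is to assemble the bound directly from the three component lemmas that have already been established: \Cref{lem:dunblocked-small-job} for unblocked small jobs, \Cref{lem:dblocked-small-job} for blocked small jobs, and \Cref{lem:dlarge-job} for large jobs. Each of these already yields an $O(\sqrt{n/m}) \cdot \OPT$ bound, so the remaining task is a bookkeeping step: verify that together they cover every unit of flow time the algorithm pays on the real job set $J$.

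The first step is to establish the flow-time decomposition identity
\[
\sum_{j \in J} F_j \;=\; \sum_{j \in L} \hat{F}_j \;+\; \sum_{j \in S} F_j \;+\; \sum_{j \in B} F_j,
\]
where $L$ is the set of large jobs (each counted exactly once, regardless of whether it is eventually proxied, retired, or killed and restarted), and $S$, $B$ are the unblocked and blocked small jobs, both including proxy jobs. The subtlety is that when a large job $j$ is proxied by $j'$, the kill-and-restart scheduler actually processes $j$ at the slot allocated to $j'$, so the real flow time $C_j - r_j = (r_{j'} - r_j) + (C_{j'} - r_{j'})$ splits cleanly into $\hat{F}_j$ (the waiting phase before the proxy is created) and $F_{j'}$ (collected in either $S$ or $B$ depending on whether $j'$ was blocked). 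For unproxied large jobs $\hat{F}_j = F_j$ by definition. This ensures the identity holds without double counting, and each term on the right has already been bounded by $O(\sqrt{n/m}) \cdot \OPT$.

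Summing the three bounds immediately gives $F(\ALG) \leq O(\sqrt{n/m}) \cdot \OPT$. Two final observations close out the proof: polynomial running time follows because each job release, completion, or kill event triggers only the partitioning routine in \Cref{alg:partition_restart} and a constant-depth scan of the machines and waiting queue, and the total number of events is polynomial in $n$ (each job can be started and killed at most $O(\sqrt{n/m})$ times across the $O(\sqrt{n/m})$ kill rounds). The restriction $m \geq 2$ enters through \Cref{lem:dblocked-small-job}, which requires $\lfloor m/2 \rfloor \geq 1$ so that at least one \Small machine exists and can be used to argue that the total busy time on \Small machines is $O(\OPT/m)$; with $m = 1$ this decomposition collapses, consistent with the $\Omega(n/\log n)$ single-machine lower bound described in the introduction.

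I do not anticipate a genuine obstacle at this stage: the conceptual and technical work lives in the three lemmas above. The only point to be careful about is the decomposition identity, specifically ensuring that every proxied large job contributes exactly once to the large-job sum (via $\hat{F}_j$) and exactly once to the small-job sum (via the flow time of its proxy $j'$), and that no real job is missed or double-counted.
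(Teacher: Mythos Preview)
Your proposal is correct and mirrors the paper's own proof exactly: the paper simply invokes \Cref{lem:dunblocked-small-job}, \Cref{lem:dblocked-small-job}, and \Cref{lem:dlarge-job} and writes the single line $\sum_{j \in J} F_j \leq \sum_{j \in S} F_j + \sum_{j \in B} F_j + \sum_{j \in L} \hat{F}_j \leq O(\sqrt{n/m}) \cdot \OPT$. Your added remarks on the decomposition identity, polynomial running time, and the role of $m \ge 2$ are accurate elaborations that the paper leaves implicit.
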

\begin{proof}
    By \Cref{lem:dunblocked-small-job}, \Cref{lem:dblocked-small-job}, and \Cref{lem:dlarge-job}.
    \[
    \sum_{j \in J} F_j \leq \sum_{j \in S} F_j + \sum_{j \in K} F_j + \sum_{j \in L} \hat{F}_j = O\left( \sqrt{\frac{n}{m}} \right) \cdot \OPT. 
    \]
\end{proof}

\newpage

\section{Deterministic Algorithm with Kill-and-Restart without the Knowledge of \texorpdfstring{$n$}{n}}
\label{sec:unknownn_det_restart}

In this section, we extend the $O(\sqrt{n/m})$-competitive kill-and-restart algorithm of \Cref{sec:det-restart} to the setting where the total number of jobs $n$ is unknown in advance. The resulting algorithm achieves a competitive ratio of $O(n^{\alpha}/\sqrt{m})$, where $\alpha = (\sqrt{5}-1)/2$.

The overall framework is the same: we keep the machine partition into \SmallOnly\ and \All\ machines, and we still kill large jobs once enough small jobs are blocked. The key challenge is in the partition. Recall that in the known-$n$ setting, the threshold $\ell$ was fixed, making the partition nearly stable: a small job could never become large (since $\ell$ does not grow), and the only reclassification was in the opposite direction --- a large job could be retired and handled via a proxy job.

Without knowledge of $n$, the threshold $\ell(t)$ must depend on the number of jobs $n(t)$ released so far, and it increases over time. This means a job's classification can oscillate: for example, a job of rank $4$ is small when $\ell = 3$, becomes large when $\ell$ grows to $4$, and may become small again when $\ell$ reaches $5$ and larger jobs take its place. This instability invalidates the previous analysis.

To address this, we make two modifications, described below and formalized in \Cref{alg:multi-deterministic-restart-m-unknownn}.

\paragraph{Dynamic partition rule.}
We round $n(t)$ up to the nearest power of two, $N = 2^{\lceil \log_2 n(t) \rceil}$, and set the threshold $\ell = \lfloor N^{\alpha}\sqrt{m} \rfloor$. At every time $t$, a job $j$ is classified as \emph{large} if it is among the top $\ell$ jobs by processing time and $p_j \ge 4P(t)/\ell$, where $P(t)$ is the total processing time of all jobs released by time $t$; otherwise, it is classified as \emph{small}. The rounding ensures that $\ell$ changes only $O(\log n)$ times.

\paragraph{Additional kill event on threshold update.}
When $N$ doubles, some jobs currently running on \SmallOnly machines may now be classified as large (since $\ell$ has increased and new jobs may have displaced them in rank). We kill all such jobs immediately. This is a new type of kill event, in addition to the existing one triggered when too many small jobs are blocked.

\begin{algorithm} [htbp]
    \caption{Deterministic Algorithm with Kill-and-Restart on $m$ Machines, without the knowledge of $n$}
    \label{alg:multi-deterministic-restart-m-unknownn}
    \KwData{Estimate of $n$: $N$ initialized to $1$; }
    \SetKwBlock{OnJobRelease}{\textbf{On job $j$ release at $t$:}}{}
    \SetKwBlock{OnEvent}{\textbf{On any job completion or after any job release:}}{}
    \OnJobRelease
    {
        \If{the number of jobs exceeds $N$}
        {
            $N \gets 2N$\;
            Mark $N$ as \textbf{updated} \;
        }
        \tcp{Update the Partition of all jobs again based on the current number of jobs $N$ and the current sum of processing time $P$.}
        $\ell =  \lfloor N^{\alpha}\sqrt{m} \rfloor$ where $\alpha = (\sqrt{5}-1)/2$ \;
        Call every job $i
        \begin{cases}
            \textbf{currently large} & \text{it is a top-$\ell$ large job and $p_i \geq 4P/\ell$} \\
            \textbf{currently small} & \text{otherwise} 
        \end{cases} $\;
        \If{$N$ is \textbf{updated} by the release of job $j$}
        {
           Kill all currently large jobs processing on \SmallOnly machines \;
        }
        \If{there exists a processing job which is currently large} 
        {
            $\phi \gets$ the number of waiting jobs that are currently small \; 
            \If{$\phi > \ell$}
            {
                Kill all processing jobs that are currently large. 
            }
        }
    }
    \OnEvent
        {  
            \For {each idle \SmallOnly machine $i$} {
                \If{ there exists a waiting currently small job }{
                    $j \gets$ the smallest currently small job \;  
                    Schedule $j$ on $i$\;
                }
            }
            \For {each idle \All machine $i$} {
                \If{ there exists a waiting job }{
                    $j \gets$ the smallest waiting job \;
                    Schedule $j$ on $i$ \;
                }
            }
        }
\end{algorithm}

\medskip
Next, we move on to the analysis. 
For each job $j$, let $T_j(t) \in \{\text{Large}, \text{Small}\}$ denote the state of job $j$ at time $t$.
Let $C_n$ be the completion time of the last completed job, which marks the end of the instance.  
Define $\theta_j$ as the earliest time such that $T_j(t) = \text{Small}$ for all $t \in [\theta_j, C_n)$.
The interval $[\theta_j, C_n)$ is called the \emph{last small period} of job $j$.

We divide the flow time of each job $j$ into waiting time and its processing time, i.e., $[r_j,s_j) \cup [s_j,C_j)$. Note that to prove the total flow time is in $O(n^\alpha/\sqrt{m}) \cdot \OPT$, we only need to discuss the total waiting time, since the total processing time is bounded by $\OPT$. Furthermore, we divide the waiting time of each job into two parts with respect to the last small periods as follows. 

\begin{itemize}
    \item \textbf{Large waiting time} ($F_j^{(1)}$): the waiting time before the last small period, i.e.,
    \[
    F_j^{(1)} = |[r_j, s_j) \cap [r_j, \theta_j)|.
    \]
    \item \textbf{Small waiting time} ($F_j^{(2)}$): the waiting time during the last small period, i.e.,
    \[
    F_j^{(2)} = |[r_j, s_j) \cap [\theta_j, C_n)|.
    \]
\end{itemize}
By this definition, the total flow time of job $j$ is
\[
F_j = F_j^{(1)} + F_j^{(2)} + p_j.
\]
If job $j$ is a \emph{final large job} (i.e., $T_j(t) = \text{Large}$ at the end of the instance),
then it has no last small period, and thus its entire waiting time equals its large waiting time:
\[
F_j = F_j^{(1)}, \quad F_j^{(2)} = 0.
\]

Note that in the version where $n$ is known, a large job can change its state to \emph{small} at most once, and a small job can never become \emph{large}. 
In that setting, we introduced the notion of a \emph{proxy job} to decompose the flow time. 
A similar idea applies here: if a large job becomes small exactly once, then the proxy job's waiting time corresponds precisely to the \emph{small waiting time} $F_j^{(2)}$ defined above, while the remaining part of the original job's waiting time corresponds to the \emph{large waiting time} $F_j^{(1)}$.

Then we bound $F_j^{(1)}$ and $F_j^{(2)}$ separately. 
The analysis for $F_j^{(1)}$ is similar to the case where $n$ is known. 
Note that we have two types of kill events. 
One occurs when \emph{$N$ is updated}, which we call a \textbf{type-A} event; 
the other occurs when \emph{$\phi$ becomes large}, which we call a \textbf{type-B} event. 
The number of type-B kill events can be bounded similarly to the known-$n$ case, 
with the help of the rounded benchmark value $N$.

\begin{lemma}
    The number of type-B kill events in the algorithm is at most $O(n^{1-\alpha}/\sqrt{m})$.
\end{lemma}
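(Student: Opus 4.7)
The plan is to partition the execution into $O(\log n)$ phases, where phase $k$ is the maximal interval during which $N = 2^k$. Within phase $k$ the threshold is fixed at $\ell_k = \lfloor 2^{k\alpha}\sqrt{m}\rfloor$. Letting $B_k$ denote the number of type-B events in phase $k$, I will prove $B_k = O(2^{k(1-\alpha)}/\sqrt{m})$ and then sum geometrically over $k$.

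To bound $B_k$, the key observation I would first establish is that between any two consecutive type-B events in phase $k$, the waiting currently-small queue must momentarily empty. Indeed, a type-B event at time $t_i$ kills every processing currently-large job and frees the corresponding \All machines. Since a currently-large job has $p_j \ge 4P/\ell$ and lies in the top-$\ell$ set while a currently-small job either fails that size threshold or sits outside the top-$\ell$, every currently-small job is no larger than any currently-large job. The ``smallest waiting job first'' rule on \All machines therefore guarantees that no \All machine can pick up a currently-large job while any currently-small job is still waiting. Hence the next type-B event, which requires a processing currently-large job as a precondition, can fire only after $\phi$ has dropped to $0$ at some intermediate moment.

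Consequently, each type-B event in phase $k$ is immediately preceded by an interval during which $\phi$ grows from $0$ to more than $\ell_k$, which requires at least $\ell_k$ fresh \emph{entries} into the currently-small-waiting set. I would classify these entries as either (a) a new release immediately classified as currently-small, or (b) a transition of a previously currently-large waiting job into the currently-small state. Because $\ell_k$ is fixed in phase $k$, $P$ is non-decreasing, and the top-$\ell_k$ membership is monotone (a job can leave it only when a strictly larger job arrives, never re-enter while $\ell_k$ is fixed), each job can transition from currently-large to currently-small at most once in phase $k$. Combining the at most $2^k$ transitions with the at most $2^{k-1}$ new arrivals within phase $k$, the total number of entries in phase $k$ is $O(2^k)$.

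Dividing by the per-event requirement of $\ell_k = \Theta(2^{k\alpha}\sqrt{m})$ gives $B_k = O(2^{k(1-\alpha)}/\sqrt{m})$, and summing over $k = 0, 1, \dots, \lceil \log_2 n \rceil$ yields the desired $O(n^{1-\alpha}/\sqrt{m})$, since $1 - \alpha > 0$ makes the geometric series dominated by its last term. I expect the main obstacle to be the ``empty queue'' step: one must verify rigorously that the size comparison between currently-large and currently-small jobs, together with the scheduler's ``smallest first'' rule, blocks any \All machine from picking up a currently-large job until $\phi$ hits $0$, and that this argument is unaffected by \Small-machine activity or by type-A kills that may occur at phase boundaries. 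Edge cases --- e.g., phases with $\ell_k < m$ or jobs inherited from earlier phases --- contribute only additive $O(\log n)$ events, which are absorbed into the final bound.
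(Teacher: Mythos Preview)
Your approach is correct and essentially the same as the paper's: both partition into phases by the value of $N$, show that the creator sets of consecutive type-B events within a phase are disjoint (your ``$\phi$ drops to $0$'' step is exactly the paper's observation that a creator must be scheduled before any large job restarts), bound the total pool by $N$, and sum geometrically. The paper's counting is slightly more direct---rather than splitting the pool into new arrivals plus large-to-small transitions and then handling inherited jobs as an edge case, it simply observes that there are at most $N = x$ jobs total while $N = x$ and each can be a waiting currently-small ``creator'' at most once, giving $B_k \le x/\ell_k = x^{1-\alpha}/\sqrt{m}$ immediately without the extra bookkeeping.
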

\begin{proof}
We know that if a type-B kill event occurs at time $t$, there must be at least $\lfloor N(t)^{\alpha} \sqrt{m} \rfloor$ small jobs waiting, where $N(t)$ denotes the value of $N$ at time $t$. We interpret these small jobs as the creators of this kill event.

Note that when a small job $j$ creates a type-B kill event at time $t$, it remains small until $N$ is updated. Hence, after the kill event at $t$ and before the next update of $N$, no large job will be processed on any machine unless we start job $j$. On the other hand, if we start $j$, it will not be killed again until $N$ is updated, since $j$ stays small during this period. Therefore, job $j$ cannot trigger another type-B kill event before $N$ changes.

Consequently, the number of kill events that occur while $N = x$ is at most $x^{1-\alpha}/\sqrt{m}$. Since $N$ increases geometrically by a factor of $2$, the total number of kill events is bounded by
$
O\!\left(\frac{n^{1-\alpha}}{\sqrt{m}}\right),
$
as $N \le 2n$.
\end{proof}

Using the upper bounds of the number of kill events, we are able to provide an upper bound for the large waiting time of all jobs (i.e., $F_j^{(1)}$). 

\begin{lemma}
\label{lem:unknownn_large_waiting}
For the large waiting time, we have
\[
\sum_{j \in J} F^{(1)}_j \le O\!\left(\frac{n^{\alpha}}{\sqrt{m}}\right) \OPT.
\]
\end{lemma}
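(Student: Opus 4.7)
The plan is to adapt the argument of \Cref{lem:dlarge-job} to the dynamic, unknown-$n$ setting. A total shift of $(n^{\alpha}/\sqrt{m})\OPT$ will be absorbed by a release-time shift, and the residual waiting time will be bounded by splitting it into genuine and fake busy contributions.

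Concretely, I define $t'_j = r_j + (n^{\alpha}/\sqrt{m})\,p_j$ and $t_j = \min(\theta_j, s_j)$, so that $F_j^{(1)} \le (t'_j - r_j) + (t_j - t'_j)_+$. Summing the shift across all jobs gives $\sum_j (t'_j - r_j) \le (n^{\alpha}/\sqrt{m})\sum_j p_j \le (n^{\alpha}/\sqrt{m})\OPT$, which is within budget. It remains to bound the residual $\sum_j (t_j - t'_j)_+$. I classify each time $t \in [t'_j, t_j)$ as \emph{genuine busy} (at least $\lceil m/4 \rceil$ of the \All machines are running jobs that are never killed) or \emph{fake busy} (more than $\lceil m/4 \rceil$ of the running jobs are later killed), exactly mirroring \Cref{lem:dlarge-job}.

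For the genuine busy part, the total duration of such intervals is $O(\OPT/m)$, since at each such time at least $\lceil m/4 \rceil$ \All machines process non-killed work whose total volume is at most $\OPT$. At every such $t$, the number $Y(t)$ of waiting jobs $j$ with $\theta_j > t$ is at most $\ell_{\max} = O(n^{\alpha}\sqrt{m})$: any such $j$ must be classified large at some $t' \ge t$, which forces $j$'s rank at $t'$ to be at most $\ell(t') \le \ell_{\max}$; since a job's rank is monotone non-decreasing in time (new arrivals can only push $j$ further from the top), $j$'s rank at $t$ is also at most $\ell_{\max}$, and at any moment at most $\ell_{\max}$ jobs can have rank within this range. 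Multiplying gives a genuine busy contribution of $\ell_{\max}\cdot O(\OPT/m) = O(n^{\alpha}/\sqrt{m})\OPT$.

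For the fake busy part, the same per-kill argument as in \Cref{lem:dlarge-job} applies: if $t$ is fake busy with next kill event $\beta$, then $\beta - t \le p_j$ for the relevant waiting job $j$ (combining the size-based lower bound on $p_j$ from \Cref{lem:partition_restart} with the shift), so each kill event contributes at most $\sum_j p_j \le \OPT$ to the fake busy integral. The algorithm triggers at most $O(\log n)$ type-A kill events (one per doubling of $N$) and $O(n^{1-\alpha}/\sqrt{m})$ type-B events by the preceding lemma; since $\alpha = (\sqrt{5}-1)/2 > 1/2$, both counts are $O(n^{\alpha}/\sqrt{m})$, giving a fake busy contribution of $O(n^{\alpha}/\sqrt{m})\OPT$. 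Combining the three contributions proves the claim. The delicate step is the rank-monotonicity bound on $Y(t)$: because jobs oscillate between large and small across epochs, the count must include not only currently-large waiting jobs but also currently-small ones slated to become large in a later epoch, and the global ceiling $\ell_{\max}$ suffices precisely because rank is monotone and $\ell(t)\le\ell_{\max}$ at all times.
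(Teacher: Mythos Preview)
Your proposal is correct and takes essentially the same approach as the paper: shift release times by $\Theta(n^{\alpha}/\sqrt{m})\,p_j$, bound the number of waiting jobs outside their last small period by $O(n^{\alpha}\sqrt{m})$ via the rank-monotonicity argument, and split the residual waiting into genuine and fake busy contributions bounded respectively by $O(\OPT/m)\cdot O(n^{\alpha}\sqrt{m})$ and $(\text{number of kill events})\cdot\OPT$. One minor clean-up: since type-A kills act only on \Small machines, jobs running on \All machines can only be killed by type-B events, so the paper partitions the fake-busy analysis by type-B events alone and does not need your extra $O(\log n)$ type-A terms (your overcount is harmless given $m\le n$, and your citation of \Cref{lem:partition_restart} should technically point to the analogous size-based rule built into \Cref{alg:multi-deterministic-restart-m-unknownn}).
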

\begin{proof}
We first show that, for any time $t$, the number of released jobs that are not in their last small period is at most $O(n^{\alpha}\sqrt{m})$.  
Indeed, if a job $j$ is not in its last small period at time $t$, it must be large at some later time $t' > t$. 
This means that at time $t'$, $j$ ranks among the top-$\lfloor (2n(t'))^{\alpha}\sqrt{m} \rfloor$ largest jobs within a larger set of released jobs. 
Tracing back to the earlier time $t$, since the released job set is smaller, $j$ must also be among the top-$\lfloor (2n(t'))^{\alpha}\sqrt{m} \rfloor$ largest jobs at time $t$, 
and therefore also among the top-$\lfloor (2n)^{\alpha}\sqrt{m} \rfloor$ largest jobs because $n > n(t')$. 
Hence, at any time $t$, the number of released jobs not in their last small period is bounded by $O(n^{\alpha}\sqrt{m})$.

Next, we follow the same argument as in \Cref{lem:dlarge-job}.  
We shift each job’s release time to
\[
t'_j = r_j + \frac{(2n)^{\alpha}}{\sqrt{m}} \cdot p_j,
\]
and bound the waiting time starting from $t'_j$. It is enough because we have
\begin{equation}
\label{eqn:unknownn_restart_large_shifting}
\sum_{j \in J} (t'_j - r_j)
= \frac{(2n)^{\alpha}}{\sqrt{m}} \sum_{j \in J} p_j
\le \frac{(2n)^{\alpha}}{\sqrt{m}} \cdot \OPT.
\end{equation}
If the total waiting time after the shifted times $t'_j$ is already bounded by $O(n^{\alpha}/\sqrt{m}) \OPT$, then $\sum_{j \in J} F^{(1)}_j$ is also bounded by $O(n^{\alpha}/\sqrt{m}) \OPT$.

For each job $j$, let its large waiting interval be $[t'_j, t_j)$.  
For every time $t \in [t'_j, t_j)$, we classify it into two types:

\begin{enumerate}
    \item[$(\checkmark)$] \textbf{Genuine busy time:} All \All\ machines are busy at $t$, and at least $\lceil m/4 \rceil$ of the jobs running on \All\ machines are not restarted later.
    \item[$(\times)$] \textbf{Fake busy time:} All \All\ machines are busy at $t$, but more than $\lceil m/4 \rceil$ of the jobs running on \All\ machines will be restarted later.
\end{enumerate}

We decompose the delay $t_j - t'_j$ into two parts: $F_j^{(\checkmark)}$ and $F_j^{(\times)}$, corresponding to time spent in genuine busy and fake busy intervals, respectively:
\[
t_j - t'_j = F_j^{(\checkmark)} + F_j^{(\times)}.
\]

\paragraph{Bounding $F_j^{(\checkmark)}$ (genuine busy).}
During genuine busy intervals, at least $\lceil m/4 \rceil$ machines are processing jobs that are never restarted.  
The total duration of such intervals is at most $O(\OPT / m)$.  
Each interval can contribute to the waiting time of at most $O(n^{\alpha}\sqrt{m})$ different large jobs.  
Hence,
\begin{equation}
\label{eqn:unknownn_restart_large_busy}
\sum_{j \in J} F_j^{(\checkmark)} \le O\!\left(\frac{n^{\alpha}}{\sqrt{m}} \cdot \OPT \right).
\end{equation}

\paragraph{Bounding $F_j^{(\times)}$ (fake busy).}
Now consider fake busy intervals. These must be caused by type-B kill events, since type-A kill events only kill jobs on \SmallOnly machines. Let the $i$-th type-B kill event occur at time $\beta_i$. The key observation is the following.

\begin{claim}
\label{claim:unknownn_fake_busy}
For any $t \in [t’_j, t_j)$ that is a fake busy time between consecutive type-B kill events $\beta_{i-1}$ and $\beta_i$, we have $\beta_i - t \le p_j$.
\end{claim}
\begin{proof}[Proof of \Cref{claim:unknownn_fake_busy}]
Since $t$ is a fake busy time, more than $\lceil m/4 \rceil$ jobs are restarted at $\beta_i$.
Suppose, for contradiction, that $\beta_i - t > p_j$.
Then these jobs must have been running continuously since $t$, implying that their processing times are at least $p_j$.
By our algorithm, job $j$ had not yet been released when the algorithm started one of these larger jobs.
Therefore, letting $U$ denote this set of jobs, we have
\[
\beta_i < r_j + \min_{u \in U} p_u.
\]
However, by the partition rule $p_j \geq 4P(r_j)/\ell$ and $\ell = \lfloor N^{\alpha}\sqrt{m} \rfloor$, we have $t’_j = r_j + (2n)^{\alpha}/\sqrt{m} \cdot p_j \geq r_j + 4P(r_j)/m$. Since the jobs in $U$ were released before $r_j$, their total size is at most $P(r_j)$. Moreover, $|U| > \lceil m/4 \rceil$, so:
\[
\frac{4\sum_{u \in U} p_u}{m} \geq \frac{4 \cdot |U| \cdot \min_{u \in U} p_u}{m} > \min_{u \in U} p_u.
\]
Combining, $t’_j \geq r_j + 4P(r_j)/m \geq r_j + 4\sum_{u \in U} p_u / m > r_j + \min_{u \in U} p_u > \beta_i$.
This gives $\beta_i - t \leq \beta_i - t’_j < 0$, a contradiction.
\end{proof}

Each kill event therefore contributes at most $\sum_{j \in J} p_j \le \OPT$ to $\sum_j F_j^{(\times)}$.  
Since the total number of type-B kill events is at most $O(n^{1-\alpha}/\sqrt{m})$, we obtain
\begin{equation}
\label{eqn:unknownn_restart_large_perrestart}
\sum_{j \in J} F_j^{(\times)} \le O\!\left(\frac{n^{1-\alpha}}{\sqrt{m}}\right) \cdot \OPT \leq O\!\left(\frac{n^{\alpha}}{\sqrt{m}}\right)\cdot \OPT.
\end{equation}
The last inequality holds because $\alpha$ is set to $(\sqrt{5}-1)/2 > 0.5$.

Finally, combining
\eqref{eqn:unknownn_restart_large_shifting},
\eqref{eqn:unknownn_restart_large_busy}, and
\eqref{eqn:unknownn_restart_large_perrestart}, we have
\[
\sum_{j \in J} F^{(1)}_j
\le \sum_{j \in J} \bigl( (t'_j - r_j) + F_j^{(\checkmark)} + F_j^{(\times)} \bigr)
\le O\!\left(\frac{n^{\alpha}}{\sqrt{m}}\right) \cdot \OPT.
\]
\end{proof}

Next, we aim to bound the total small waiting time. We define an increasing sequence of time points $t_h$, each corresponding to the start time of a job $j$ that is not in its last small period (i.e. $t_h < \theta_j$, and it is possible to be killed later). We further define $t_0 = 0$ and $t_{\bar{h}+1} = C_n$, where $t_{\bar{h}}$ is the last time we start processing a job that is not in its last small period and $C_n$ is the last completion time. These $t_h$ allow us to divide the jobs into sets, corresponding to $\theta_j$.

$$
S_h = \{j \in J \mid \theta_j \in [t_{h}, t_{h+1}) \text{ and } F^{(2)}_j > 0\}.
$$
Some basic properties hold for every $j \in S_h$:
\begin{enumerate}
    \item $j$ is started within the time interval $(\theta_j, t_{h+1})$.
    \item $j$ will not be killed after this start. 
\end{enumerate}
We now justify these two properties in order.
For the first one, if $h = \bar{h}$, the property holds trivially; otherwise, since $t_{h+1}$ is the start time of another job $j'$, where $\theta_{j'} > t_{h+1}$. By definition, $T_{j'}(t)$ is still large for some 
$t > t_{h+1}$, while $T_j(t)$ is small, implying $p_{j'} > p_{j}$. Given that our algorithm prioritizes scheduling smaller waiting jobs, job $j$ must be started before $j'$ starts at $t_{h+1}$. On the other hand, since $j$ is small after $\theta_j$, if it is started in $[t_h,t_{h+1})$ after $\theta_j$, it will not be killed. 

To bound jobs' small waiting time in every $S_h$, we further divide the jobs into two types.

\begin{enumerate}
    \item \textbf{Blocked:} $j \in S_h$, and some machine is processing a job $j'$ at $\theta_j$, where $\theta_j < \theta_{j'}$. We use $K_h$ to denote the set of blocked small jobs in $S_h$. 
    \item \textbf{Unblocked:} otherwise, and we use $U_h$ to denote the set of unblocked jobs in $S_h$.
\end{enumerate}

Let $t^*$ be the last time in $[t_h, t_{h+1})$ during which some machine processes a job $j$ with $\theta_j > t^*$.  
If no such time exists, we set $t^* = t_{h+1}$, which implies $U_h = \emptyset$.   

\begin{lemma}
\label{lem:unknownn-crowded}
If there exists a time in $[t_h, t_{h+1})$ at which the number of waiting blocked jobs exceeds $n^{\alpha}\sqrt{m}$, then
\[
\sum_{j \in S_h} F^{(2)}_j \le F(\OPT(S_h)) + O\!\left(\frac{|S_h|}{n^{1-\alpha}\sqrt{m}}\right) \cdot \OPT.
\]
\end{lemma}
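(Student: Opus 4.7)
The plan is to mirror the argument of \Cref{lem:dunblocked-small-job} and to leverage the crowdedness hypothesis to absorb the additive error terms into the claimed $O(|S_h|/(n^{1-\alpha}\sqrt{m})) \cdot \OPT$ factor. First, I would construct a shifted instance $S'_h$ by replacing each release time $r_j$ with $\theta_j$ for every $j \in S_h$. After this shift, every job in $S'_h$ is permanently small, so the behaviour of our algorithm on $\{[\theta_j, s_j) : j \in S_h\}$ coincides with \NSJF applied to $S'_h$ together with an \emph{interference set} $I_h$ consisting of the jobs processed during $[t_h, t_{h+1})$ that do not lie in $S_h$: committed large jobs carrying over from before $t_h$, the currently-large job started at $t_h$, and jobs of $S_{h'}$ with $h' \neq h$ whose scheduling windows overlap $[t_h, t_{h+1})$.

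Second, since every job in $S'_h$ has size at most $\tau = O(\OPT / \ell) = O(\OPT / (n^{\alpha}\sqrt{m}))$ by \Cref{lem:partition_restart}, the \NSJF analysis in \Cref{lem:sjfmain} gives
\[
\sum_{j \in S_h} F^{(2)}_j \;\leq\; F(\OPT^m(S'_h \cup I_h)) + O(|S_h| \cdot \tau).
\]
The crowdedness hypothesis $|B_h| > n^{\alpha}\sqrt{m}$ yields $|S_h| \geq n^{\alpha}\sqrt{m}$, and since $\alpha = (\sqrt{5}-1)/2 > 1/2$, the additive term $O(|S_h|\tau) = O(|S_h|/(n^{\alpha}\sqrt{m})) \cdot \OPT$ is already dominated by the target extra term $O(|S_h|/(n^{1-\alpha}\sqrt{m})) \cdot \OPT$. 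What remains is to compare $F(\OPT^m(S'_h \cup I_h))$ with $F(\OPT(S_h))$.

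Third, I would insert the interferers in $I_h$ into $\OPT(S_h)$ migratorically, exactly as in \Cref{lem:dunblocked-small-job}: scan from the start time of each interferer and place it in the first available idle slot without delaying the existing jobs in $\OPT(S_h)$. The partition rule together with the kill rule caps $|I_h|$ by $O(\ell) = O(n^{\alpha}\sqrt{m})$ currently-large jobs (once the number of waiting currently-small jobs exceeds $\ell$, a type-B kill fires and clears the interferers), while the total processing work of all interferers within $[t_h, t_{h+1})$ is at most $\OPT$. Each such insertion adds at most $O(|S_h|/m)$ extra wait in total on jobs of $S_h$, and summing these contributions using $|S_h| \geq n^{\alpha}\sqrt{m}$ to rearrange the factors gives the claimed extra term.

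The main obstacle will be to pin down the interference set $I_h$ precisely under the dynamic partition: a single job can oscillate between currently-large and currently-small several times within $[t_h, t_{h+1})$ and may spawn proxy jobs during these transitions, so one must track which portions of a job's execution are attributable to $I_h$ without double-counting. Once this bookkeeping is settled and the kill rule is used to certify $|I_h| = O(n^{\alpha}\sqrt{m})$ together with total interferer work $O(\OPT)$, the migratory insertion argument from \Cref{lem:dunblocked-small-job} transfers and the crowdedness hypothesis absorbs all residual additive losses into the stated bound.
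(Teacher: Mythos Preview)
Your proposal misidentifies the role of the crowdedness hypothesis. You use it only to deduce $|S_h|\ge n^{\alpha}\sqrt{m}$ and then hope that this makes the residual additive terms small enough. In the paper, the hypothesis is used for a completely different and indispensable purpose: it bounds the \emph{size} of the interfering job. Concretely, let $t$ be the crowded time and let $j$ be the job running at $t$ with $\theta_j>t^*$ (the one not in its last small period, started at $t_h$). Since more than $\ell$ currently-small jobs are waiting at $t$, a type-B kill would have fired unless $j$ itself is currently small at $t$; hence $p_j\le 4P(t)/\ell(t)$. Moreover, at least $n^{\alpha}\sqrt{m}$ jobs have already been released, so $N(t)\ge n^{\alpha}\sqrt{m}/2$ and therefore $p_j\le O(\OPT/(n^{\alpha^2}\sqrt{m}))$. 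With this, every machine's initial blocking at $t_h$ is bounded by $O(\OPT/(n^{\alpha^2}\sqrt{m}))$, and a direct application of \Cref{lem:SJFgeneral} yields the stated per-$h$ bound (using $\alpha^2=1-\alpha$).

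Your argument breaks precisely where this bound is needed. In your second step you invoke \Cref{lem:sjfmain} on $S'_h\cup I_h$ with size bound $\tau=O(\OPT/(n^{\alpha}\sqrt{m}))$, but the job started at $t_h$ lies in $I_h$ and is \emph{not} a priori of size $\le\tau$; without the crowdedness-forces-smallness deduction above, nothing controls its size, so the lemma does not apply. Even if you repair this, the migratory-insertion step yields an extra term of order $|I_h|\cdot\sum_{i\in S_h\cup I_h}p_i/m=O(n^{\alpha}/\sqrt{m})\cdot\sum_{i\in S_h\cup I_h}p_i$, which is not of the form $O(|S_h|/(n^{1-\alpha}\sqrt{m}))\cdot\OPT$ required by the lemma; your ``rearranging the factors using $|S_h|\ge n^{\alpha}\sqrt{m}$'' would need $|S_h|\ge n$, which is false in general. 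The missing idea is exactly the size bound on the blocking job via the type-B kill rule and the lower bound on $N(t)$.
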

\begin{proof}
Since after $t^*$ the number of waiting blocked jobs can only decrease, there must exist a time $t \in [t_h, t^*)$ such that the number of waiting blocked jobs exceeds $n^{\alpha}\sqrt{m}$.  
All blocked jobs belong to $S_h$, hence they are in their last small period and are guaranteed to be small at time $t$.  
The only reason our algorithm does not kill the processing job $j$ that completes at $t^*$ with $\theta_j > t^*$ is that $T_j(t)$ is still small.  
Therefore, its size is bounded by $\frac{\OPT}{N(t)^{\alpha}\sqrt{m}}$.  

Although $N$ may be smaller than $n$ since $n$ is unknown at that time, it is lower bounded by $n^{\alpha}\sqrt{m}/2$, because at least $n^{\alpha}\sqrt{m}$ small blocked jobs are waiting, implying $n(t) \ge n^{\alpha}\sqrt{m}$.  
Hence, the size of $j$ satisfies
\[
p_j \le \frac{\OPT}{N(t)^{\alpha}\sqrt{m}} \le \frac{2\OPT}{n^{\alpha^2}m^{\alpha/2}\sqrt{m}} \le \frac{2\OPT}{n^{\alpha^2}\sqrt{m}}.
\]

We now view the schedule of $S_h$ in our algorithm as running \NSJF on a shifted instance $S'_h$, with an initial blocking time on each machine, where $S'_h$ is obtained from $S_h$ by shifting all release times by $-t_h$ (it is the same as viewing $t_h$ as time $0$).

Each machine may be initially blocked because it is processing some job at time $t_h$, either by a job not in its last small period (for at most $t^*-t \le \frac{2\OPT}{n^{\alpha^2}\sqrt{m}}$) or by a job already in its last small period (for at most $\frac{\OPT}{n^{\alpha}\sqrt{m}} \le \frac{2\OPT}{n^{\alpha^2}\sqrt{m}}$ since $\alpha<1$). 
Hence every component of the blocking vector $\vec{b}$ is at most $\frac{2\OPT}{n^{\alpha^2}\sqrt{m}}$. 
It follows that the schedule of $S_h$ in our algorithm coincides with $\NSJF(S'_h,\vec{b})$.

By \Cref{lem:SJFgeneral}, writing $\vec{b}$ for the initial blocking vector,
\begin{align*}
    \sum_{j \in S_h} F^{(2)}_j 
    \;\le\; F(\NSJF(S'_h,\vec{b}))
    &\le\; F(\OPT(S_h))
      \;+\; 2|S_h|\cdot \frac{\OPT}{n^{\alpha}\sqrt{m}}
      \;+\; |S_h|\cdot \frac{2\OPT}{n^{\alpha^2}\sqrt{m}} \\
    &\le\; F(\OPT(S_h)) + O\!\left(\frac{|S_h|}{n^{\alpha^2}\sqrt{m}}\right)\!\cdot \OPT \\
    &=\; F(\OPT(S_h)) + O\!\left(\frac{|S_h|}{n^{1-\alpha}\sqrt{m}}\right)\!\cdot \OPT,
\end{align*}
where the last equality uses $\alpha^2 = 1-\alpha$ for $\alpha=(\sqrt{5}-1)/2$.
\end{proof}

\begin{lemma}
\label{lem:unknownn-uncrowded}
If for all $t \in [t_h, t_{h+1})$ the number of waiting blocked jobs is at most $n^{\alpha}\sqrt{m}$, then
\[
\sum_{j \in U_h} F^{(2)}_j \le F(\OPT(S_h))
+ O\!\left(\frac{n^{\alpha}}{\sqrt{m}}\right) \sum_{j \in S_h} p_j
+ O\!\left(\frac{|S_h|}{n^{\alpha}\sqrt{m}}\right) \cdot \OPT.
\]
\end{lemma}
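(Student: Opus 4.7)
The plan is to adapt the analysis of \Cref{lem:dunblocked-small-job} to the uncrowded, unknown-$n$ setting. The template is identical in spirit: view the algorithm's handling of $U_h$ during $[t_h, t_{h+1})$ as an execution of \NSJF on a shifted instance, bound its flow time via \Cref{lem:SJFgeneral}, and relate the implied optimum back to $\OPT(S_h)$ by a migratory insertion of the blocked jobs.

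Concretely, I would first translate the time origin to $t_h$ and construct $S'_h$ from $S_h$ by shifting the release time of each $j \in B_h$ to its actual start time $s_j - t_h$. Under this modification, the algorithm's schedule of $U_h$ in the window coincides with $\NSJF(S'_h, \vec{b})$, where $b_i$ records the residual processing time of whatever job is occupying machine $i$ at time $t_h$. Applying \Cref{lem:SJFgeneral} yields
\[
\sum_{j \in U_h} F^{(2)}_j \le F(\OPT(S'_h)) + 2|S_h|\tau + \frac{|S_h|B}{m},
\]
where $\tau$ upper bounds the small-job size in the window and $B = \sum_i b_i$. By \Cref{lem:partition_restart} (item 3), together with the fact that whenever the uncrowded hypothesis is non-vacuous the instantaneous $N(t)$ is at least a constant multiple of $n^{\alpha}\sqrt{m}$, the small-size cap is $\tau = O(\OPT/(n^{\alpha}\sqrt{m}))$. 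A per-machine analysis of the initial workload, combining the small-size cap for currently small jobs with the rank-based size bound together with the $\lfloor \sqrt{nm}\rfloor$ cap on active large jobs for currently large jobs, yields $B/m = O(\OPT/(n^{\alpha}\sqrt{m}))$. These two estimates produce the third term in the statement.

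The remaining task is to bound $F(\OPT(S'_h))$. Starting from a preemptive optimum of $S_h$, I would insert each blocked job $j \in B_h$ at its shifted release time via the migratory construction used in \Cref{lem:dunblocked-small-job}, so that each insertion adds flow time at most $p_j + (\sum_{i \in S_h} p_i)/m$. The uncrowded hypothesis caps the number of simultaneously waiting blocked jobs by $n^{\alpha}\sqrt{m}$, which amortized across the insertion process gives
\[
F(\OPT(S'_h)) \le F(\OPT(S_h)) + O\!\left(\frac{n^{\alpha}}{\sqrt{m}}\right)\sum_{j \in S_h} p_j,
\]
the second term of the statement. The main technical obstacle is controlling the initial blocking vector $\vec{b}$: unlike in \Cref{lem:unknownn-crowded}, no just-in-time small-size certificate is furnished by the restart rule, so we must combine the partition size bound (for currently small blockers, whose size depends on $N(t_h)$ rather than on $n$) with the rank-based large-job size bound (for currently large blockers), while tracking how the threshold $\ell(t) = N(t)^{\alpha}\sqrt{m}$ has evolved so far; getting the right $n^{\alpha}/\sqrt{m}$ polynomial dependence out of this bookkeeping is the most delicate step of the proof.
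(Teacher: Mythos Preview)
Your template matches the paper's in spirit, but the paper makes one crucial choice that you miss: it does \emph{not} shift the origin to $t_h$. It shifts to $t^*$, defined as the last time in $[t_h,t_{h+1})$ at which some machine is processing a job not yet in its last small period. This single change dissolves both obstacles you flag.

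First, the blocking vector. At $t_h$ the job that has just started is, by definition of $t_h$, not in its last small period; on a \All\ machine it may be currently large, and nothing prevents its size from being $\Theta(\OPT)$. Your proposed per-machine bookkeeping (``rank-based size bound together with the $\lfloor\sqrt{nm}\rfloor$ cap'') cannot rescue this: a single such job already gives $B/m=\Omega(\OPT/m)$, which is not $O(\OPT/(n^{\alpha}\sqrt m))$ in general. By contrast, immediately after $t^*$ every processing job is in its last small period, hence small at the final threshold $\ell\ge n^{\alpha}\sqrt m$, and therefore has size $O(\OPT/(n^{\alpha}\sqrt m))$ --- so each $b_i$ is bounded outright.

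Second, the cardinality for the migratory insertion. You need $|B_h|\le O(n^{\alpha}\sqrt m)$ to turn the per-job cost $\sum_{i\in S_h}p_i/m$ into the stated $O(n^{\alpha}/\sqrt m)\sum p_j$, but the uncrowded hypothesis only bounds the number of \emph{simultaneously waiting} blocked jobs, not the total. The ``amortization'' you invoke does not go through: the insertion cost in \Cref{lem:SJFgeneral} is charged against the full optimum of $S'_h$, including all inserted jobs, so $|B_h|$ enters multiplicatively. The paper sidesteps this by inserting only $B^*_h$, the blocked jobs still waiting at $t^*$, whose cardinality is directly bounded by the uncrowded hypothesis; all other blocked jobs have already started by $t^*$ and are absorbed into $\vec b$.

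A minor point: your justification of $\tau=O(\OPT/(n^{\alpha}\sqrt m))$ via ``the uncrowded hypothesis being non-vacuous forces $N(t)\ge\Omega(n^{\alpha}\sqrt m)$'' yields only $\tau=O(\OPT/(n^{\alpha^2}\sqrt m))$, which is weaker. The correct argument is simpler and does not use the hypothesis at all: every job in $S_h$ is in its last small period, hence small at time $C_n$ where $\ell\ge n^{\alpha}\sqrt m$, which gives the desired bound directly.
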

\begin{proof}
We focus on the number of waiting blocked jobs at time $t^*$, which is at most $n^{\alpha}\sqrt{m}$. Let $K^*_h$ denote this set. 
View $t^*$ as time $0$ and form two shifted instances $U'_h$ and $K'_h$ by moving every release time in $U_h$ and $K^*_h$ by $-t^*$, respectively.
The schedule of $U_h$ in our algorithm is a shifting to the schedule of $U'_h$ under $\NSJF(U'_h \cup K'_h,\vec{b})$, 
where $\vec{b}$ is the initial blocking vector from the jobs processing at $t^*$. 
By the definition of $t^*$, these processing jobs are already in their last small period, so each component of $\vec{b}$ is at most $O\!\left(\frac{\OPT}{n^{\alpha}\sqrt{m}}\right)$.

By \Cref{lem:SJFgeneral},
\begin{equation}
\label{eqn:unknownn-unblocked2-eqn1}
\sum_{j \in U_h} F^{(2)}_j 
\;\le\; F(\NSJF(U'_h \cup K'_h,\vec{b}))
\;\le\; F(\OPT^m(U'_h \cup K'_h))
\;+\; O\!\left(\frac{|U'_h \cup K'_h|}{n^{\alpha}\sqrt{m}}\right)\OPT,
\end{equation}
where $\OPT^m$ denotes the optimal preemptive, migratory schedule.


The next step is to bound the gap between $\OPT^m(U'_h \cup K'_h)$ and $\OPT^m(U_h)$ using the fact that $|K'_h|$ is small. 
Consider an optimal preemptive, migratory schedule $\OPT^m(U'_h)$. 
Insert the jobs of $K'_h$ one by one with migration: for each $j\in K'_h$, we process it in the earliest idle slots without delaying already scheduled work; if the machine is busy, pause and resume at the next idle slot (possibly on another machine).

Let $F'_j$ be the flow time of such an inserted job $j$. Then
\[
F'_j \;=\; p_j + \text{wait}_j,
\qquad
\text{wait}_j \;\le\; \frac{1}{m}\sum_{i \in U'_h \cup K'_h} p_i.
\]
By the lemma’s condition, $|K'_h|=|K^*_h|\le n^{\alpha}\sqrt{m}$, hence
\[
F(\OPT^m(U'_h \cup K'_h))
\;\le\;
F(\OPT^m(U'_h))
\;+\;
O\!\left(\frac{n^{\alpha}}{\sqrt{m}}\right)
\sum_{i \in U'_h \cup K'_h} p_i.
\]

Combining with \Cref{eqn:unknownn-unblocked2-eqn1},
\begin{align*}
\sum_{j \in U_h} F^{(2)}_j
&\le\; F(\OPT^m(U'_h \cup K'_h))
  \;+\; O\!\left(\frac{|U'_h \cup K'_h|}{n^{\alpha}\sqrt{m}}\right)\OPT \\
&\le\; F(\OPT^m(U'_h))
  \;+\; O\!\left(\frac{n^{\alpha}}{\sqrt{m}}\right)
      \sum_{i \in U'_h \cup K'_h} p_i
  \;+\; O\!\left(\frac{|S_h|}{n^{\alpha}\sqrt{m}}\right)\OPT \\
&\le\; F(\OPT(S_h))
  \;+\; O\!\left(\frac{n^{\alpha}}{\sqrt{m}}\right)
      \sum_{i \in S_h} p_i
  \;+\; O\!\left(\frac{|S_h|}{n^{\alpha}\sqrt{m}}\right)\OPT.
\end{align*}
\end{proof}

We call an interval $[t_h,t_{h+1})$ \emph{crowded} if the number of waiting blocked jobs exceeds $n^{\alpha}\sqrt{m}$ (corresponding to \Cref{lem:unknownn-crowded}), and \emph{uncrowded} otherwise (corresponding to \Cref{lem:unknownn-uncrowded}).  
Define $H_1 := \{h : [t_h,t_{h+1}) \text{ is crowded}\}$ and $H_2 := \{h : [t_h,t_{h+1}) \text{ is uncrowded}\}$. Then
\begin{align*}
   & \sum_{h \in H_1} \sum_{j \in S_h} F^{(2)}_j
     + \sum_{h \in H_2} \sum_{j \in U_h} F^{(2)}_j  \\
   &\le
   \sum_{h \in H_1 \cup H_2} \Big(
       F(\OPT(S_h))
       + O\!\left(\frac{|S_h|}{n^{1-\alpha}\sqrt{m}}\right)\OPT
       + F(\OPT(S_h)) \\
   &\qquad\qquad
       + O\!\left(\frac{n^{\alpha}}{\sqrt{m}}\right)
         \sum_{j \in S_h} p_j
       + O\!\left(\frac{|S_h|}{n^{\alpha}\sqrt{m}}\right)\OPT
   \Big) \\
   &\le
   2\,F\!\left(\OPT\!\left(\bigcup_h S_h\right)\right)
   + O\!\left(\frac{\sum_h |S_h|}{n^{1-\alpha}\sqrt{m}}\right)\OPT
   + O\!\left(\frac{n^{\alpha}}{\sqrt{m}}\right)\sum_{j \in J} p_j \\
   &\le
   O\!\left(\frac{n^{\alpha}}{\sqrt{m}}\right)\OPT.
\end{align*}

To bound $\sum_{j \in J} F^{(2)}_j$, it remains to control
\[
\sum_{h \in H_2} \sum_{j \in K_h} F^{(2)}_j.
\]

\begin{lemma}
\label{lem:unknownn_smallwaiting_blocked_uncrowded}
For the waiting time of blocked jobs during uncrowded periods,
\[
\sum_{h \in H_2} \sum_{j \in K_h} F^{(2)}_j
\;\le\; O\!\left(\frac{n^{1-\alpha}}{\sqrt{m}}\right)\OPT.
\]
\end{lemma}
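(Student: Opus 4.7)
The plan is to invoke the generalized $\NSJF$ guarantee (\Cref{lem:SJFgeneral}) on each $B_h$ for $h \in H_2$, in the spirit of the treatment of $U_h$ in \Cref{lem:unknownn-uncrowded}. The decisive leverage is the uniform size bound $\tau := O(\OPT/(n^{\alpha}\sqrt{m}))$ that applies to every blocked job: by \Cref{lem:partition_restart}, each $j \in B_h$ is classified small throughout its last small period, so $p_j \le \tau$. The arithmetic motivating the target bound is that the additive size penalty $O(|B_h|\tau)$ in the $\NSJF$ guarantee telescopes to $O\!\left(\tau \sum_{h \in H_2} |B_h|\right) \le O(n\tau) = O(n^{1-\alpha}/\sqrt{m})\cdot \OPT$, precisely using $\alpha^2 = 1-\alpha$ for the golden ratio.

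Concretely, for each uncrowded $[t_h,t_{h+1})$ I would pick a pivot $t^\ast_h$, translate time so that $t^\ast_h$ becomes the origin, and form a shifted instance $B'_h$. The uncrowded hypothesis caps the number of blocked jobs still waiting at $t^\ast_h$ by $n^{\alpha}\sqrt{m}$; the jobs being processed at $t^\ast_h$ are either small (of size $O(\tau)$) or committed large jobs, and their residual processing contributes to an initial blocking vector $\vec b$ with controlled $\ell_1$-norm. The algorithm restricted to the blocked small jobs then coincides with $\NSJF(B'_h,\vec b)$, and \Cref{lem:SJFgeneral} gives
\begin{equation*}
\sum_{j \in B_h} F_j^{(2)} \;\le\; F(\OPT(B_h)) \;+\; O(|B_h|\tau) \;+\; \frac{|B_h|\,\|\vec b\|_1}{m}.
\end{equation*}
Summing over $h \in H_2$ yields $\sum_h F(\OPT(B_h)) \le F(\OPT(\bigcup_h B_h)) \le \OPT$ via the weaker-instance reduction from \Cref{lem:partition_restart}(2), the size term contributes $O(n^{1-\alpha}/\sqrt{m})\cdot \OPT$ as described, and the blocking term is of the same order since each $\|\vec b\|_1 = O(m\tau)$ per interval, giving a per-interval contribution of $O(|B_h|\tau)$ that again telescopes to $O(n^{1-\alpha}/\sqrt{m})\cdot \OPT$.

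The main technical obstacle is the $\NSJF$ isolation: in the actual algorithm, blocked jobs of $B_h$ are scheduled inside a single priority queue together with the unblocked jobs $U_h$, residual small jobs from earlier intervals, and jobs that are currently large on \All machines. Folding all such external traffic into $\vec b$ at the pivot $t^\ast_h$, or alternatively arguing that our algorithm's priority rule restricted to $B_h$ still coincides with $\NSJF$ on the shifted instance, is the delicate step; the uncrowded hypothesis is what makes this affordable, since the residual population interacting with $B_h$ is a priori bounded by $n^{\alpha}\sqrt{m}$. A secondary subtlety is that $N$ may double mid-interval and thereby change the threshold $\ell$, so the size bound $\tau$ should be applied at the moment each $j \in B_h$ is first classified small; a geometric summation over successive doublings of $N$ then recovers a single $n^{\alpha}$ in the denominator at only a constant-factor loss.
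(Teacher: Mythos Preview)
Your approach is fundamentally different from the paper's, and the gap you flag as ``the main technical obstacle'' is real and not patched by your suggestions. The identification of the algorithm's schedule on $B_h$ with $\NSJF(B'_h,\vec b)$ does not hold: the algorithm processes $B_h$ interleaved with $U_h$ (and with residual small jobs from earlier intervals), and crucially the $U_h$ jobs arrive \emph{throughout} $[t_h,t_{h+1})$, so they cannot be absorbed into an initial blocking vector at a single pivot $t^\ast_h$. Your claim that ``the residual population interacting with $B_h$ is a priori bounded by $n^{\alpha}\sqrt{m}$'' is also wrong: the uncrowded hypothesis bounds only the number of waiting \emph{blocked} jobs, not $|U_h|$, which can be much larger. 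Without isolating $B_h$ you cannot invoke \Cref{lem:SJFgeneral} on it alone, and merging $U_h$ in (as in \Cref{lem:unknownn-uncrowded}) would require inserting $U_h$ into $\OPT(B_h)$, for which you have no cardinality handle.

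The paper takes a completely different route, in the spirit of \Cref{lem:dblocked-small-job}: it never applies the $\NSJF$ lemma to $B_h$ at all. Instead it uses the \Small\ machines directly. Every $j\in B_h$ is in its last small period while waiting, hence currently small, so if any \Small\ machine were idle the algorithm would schedule $j$ immediately. Thus blocked-job waiting time accrues only when all $\lfloor m/2\rfloor$ \Small\ machines are busy. The proof then splits into two cases. Case~1: the jobs occupying \Small\ machines run to completion; the total such busy length is at most $\sum_j p_j/m$, and multiplying by the uncrowded cap $n^{\alpha}\sqrt{m}$ on the number of waiting blocked jobs gives $O(n^{\alpha}/\sqrt{m})\cdot\OPT$. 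Case~2: some occupying job is killed by a type-A event (the only way a job on a \Small\ machine is interrupted); since that job was small just before $N$ doubled, its size is at most $\OPT/(N^{\alpha}\sqrt{m})$, and at most $N$ blocked jobs can be waiting then, giving a per-event contribution of $O(N^{1-\alpha}/\sqrt{m})\cdot\OPT$; the geometric sum over doublings of $N$ yields $O(n^{1-\alpha}/\sqrt{m})\cdot\OPT$. This direct counting argument entirely sidesteps the isolation problem you are wrestling with.
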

\begin{proof}
During uncrowded periods, at any time $t$ the number of waiting blocked jobs is at most $n^{\alpha}\sqrt{m}$.  
Two types of intervals contribute to their waiting time.

\medskip
\noindent\textbf{Case 1.} All \SmallOnly\ machines are busy at the end. We have that the total length is at most $\frac{\sum_{j \in J} p_j}{m}$.  
Then the total waiting-time contribution is at most
\[
O\!\left(\frac{\sum_{j \in J} p_j}{m}\right)\cdot n^{\alpha}\sqrt{m}
\;\le\; O\!\left(\frac{n^{\alpha}}{\sqrt{m}}\right)\OPT.
\]

\medskip
\noindent\textbf{Case 2.} Some \SmallOnly\ machines are not busy at the end.  
This only happens at a type-A kill event.  
If a blocked job is waiting, it is in its last small period, so it is small and all machines are busy at that moment.  
Hence, if some \SmallOnly\ machines are idle at the end, a type-A kill event must have occurred when $N$ was updated to $2N$ --- the only time a small job can become large.  
At such an update, at most $N$ jobs have been released, so there are at most $N$ blocked jobs.  
Each killed job was small just before the update, hence its size is at most $\OPT/(N^{\alpha}\sqrt{m})$.  
Thus each such event contributes to $\sum_{h \in H_2} \sum_{j \in K_h} F^{(2)}_j$ at most
\[
O\!\left(\frac{\OPT \cdot N^{1-\alpha}}{\sqrt{m}}\right) . 
\]

As $N$ doubles through $1,2,4,\ldots$ up to $n<2^k\le 2n$, the total contribution sums to
\[
O\!\left(\frac{n^{1-\alpha}}{\sqrt{m}}\right)\OPT.
\]
\end{proof}

Combining all the lemmas above, we can prove the main theorem.

\begin{theorem}
\label{thm:unknown_det_restart}
    \Cref{alg:multi-deterministic-restart-m-unknownn} is an online polynomial-time deterministic algorithm with kill-and-restart for total flow time minimization that is $O(n^{\alpha}/\sqrt{m})$-competitive against the preemptive offline solution without prior knowledge of $n$, where $\alpha = (\sqrt{5}-1)/2$, provided that $m\geq 2$.
\end{theorem}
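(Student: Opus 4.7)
The plan is to treat this final theorem as pure assembly: all the hard work has already been isolated in Lemmas \ref{lem:unknownn_large_waiting}, \ref{lem:unknownn-crowded}, \ref{lem:unknownn-uncrowded}, and \ref{lem:unknownn_smallwaiting_blocked_uncrowded}. First I would write the decomposition
\[
\sum_{j \in J} F_j \;=\; \sum_{j \in J} p_j \;+\; \sum_{j \in J} F^{(1)}_j \;+\; \sum_{j \in J} F^{(2)}_j,
\]
and immediately discard the first term via $\sum_{j \in J} p_j \le \OPT$. The second (large-waiting) term is exactly what Lemma \ref{lem:unknownn_large_waiting} delivers as $O(n^{\alpha}/\sqrt{m})\OPT$, so nothing new is required there.

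The bulk of the plan is to reassemble $\sum_{j\in J} F^{(2)}_j$. I would partition the index set of intervals $\{[t_h,t_{h+1})\}$ into the crowded family $H_1$ and the uncrowded family $H_2$, and on each uncrowded interval further split $S_h = U_h \cup B_h$. Then
\[
\sum_{j\in J} F^{(2)}_j
= \sum_{h\in H_1}\sum_{j\in S_h} F^{(2)}_j
+ \sum_{h\in H_2}\sum_{j\in U_h} F^{(2)}_j
+ \sum_{h\in H_2}\sum_{j\in B_h} F^{(2)}_j.
\]
The first two sums are exactly the combined bound displayed just before the theorem statement, which already simplifies to $O(n^{\alpha}/\sqrt{m})\OPT$ using the disjointness of the $S_h$'s (so $F(\OPT(\bigcup S_h)) \le \OPT$, $\sum_h |S_h| \le n$, and $\sum_h \sum_{j\in S_h} p_j \le \OPT$). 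The third sum is supplied by Lemma \ref{lem:unknownn_smallwaiting_blocked_uncrowded} as $O(n^{1-\alpha}/\sqrt{m})\OPT$.

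The final step is the numeric check that everything collapses to the advertised bound. Here the golden-ratio choice $\alpha=(\sqrt{5}-1)/2$ is what does the work: since $\alpha>1/2$, we have $1-\alpha<\alpha$, hence $n^{1-\alpha}\le n^{\alpha}$, so the bound from Lemma \ref{lem:unknownn_smallwaiting_blocked_uncrowded} is absorbed into $O(n^{\alpha}/\sqrt{m})\OPT$. (Strictly speaking the identity $\alpha^2=1-\alpha$ was already spent inside Lemma \ref{lem:unknownn-crowded} to identify $n^{\alpha^2}$ with $n^{1-\alpha}$; at the top level we only need the inequality $1-\alpha\le\alpha$.) Combining the three bounded summands yields
\[
\sum_{j \in J} F_j \;\le\; \OPT + O\!\left(\frac{n^{\alpha}}{\sqrt{m}}\right)\OPT \;=\; O\!\left(\frac{n^{\alpha}}{\sqrt{m}}\right)\OPT,
\]
as required.

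There is no real obstacle at this stage; the only thing to be careful about is not double-counting when merging the $H_1$ and $H_2$ contributions (note that Lemma \ref{lem:unknownn-crowded} handles \emph{all} of $S_h$ on crowded intervals, while on uncrowded intervals $U_h$ and $B_h$ are handled separately). I would verify this bookkeeping explicitly in the write-up, and also confirm that $m\ge 2$ is only used through the subroutine lemmas (e.g., the \Small-machine busy-time bound inside Lemma \ref{lem:unknownn_smallwaiting_blocked_uncrowded}, mirroring the $m\ge 2$ condition of Lemma \ref{lem:dblocked-small-job}). Apart from that, the proof is a one-line concatenation of the four preceding lemmas.
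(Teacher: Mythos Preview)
Your proposal is correct and matches the paper's proof essentially line for line: the paper's proof of \Cref{thm:unknown_det_restart} is exactly the assembly you describe, invoking \Cref{lem:unknownn_large_waiting} for $\sum F_j^{(1)}$, the displayed $H_1/H_2$ computation together with \Cref{lem:unknownn_smallwaiting_blocked_uncrowded} for $\sum F_j^{(2)}$, and then adding $\sum p_j \le \OPT$. One small remark: the reason $F(\OPT(\bigcup_h S_h)) \le \OPT$ is not disjointness of the $S_h$ per se but simply that $\bigcup_h S_h \subseteq J$ (restricting an optimal schedule to a subset of jobs never increases total flow time); the disjointness is what lets you pass from $\sum_h F(\OPT(S_h))$ to $F(\OPT(\bigcup_h S_h))$.
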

\begin{proof}
By \Cref{lem:unknownn_large_waiting},
\[
\sum_{j\in J} F_j^{(1)} \;\le\; O\!\left(\frac{n^{\alpha}}{\sqrt{m}}\right)\OPT.
\]
By \Cref{lem:unknownn-crowded,lem:unknownn-uncrowded,lem:unknownn_smallwaiting_blocked_uncrowded},
\[
\sum_{j\in J} F_j^{(2)} \;\le\; O\!\left(\frac{n^{\alpha}}{\sqrt{m}}\right)\OPT.
\]
Combining,
\[
\sum_{j\in J} F_j
\;=\; \sum_{j\in J}\!\bigl(F_j^{(1)}+F_j^{(2)}+p_j\bigr)
\;\le\; O\!\left(\frac{n^{\alpha}}{\sqrt{m}}\right)\OPT.
\]
\end{proof}

\newpage
\section{Lower Bounds for Algorithms with the Knowledge of \texorpdfstring{$n$}{n}}
\label{sec:lower}

In this section, we present our lower bound results for algorithms that have prior knowledge of $n$. Specifically, for every fixed pair $(n,m)$, we construct an adversarial instance that establishes a lower bound. The results are organized into two parts.

The first part concerns randomized algorithms. We construct hard instances via distributions, following Yao's Minimax Principle. We begin with the single-machine setting as a warm-up, then generalize this construction to multiple machines, and finally extend it to the setting that allows kill-and-restart.

In the second part, we turn to deterministic algorithms. We first establish an $\Omega(n/\log n)$ lower bound for deterministic algorithms with kill-and-restart on a single machine. Finally, we present a lower bound of $\Omega(n/m^2 + \sqrt{n/m})$ for deterministic non-preemptive algorithms.

\subsection{Warm-Up: Randomized Lower Bound on a Single Machine}
\label{sec:single-randomized-lb}
We start with the lower bound that any randomized non-preemptive online algorithm for a single machine is $\Omega(\sqrt n)$ as a warm-up. The counterexample is constructed as \Cref{alg:single-randomized-lb}, which is a distribution of two possibilities: the only difference is whether the first $k$ $\varepsilon$-jobs are released at time $1$ or $2$.

\begin{algorithm}[H]
    \caption{Counterexample Construction for $\Omega(\sqrt{n})$ Lower Bound}
    \label{alg:single-randomized-lb}
    \SetAlgoLined
    \KwIn{Number of jobs $n$}
    $k \gets \lfloor\sqrt{n - 2}\rfloor, \varepsilon \gets 0.5 n ^ {-2}$\;
    Release job $1$ at time $0$ with processing time $2$ \;
    Flip an independent fair coin: $r \sim \text{Bernoulli}(1/2)$ \;
    Release $k$ jobs at time $1 + r$ with processing time $\varepsilon$ \tcp*{released at $1$ or $2$}
    \For{$t = 2$ \KwTo $k$}{
        Release $k$ jobs at time $1 + t$ with processing time $\varepsilon$ \tcp*{released at $3, \dots, k + 1$}
    }
\end{algorithm}

\begin{lemma}
    \label{thm:lb-single-random}
    The competitive ratio in flow time of any randomized non-preemptive online kill-and-restart algorithm for a single machine is $\Omega(\sqrt{n})$.
\end{lemma}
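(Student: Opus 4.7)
The plan is to invoke Yao's minimax principle on the distribution defined by \Cref{alg:single-randomized-lb}: it suffices to show that every deterministic algorithm in the model has expected total flow time $\Omega(\sqrt n)$, while the offline optimum is $O(1)$ for both realizations of $r$. As a preparatory step, I would verify $\OPT=O(1)$ by exhibiting explicit offline schedules. For $r=1$ place job~$1$ in $[0,2]$ and then process each tiny batch on arrival; for $r=0$ instead process batch~$1$ in $[1,1+k\varepsilon]$, slot job~$1$ into the two-unit window $[1+k\varepsilon,3+k\varepsilon]$, and handle every later batch on arrival. In both cases the total batch contribution is $O(k^3\varepsilon)=O(n^{-1/2})$, so $\OPT=O(1)$.

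Next, fix any deterministic algorithm. Since no information about $r$ is observable before time~$1$, its behavior on $[0,1)$ is independent of $r$, and by non-preemption the state at time~$1$ falls into exactly one of two cases: Case~A, in which job~$1$ was started at some $s\in[0,1]$ and is locked into the interval $[s,s+2]$; or Case~B, in which job~$1$ has not yet been started. The case is therefore determined by the algorithm alone, not by $r$.

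I claim that in Case~A the realization $r=0$ forces flow $\ge k$, and in Case~B the realization $r=1$ forces flow $\ge k$. In Case~A with $r=0$, the $k$ tiny jobs released at time~$1$ are blocked until non-preemptive job~$1$ finishes at $s+2\ge2$, so each waits $\ge1$ and the batch alone contributes $\ge k$. In Case~B with $r=1$, job~$1$ still requires a contiguous two-unit window, but from time~$2$ onward batches arrive at unit spacing. A short enumeration of the three possible placement windows---starting in $(1,2)$ (then batch at time $2$ waits $\ge s>1$ per job), starting within an internal gap $[t+k\varepsilon,t+1]$ for $t\in\{2,\ldots,k\}$ (then the next batch is fully blocked by $\ge 1-k\varepsilon$ per job), or starting after time $k+1+k\varepsilon$ (then job~$1$'s own flow exceeds $k+3$)---shows that every non-preemptive placement of the length-$2$ block yields flow $\ge k$.

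Combining the two cases, whichever state the algorithm commits to, the adversarial value of $r$ occurs with probability $1/2$, giving $\E_r[\ALG]\ge k/2=\Omega(\sqrt n)$. Dividing by $\OPT=O(1)$ and applying Yao's principle then transfers the bound to randomized algorithms against oblivious adversaries. The main obstacle is the Case~B enumeration: one must argue systematically that no placement of a two-unit block into a sequence of unit-spaced batches of $k$ tiny jobs can avoid a linear-in-$k$ flow contribution. The argument is essentially geometric---the internal idle mass is composed of unit pieces while the block needs two contiguous units---so the only way to evade blocking an entire batch is to push job~$1$ past the last batch, which itself costs $\Omega(k)$; this is precisely where the non-preemption constraint is used essentially.
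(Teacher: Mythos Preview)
Your approach mirrors the paper's: apply Yao's principle to the two-point distribution of \Cref{alg:single-randomized-lb}, verify $\OPT=O(1)$ via explicit schedules, and split on the deterministic algorithm's state at time~$1$. Your Case~B enumeration is more thorough than the paper's one-line sketch, but the structure is the same.

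There is, however, a genuine gap relative to the lemma \emph{as stated}, which covers the kill-and-restart model. Your Case~A hinges on the claim that once job~$1$ is started at some $s\in[0,1]$ it is ``locked into the interval $[s,s+2]$''; this fails under kill-and-restart. In fact the construction of \Cref{alg:single-randomized-lb} does \emph{not} separate $\ALG$ from $\OPT$ against kill-and-restart: consider the deterministic strategy ``start job~$1$ at time~$0$; if a batch appears at time~$1$, kill job~$1$, process the batch in $[1,\,1+k\varepsilon]$, and restart job~$1$ in $[1+k\varepsilon,\,3+k\varepsilon]$; otherwise let job~$1$ complete at time~$2$.'' For $r=0$ the batches sit at times $1,3,4,\ldots,k+1$, so the restarted job~$1$ fits in the two-unit gap between the first two batches and only delays the time-$3$ batch by $k\varepsilon$; for $r=1$ job~$1$ finishes exactly as the first batch arrives. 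Both realizations give total flow $3+O(k^{3}\varepsilon)=O(1)$. The paper's own proof shares this limitation---it too argues only the purely non-preemptive case, consistent with the section's opening sentence---so your argument matches the paper, but neither actually establishes the kill-and-restart clause appearing in the lemma statement.
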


\begin{proof}
    We analyze the counterexample described in \Cref{alg:single-randomized-lb}. For each of the two possibilities, the optimal algorithm can arrange job $1$ at time $0$ or $1$ correctly and delay each $\varepsilon$-job by at most $O(k^2 \varepsilon)$. Therefore, the optimal flow time is always $O(1)$. However, for any deterministic online algorithm, it has to decide whether to start job $1$ at the beginning or not:
        \begin{itemize}
            \item If the algorithm starts job $1$ at time $0$, then with probability $1/2$, the algorithm will delay the first $k$ $\varepsilon$-jobs by $1$ for the case the randomized $\varepsilon$-jobs are released at time $1$;
            \item Otherwise, the algorithm cannot start job $1$ before time $k + 1$ for the case the randomized $\varepsilon$-jobs are released at time $2$, or it will delay at least $k$ $\varepsilon$-jobs by $1$.
        \end{itemize}
        Therefore, the optimal deterministic algorithm has an expected flow time of $\Omega(\sqrt n)$ against the adversary, and hence the randomized algorithm has the lower bound $\Omega(\sqrt n)$ by Yao's minimax theorem.
\end{proof}


\subsection{Generalization to Multiple Machines}
\label{sec:multi-lb}

The main idea is to summarize the single-machine hard instance into a \emph{busy gadget} and then replicate it $m$ times to form a \emph{busy batch}. To extend this construction to multiple machines, we must address key differences related to small jobs. In particular, small jobs must have significant (non-negligible) processing time; otherwise, a deterministic algorithm could process a batch of small jobs --- originally intended for $m$ machines --- on a single machine with little loss. Since a batch of small jobs increases $\OPT$ by only a constant factor, a single large job is not sufficient to create a gap in the competitive ratio. Therefore, we release one large job in each busy gadget. Let $k = \Theta(\sqrt{n/m})$ be the target competitive ratio. We release $k$ batches of $m$ gadgets each. Each gadget, parameterized by a start time $t$, contains $2k + 1$ jobs structured as follows and is illustrated in \Cref{fig:multi-construction}:

\begin{figure}[t]
    \centering
    \includegraphics[width=0.8\linewidth]{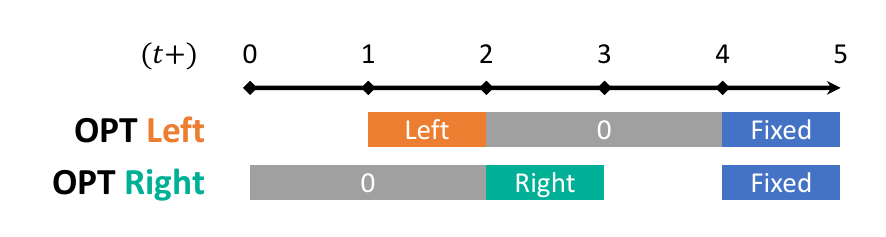}
        \begin{tabular}{p{0.25\linewidth} p{0.2\linewidth} p{0.2\linewidth} p{0.05\linewidth}}
        \toprule
        \textbf{Job Type} & \textbf{Indices ($i$)} & \textbf{$r_i$} & \textbf{$p_i$} \\
        \midrule
        \textcolor{Gray0}{\rule{0.25cm}{0.25cm}} Large 0  & $0$ & $0$ & $2$ \\
        \addlinespace
        \textcolor{FixedBlue}{\rule{0.25cm}{0.25cm}} Fixed          & $\{1, \dots, k\}$ & $4 + (i-1)/{k}$ & $1/k$ \\
        \textcolor{LeftOrange}{\rule{0.25cm}{0.25cm}} Left (when $c = 0$)    & $\{k+2, \dots, 2k+1\}$ & $1 + (i-(k+2))/{k}$ & $1/k$ \\
        \textcolor{RightTeal}{\rule{0.25cm}{0.25cm}} Right (when $c = 1$)   & $\{k+2, \dots, 2k+1\}$ & $2 + (i-(k+2))/{k}$ & $1/k$ \\
        \bottomrule
    \end{tabular}
    \caption{A gadget of $2k + 1$ jobs defined in \Cref{alg:multi-lb}, with two random options. }
    \label{fig:multi-construction}
\end{figure}

\begin{itemize}
    \item \textbf{A Large Job}: Job $0$, with processing time $2$ and release time $t$. 
    \item \textbf{Fixed Small Jobs}: $k$ jobs, each with processing time $1/k$, released sequentially in the interval $[t+4, t+5)$.
    \item \textbf{Random Small Jobs}: $k$ jobs, each with processing time $1/k$, released sequentially in either $[t+1, t+2)$ (if a random coin $c=0$) or $[t+2, t+3)$ (if $c=1$).
\end{itemize}

Finally, we provide a detailed description of how the busy gadget is replicated across all $m$ machines and repeated $k$ times over time in \Cref{alg:multi-lb}. A notable aspect is that all $m$ gadgets forming a batch $b$ (which starts at time $t=5b$) share the same random coin flip $c_b$.

\begin{algorithm}[t]
    \caption{Counterexample Construction for $\Omega(\sqrt{n / m})$ Lower Bound}
    \label{alg:multi-lb}
    \SetAlgoLined
    \KwIn{Number of jobs $n$, number of machines $m$}
    $k \gets \lfloor\sqrt{n / (2m) + 1/16} - 1/4\rfloor$ \tcp*{$k = \Theta(\sqrt{n / m})$ and $km \cdot (2k + 1) \leq n$}
    \BlankLine
    \SetKwFunction{FGadget}{Gadget}
    \SetKwProg{Fn}{Function}{:}{}
    \Fn{\FGadget{$t, c$}}{
        $\mathcal{J} \gets \varnothing$ \tcp*{Gadget job (multi-)set}
        $\mathcal{J} \gets \mathcal{J} \cup \{(r_0 = t, p_0 = 2)\}$\;
        $\mathcal{J} \gets \mathcal{J} \cup \{(r_i = t + 4 + (i - 1) / k, p_i = 1/k) : i = 1 \text{~to~} k\}$\;
        $\mathcal{J} \gets \mathcal{J} \cup \{(r_{k + i} = t + 1 + c + (i - 1) / k, p_{k + i} = 1 / k)  : i = 1 \text{~to~} k\}$\;
        \KwRet{$\mathcal{J}$}\;
    }
    \BlankLine
    \For{$b = 0$ \KwTo $k - 1$}{
        Flip an independent fair coin: $c_b \sim \text{Bernoulli}(1/2)$ \tcp*{$0$: Left, $1$: Right}
        Release the $b$-th batch consisting of $m$ copies of $\FGadget(5b, c_b)$ \; 
    }
\end{algorithm}

\begin{lemma}
    \label{fact:multi-lb-opt}
    $\OPT \leq 6mk$ for the instance constructed in \Cref{alg:multi-lb}.
\end{lemma}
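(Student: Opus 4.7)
The plan is to exhibit an offline non-preemptive schedule whose total flow time is at most $6mk$. Since the gadgets are released in $k$ batches that are spaced $5$ time units apart and each gadget fits inside a window of length $5$, I will handle batches independently and aim to bound the flow time of each individual gadget by the constant $6$. Summing over $m$ gadgets per batch and $k$ batches then yields the desired bound $6mk$.

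For the per-batch construction, note that the $b$-th batch consists of exactly $m$ identical gadgets sharing the same coin flip $c_b$, and there are $m$ machines available. I will assign each gadget to a distinct machine, so the analysis reduces to scheduling a single gadget on a single machine, using advance knowledge of $c_b$. The key idea is that, unlike the online algorithm, OPT can choose the order of the large job relative to the random small burst after it sees $c_b$:
\begin{itemize}
    \item \textbf{Case $c_b=0$ (Left).} The $k$ random small jobs arrive in $[5b+1,5b+2)$. Process them immediately in the interval $[5b+1,5b+2)$ (total size $1$), then run the large job in $[5b+2,5b+4]$, and finally process the $k$ fixed small jobs inside $[5b+4,5b+5)$. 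The large job has flow time $4$, the random small and fixed small bursts each contribute total flow time at most $1+\tfrac{1}{k}\cdot k=O(1)$ because each burst is cleared within a unit-length window containing its arrivals.
    \item \textbf{Case $c_b=1$ (Right).} Run the large job in $[5b,5b+2]$ (flow time $2$), then the random small burst arriving in $[5b+2,5b+3)$ is processed on the fly (flow time $O(1)$), followed by the fixed small burst in $[5b+4,5b+5)$ (flow time $O(1)$).
\end{itemize}
In both cases the machine is freed by time $5b+5$, so the schedule is consistent across successive batches and no gadget interferes with another.

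I will then compute the per-gadget flow time explicitly in each case. For $c_b=0$, adding $4$ (large) $+\,k\cdot\tfrac{1}{k}$ (random small, each finishing within one unit of its release) $+\,k\cdot\tfrac{1}{k}$ (fixed small) gives a total of at most $6$; for $c_b=1$, the total is at most $2+1+1=4\le 6$. Multiplying by $m$ gadgets per batch and summing over $k$ batches yields $F(\OPT)\le 6mk$, as claimed.

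The only mildly subtle point I anticipate is verifying the small-job flow-time counts carefully: each burst of $k$ jobs of size $1/k$ has arrivals staggered over a unit interval, and when processed back-to-back on an initially idle machine the sum of individual flow times is $\sum_{i=1}^{k}\tfrac{i}{k}\cdot\tfrac{1}{k}+O(1)=O(1)$ per burst, not $O(k)$. Making this bookkeeping precise (rather than accidentally overcounting by $k$) is the one place where the constants matter, but it is routine once the per-case schedule is fixed.
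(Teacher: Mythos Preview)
Your proposal is correct and follows essentially the same approach as the paper: schedule each gadget on its own machine, place the large job at $5b$ or $5b+2$ according to $c_b$ so that every small job can start immediately upon release, and observe that the per-gadget flow time is at most $4+1+1=6$. Your final paragraph's worry is unnecessary (and the formula there is garbled): since the $k$ small jobs in a burst are staggered by exactly $1/k$ and each has size $1/k$, processing them immediately gives each a flow time of exactly $1/k$, so the burst contributes exactly $1$ with no delicate bookkeeping required.
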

\begin{proof}
    In each busy gadget, we can schedule the large job either at $5b+0$ or $5b+2$ depending on $c$, and every small job upon arrival. As a result, the total flow time in this gadget is at most $4$ (the large job) $+1$ (fixed small jobs) $+1$ (random small jobs). Therefore $\OPT \leq 6mk$ by summing up all gadgets. 
\end{proof}

Moving to the analysis of an arbitrary online algorithm, we aim to prove that the algorithm's flow time is at least $\Omega(mk^2)$. First, we will show that when a large job is placed such that it conflicts with the small jobs (having an intersection of at least $0.5$), it significantly increases the flow time of those small jobs.

\begin{Definition}
    In a given batch, the set of $mk$ jobs released in a time interval $[t, t + 1)$ constitutes a \textbf{small job period}. A large job \textbf{conflicts with} this period if its execution starts within the interval $[t - 1.5, t + 0.5]$. This definition guarantees that the large job's execution (of length 2) overlaps with the period $[t, t+1)$ for a duration of at least $0.5$.
\end{Definition}

\begin{lemma}
    \label{lem:multi-lb-small}
    If $x$ large jobs conflicts with a small job period at $[t,t+1)$, then the total flow time of these small jobs within $[t, t+1)$ is at least $xk / 8$. 
\end{lemma}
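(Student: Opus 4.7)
The plan is to bound the flow time via a work‑backlog argument over the unit window $[t,t+1)$. First I would set up the arrival/capacity balance: the $mk$ small jobs of this period (each of size $1/k$) are released at the staggered times $t+(i-1)/k$ for $i=1,\dots,k$ with $m$ copies per release, so the number released by any time $s\in[t,t+1]$ is at least $(s-t)km$, and the total work released equals $m$, which exactly matches the $m$ machine-time available in $[t,t+1]$. Thus any machine-time that the large jobs steal will necessarily cause a backlog.

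Next I would quantify the obstruction. Let $B(s)$ denote the total machine-time used by the $x$ conflicting large jobs during $[t,s]$. Writing a single large job's overlap with $[t,t+1)$ as $[a,a+\ell]$ with $\ell\geq 0.5$ and $a\in[t,t+1-\ell]$, its contribution to $\int_t^{t+1} B(s)\,ds$ is
\[
\int_a^{t+1}\bigl(\min(s,a+\ell)-a\bigr)\,ds \;=\; \ell\!\left(t+1-a-\tfrac{\ell}{2}\right).
\]
Minimizing over feasible $a$ (worst case $a=t+1-\ell$) gives at least $\ell^2/2\geq 1/8$, so summing over the $x$ large jobs yields $\int_t^{t+1} B(s)\,ds\geq x/8$.

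Then I would lower-bound the count $N(s)$ of released-but-uncompleted small jobs of this period. Since the machine-time devoted to these jobs by time $s$ is at most $m(s-t)-B(s)$, the number of completed small jobs is at most $k\bigl(m(s-t)-B(s)\bigr)$; subtracting from the release count $\geq (s-t)km$ gives $N(s)\geq kB(s)$ throughout $[t,t+1)$. Using the standard identity $\sum_j F_j=\int_t^\infty N(s)\,ds$ restricted to the period's jobs, I obtain
\[
\sum_{j} F_j \;\geq\; \int_t^{t+1} N(s)\,ds \;\geq\; k\!\int_t^{t+1} B(s)\,ds \;\geq\; \tfrac{xk}{8}.
\]

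The main obstacle is the per-job lower bound $\ell^2/2\geq 1/8$: one needs to realize the minimum is attained when the large job sits at the right end of the window, so the short length $\ell=0.5$ alone is not enough — the \emph{position} also matters. Secondary care is needed to check that contributions from different large jobs sum additively inside $B(s)$ (they do, because each runs on its own machine and $B$ is a sum of per-machine occupancies), and that the "number released" bound holds at non-grid times.
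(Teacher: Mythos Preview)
Your proposal is correct and is essentially the same argument as the paper's, just with the integration by parts done implicitly: the paper tracks the instantaneous number of occupied machines $m-g(u)$ and computes $k\int_0^1 (m-g(u))(1-u)\,du$, whereas you track the cumulative occupied machine-time $B(s)=\int_t^s (m-g(u))\,du$ and compute $k\int_t^{t+1}B(s)\,ds$; Fubini shows these coincide, and both minimize a single job's contribution to this integral at the right-end placement, yielding $1/8$. One minor remark: your parenthetical ``each runs on its own machine'' is not needed (and not quite true in the edge case $s_1=t-1.5$, $s_2=t+0.5$); additivity of $B(s)$ follows directly from linearity, and $B(s)\le m(s-t)$ is automatic since at most $m$ jobs run at once.
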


\begin{proof}

Define $f(u) = m \cdot \lceil ku \rceil$ as the number of small jobs released at time $t+u$, and let $g(u)$ be the number of machines not processing large jobs at time $u$. The number of small jobs that can be completed by time $u$ is at most $\lfloor k \int_0^u g(y) dy \rfloor$. 
The total flow time of the $km$ small jobs in the batch is no less than 
\begin{align*}
    F &\geq \int_0^1 f(u) - \lfloor k \int_0^u g(y) \mathrm{d}y \rfloor \,\mathrm{d}u \\
    &\geq k \cdot \int_0^1 mu - \int_0^u g(y) \mathrm{d}y \,\mathrm{d}u \\
    &= k \cdot \int_0^1 (m - g(u)) \cdot (1 - u) \,\mathrm{d}u,
\end{align*}

If no large jobs conflict with this period, $g(u)$ is defined as $m$ for every $u \in [0,1)$. By the condition of the lemma, we have $x$ jobs that decrease $g(u)$ by $1$ with a duration of at least $0.5$. The minimum contribution of the lower bound occurs when each large job conflicts the period at $[0.5,1)$, yielding 

\begin{align*}
    F &\geq x k \cdot \min_{s \in [0,0.5]} \int_s^{s+0.5} 1 \cdot (1-u) \,\mathrm{d}u\\
    &= xk \cdot \int_{0.5}^{1} (1-u) \,\mathrm{d}u = xk / 8.
\end{align*}
\end{proof}

\Cref{lem:multi-lb-small} establishes that an incorrectly placed large job incurs a total flow time of $\Omega(k)$. In the next lemma, we aim to prove that a deterministic algorithm must either suffer a large flow time because of incorrectly placed large jobs or defer the large jobs to the next gadget. We capture the total flow time using a potential function $\Phi(b,d)$ that represents a lower bound on the expected total flow time in the time interval $[5b - 0.5,\infty)$, conditioned on the algorithm having $d$ unscheduled large jobs at time $5b - 0.5$.

\begin{theorem}
    \label{thm:multi-lb}
   Any randomized online algorithm for total flow time minimization on parallel machines has a competitive ratio of at least $\Omega(\sqrt{n/m})$, for every fixed pair $(n, m)$.
\end{theorem}

\begin{proof}

For a fixed busy batch $b$, we analyze any fixed deterministic algorithm's behavior on large jobs during the interval $[5b - 0.5,\; 5(b+1) - 0.5)$.\footnote{Note that when $b = 0$, the negative part doesn't exist, and it will not affect the discussion when the negative part come into consideration.} Since $b$ is fixed in context, we simplify the interval to $[-0.5,\; 4.5)$.

For a fixed arbitrary realization of $c_0, \dots, c_{b-1}$, consider two possible realizations of $c_b$. Let $N_b$ be the random variable (w.r.t. $c_b$) representing the number of large jobs the algorithm starts within $[-0.5,\; 4.5)$, and $I_b$ be the random variable (w.r.t. $c_b$) representing the number of large jobs scheduled in this batch that conflict with small job periods.

Note that before time $1$, the algorithm does not know the realization of $c_b$ this round. Therefore, its behavior before time $1$ is not related to the result of $c_b$. 

\begin{itemize}
    \item Let $x$ be the number of large jobs started in $[-0.5,\; 1)$. If small jobs are Left, all these $x$ jobs contribute to $I$.
\end{itemize}

After time $1$, the algorithm's scheduling decisions may depend on $c_b$. We consider the two cases separately:

\begin{itemize}
    \item {Right case (w.p. $0.5$):} Small jobs occupy $[2,\; 3)$ and $[4,\; 5)$. Any large job started in $[1,\; 4.5)$ must conflict with a small job. Let $y$ be the number of such large jobs, which will contribute $y$ to $I_b$.

    \item {Left case (w.p. $0.5$):} Small jobs occupy $[1,\; 2)$ and $[4,\; 5)$. The algorithm can safely schedule up to $m$ large jobs in the gap $[1.5,\; 2.5)$ without conflicting any small job period. Let $z$ be the number of large jobs scheduled in $[1,\; 1.5), [2.5, \; 4.5)$; then at least $z$ jobs conflict with small jobs.
\end{itemize}

Combining all cases, we compute the expectations:
\[
\E_{c_b}(N_b) \leq x + \frac{y}{2} + \frac{z + m}{2}, \qquad
\E_{c_b}(I_b) \geq \frac{x}{2} + \frac{y}{2} + \frac{z}{2}.
\]

Let $F^{(S)}_b$ denote the total flow time of small jobs in batch $b$. By \Cref{lem:multi-lb-small},
\[
\E_{c_b}(F^{(S)}_b) \geq \frac{k}{8} \cdot \E_{c_b}(I_b) \geq \frac{k}{8} \cdot \left(\frac{\E_{c_b}(N_b) - \frac{m}{2}}{2}\right) \geq \frac{k}{16} \cdot \left( \E_{c_b}(N_b) - \frac{m}{2}\right).
\]
The above argument applies to each $b$ with all possible $c_0, \dots, c_{b-1}$, by summing them up we have
\[
\E \left(F^{(S)}\right) \geq \sum_{b=0}^{k-1} \E\left(F^{(S)}_b\right) \geq \frac{k}{16} \cdot \sum_{b=0}^{k-1}\left(  \E(N_b) - \frac{m}{2} \right).
\]

For the total flow time $F^{(L)}$ incurred by large jobs, we count the number of unfinished and newly released large jobs at time $0$ for each batch. By the fact that unfinished jobs will be completed later than $t + 1$ for a batch starting at $t$, they must contribute to the total flow time in $[t, t + 1)$. Therefore,
\begin{align*}
\E(F^{(L)}) &\geq \sum_{i = 0}^{k - 1} \left((i + 1)\cdot m - \sum_{b = 0}^{i - 1} \E(N_b)\right)
\geq \E\left( \sum_{i=0}^{k-1} \sum_{b=0}^{i} \left( m - \E(N_b) \right) \right) \\
&= \sum_{b=0}^{k-1} \left( m - \E(N_b) \right) \cdot (k - b) \\
&\geq \frac{k}{16} \cdot \sum_{b=0}^{\lfloor 15k/16 \rfloor} \left( m - \E(N_b) \right).
\end{align*}

Combining the two bounds, we have
\[
\E(F^{(L)}) + \E(F^{(S)}) \geq \frac{k}{16} \cdot \sum_{b=0}^{\lfloor 15k/16 \rfloor} \frac m 2 = \Omega(k^2 m).
\]
By \Cref{fact:multi-lb-opt}, $\OPT = O(mk)$, then any deterministic algorithm is $\Omega(k)=\Omega(\sqrt{n/m})$ competitive on the counterexample, and we can conclude the theorem by Yao's minimax principle.
\end{proof}

\subsection{Generalization to Multiple Machines with Kill-and-Restart}
\label{sec:multi-randomized-lb-restart}


The details of the construction are presented in \Cref{fig:multi-construction-restart} and \Cref{alg:multi-restart-lb}. The overall structure of this hard instance follows that of \Cref{sec:multi-lb}, with one key difference: Each gadget now contains two large jobs of different sizes. This modification strengthens the lower bound instance, as algorithms must now not only decide whether to schedule a large job, but also determine which type of large job to schedule. Intuitively, in the previous hard instance, the algorithm could start a large job early and decide whether to kill it after the randomness is revealed. However, in the current construction, the algorithm must decide which large job to start in advance; a wrong decision can lead to a substantial cost.



\begin{figure}[ht]
    \centering
    \includegraphics[width=1.0\linewidth]{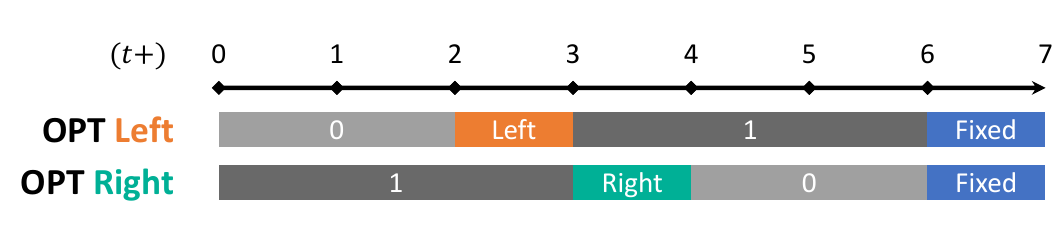}
        \begin{tabular}{p{0.25\linewidth} p{0.2\linewidth} p{0.2\linewidth} p{0.05\linewidth}}
        \toprule
        \textbf{Job Type} & \textbf{Indices ($i$)} & \textbf{$r_i$} & \textbf{$p_i$} \\
        \midrule
        \textcolor{Gray0}{\rule{0.25cm}{0.25cm}} Large 0  & $0$ & $0$ & $2$ \\
        \textcolor{Gray1}{\rule{0.25cm}{0.25cm}} Large 1  & $1$ & $0$ & $3$ \\
        \addlinespace 
        \textcolor{FixedBlue}{\rule{0.25cm}{0.25cm}} Fixed          & $\{2, \dots, k+1\}$ & $6 + (i-2)/{k}$ & $1/k$ \\
        \textcolor{LeftOrange}{\rule{0.25cm}{0.25cm}} Left (when $c = 0$)    & $\{k+2, \dots, 2k+1\}$ & $2 + (i-(k + 2))/{k}$ & $1/k$ \\
        \textcolor{RightTeal}{\rule{0.25cm}{0.25cm}} Right (when $c = 1$)   & $\{k+2, \dots, 2k+1\}$ & $3 + (i-(k + 2))/{k}$ & $1/k$ \\
        \bottomrule
    \end{tabular}
    \caption{A gadget of $2k + 2$ jobs defined in \Cref{alg:multi-restart-lb}, with two random options. }
    \label{fig:multi-construction-restart}
\end{figure}

\begin{algorithm}[t]
    \caption{Counterexample Construction for $\Omega(\sqrt{n / m})$ Lower Bound}
    \label{alg:multi-restart-lb}
    \SetAlgoLined
    \KwIn{Number of jobs $n$, number of machines $m$}
    $k \gets \lfloor\sqrt{n / (2m) + 1/4} - 1/2\rfloor$ \tcp*{$k = \Theta(\sqrt{n / m})$ and $km \cdot (2k + 2) \leq n$}
    \BlankLine
    \SetKwFunction{FGadget}{Gadget}
    \SetKwProg{Fn}{Function}{:}{}
    \Fn{\FGadget{$t, c$}}{
        $\mathcal{J} \gets \varnothing$ \tcp*{Gadget job (multi-)set}
        $\mathcal{J} \gets \mathcal{J} \cup \{(r_0 = t, p_0 = 2),\ (r_1=t,p_1=3)\}$\;
        $\mathcal{J} \gets \mathcal{J} \cup \{(r_i = t + 6 + (i - 2) / k, p_i = 1/k) : i = 2 \text{~to~} k+1\}$\;
        $\mathcal{J} \gets \mathcal{J} \cup \{(r_{k + i} = t + 2 + c + (i - 2) / k, p_{k + i} = 1 / k)  : i = 2 \text{~to~} k+1\}$\;
        \KwRet{$\mathcal{J}$}\;
    }
    \BlankLine
    \For{$b = 0$ \KwTo $k - 1$}{
        Flip an independent fair coin: $c_b \sim \text{Bernoulli}(1/2)$ \tcp*{0: Left, 1: Right}
        Release the $b$-th batch consisting of $m$ copies of $\textsc{Gadget}(7b, c_b)$ \;
    }
\end{algorithm}

\begin{lemma}
    \label{fact:multi-restart-lb-opt}
    $\OPT \leq 11mk$ for the instance constructed in \Cref{alg:multi-restart-lb}.
\end{lemma}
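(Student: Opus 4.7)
The plan is to exhibit an explicit non-preemptive feasible schedule whose total flow time is at most $11mk$. The key structural observation is that consecutive batches are released $7$ time units apart while every release inside the $b$-th batch lies in $[7b,\,7b+7)$, so the batches are temporally disjoint. This lets me schedule each batch independently, and within a batch I can assign the $m$ copies of the gadget one per machine. The problem thus reduces to bounding the flow time of a single gadget on a single machine by $11$ and multiplying by $km$.

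For a gadget starting at time $t$, I would split on the value of $c$. If $c=0$ (Left), I process Large $0$ on $[t,t+2]$, then the $k$ Left small jobs on $[t+2,t+3]$ in release order, then Large $1$ on $[t+3,t+6]$, then the $k$ Fixed small jobs on $[t+6,t+7]$. The large jobs contribute flow times $2$ and $6$, and each block of $k$ small jobs contributes $1$ (argued below), yielding a total of $2+1+6+1=10$. If $c=1$ (Right), I instead process Large $1$ on $[t,t+3]$, the Right small jobs on $[t+3,t+4]$, Large $0$ on $[t+4,t+6]$, and Fixed small on $[t+6,t+7]$, yielding $3+1+6+1=11$. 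In both cases the gadget's total flow time is at most $11$.

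The one calculation to verify is that each block of $k$ small jobs contributes exactly $1$ to the flow time. Within such a block the release times are $\tau,\tau+1/k,\ldots,\tau+(k-1)/k$ and each job has size $1/k$; the machine becomes free at time $\tau$, so processing them in release order means the $i$-th job starts exactly at its release time and completes $1/k$ later, contributing flow $1/k$ each, hence $1$ in total. It is also worth noting that the last Fixed small job completes exactly at $t+7$, which is when the next batch arrives, so the batches really are schedulable on the same machine without conflict.

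There is essentially no technical obstacle here; the argument is a careful case split and bookkeeping. The only subtlety is making sure the case split on $c$ is made \emph{with knowledge of $c$} (which is fine because this is an offline bound on $\OPT$), and that the order chosen for the two large jobs avoids stranding either small-job block. Combining the per-gadget bound of $11$ over all $k$ batches and all $m$ copies per batch gives $\OPT \le 11mk$, as claimed.
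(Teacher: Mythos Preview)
Your proof is correct and follows essentially the same approach as the paper: exhibit, for each gadget and each value of $c$, an explicit schedule on a single machine that avoids any conflict between large and small jobs, then sum over the $mk$ gadgets. The paper's version is terser (it just states the large-job placement and bounds the per-gadget cost by $9+1+1=11$), but your added verification that the small-job blocks contribute exactly $1$ each and that the gadget finishes by $t+7$ so batches do not interfere is accurate and useful.
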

\begin{proof}
    In each busy gadget, if $c=0$, then schedule large job $0$ at $0$ and large job $1$ at $3$; if $c=1$, then schedule large job $1$ at $0$ and large job $0$ at $4$. In this way, no large jobs intersect with any small jobs. As a result, the total flow time in this gadget is at most $9$ (large jobs) $+1$ (fixed small jobs) $+1$ (random small jobs). Therefore, $\OPT \leq 11mk$ by summing up all gadgets.
\end{proof}



\begin{theorem}
    \label{thm:multi-lb-restart}
   Any randomized online algorithm for total flow time minimization on parallel machines has a competitive ratio of at least $\Omega(\sqrt{n/m})$, for every fixed pair $(n, m)$, even with the capability of kill-and-restart.
\end{theorem}

\begin{proof}
For a fixed busy batch $b$, we analyze any fixed deterministic algorithm's behavior on large jobs during the interval $[7b - 0.5,\; 7(b+1) - 0.5)$. Since $b$ is fixed in context, we simplify the interval to $[-0.5,\; 6.5)$.

For a fixed realization of $c_0, \dots, c_{b-1}$, consider two possible realizations of $c_b$. Let $N_{b, 2}, N_{b, 3}$ be the random variable (w.r.t. $c_b$) representing the number of jobs with processing time $2, 3$ the algorithm start within $[-0.5,\; 6.5)$ and don't restart, and $I_{b,2}, I_{b,3}$ defined similarly that intersect with small job periods. Let $N_b = N_{b, 2} + N_{b, 3}, I_b = I_{b, 2} + I_{b, 3}$.

Note that before time $2$, the algorithm does not know the realization of $c_b$ this round. Therefore, its behavior before time $2$ is not related to the result of $c_b$. 

\begin{itemize}
    \item Let $x_2$ be the number of jobs with processing time $2$ started in $[-0.5,\; 2)$.
    \item Let $x_3$ be the number of jobs with processing time $3$ started in $[-0.5,\; 2)$.
\end{itemize}

After time $2$, the algorithm scheduling decisions may depend on $c_b$. We consider the two cases separately:
\begin{itemize}
    \item {Left case (w.p. $0.5$):} None of the $x_3$ jobs can finish without intersection, and at most $x_{2}^{(L)} =\min \{m, x_2\}$ out of the $x_2$ jobs can finish without intersection. Let $y_2^{(L)}$ be the number of jobs with processing time $2$ starting in $[2, 6.5)$ that finish without intersection, and $y_3^{(L)}$ be the number for processing time $3$. At most $m$ jobs out of $y_2^{(L)} + y_3^{(L)}$ can finish without intersection.
    \item {Right case (w.p. $0.5$):} At most $m$ out of these $x_2 + x_3$ jobs could end up without intersection with small jobs; let $x_2^{(R)}$ and $x_3^{(R)}$ be the number of such jobs, respectively. At most $m$ jobs with processing time $2$ released in $[2, 6.5)$ can finish without intersection; let $y_2^{(R)}$ be the number of such jobs. None of the jobs with processing time $3$ can finish without intersection.
\end{itemize}

Then, 
\begin{align}
\E_{c_b}(N_{b} - I_{b}) &\leq \frac 1 2 \left(x_2^{(L)} + y_2^{(L)} + y_3^{(L)}\right) + \frac 1 2 \left(x_2^{(R)} + x_3^{(R)} + y_2^{(R)}\right)\\
&\leq \frac 1 2 \left(x_2^{(L)} + m\right) + \frac 1 2 \left(x_2^{(R)} + x_3^{(R)} + y_2^{(R)}\right)\\
&\leq \frac 3 2 m + \frac 1 2 x_2^{(L)} \leq \frac 3 2 m + \frac 1 2 x_2
\label{eq:1}
\end{align}
and
\begin{align} 
\E_{c_b}(N_{b,3} - I_{b,3}) &\leq \frac 1 2 y_3^{(L)} + \frac 1 2 x_3^{(R)} \leq \frac 1 2 m + \frac 1 2 \left(m - x_2\right) \leq m - \frac 1 2 x_2. 
\label{eq:2}
\end{align}

By $(\ref{eq:1}) + (\ref{eq:2})$, we have
\begin{align*}
\E_{c_b}(N_b - I_b) + \E_{c_b}(N_{b,3} -I_{b,3}) 
    &\leq \frac 3 2 m + \frac 1 2 x_2 +  m - \frac 1 2 x_2 = \frac 5 2 m,
\end{align*}
and hence $\E(N_b - I_b) + \E(N_{b,3} - I_{b,3}) \leq \frac 5 2 m$ since it doesn't rely on the specific values of $c_0,\ldots,c_{b-1}$.

Consider the flow time by the large jobs. If we consider the jobs with processing time $2$ or $3$ and their intersections only, we have
\begin{align*}
\E(F^{(2,3)}) &\geq \sum_{b = 0}^{k - 1} \sum_{i = 0}^{b} (2m - E(N_b)) + \sum_{b = 0}^{k - 1} E(I_b) \cdot k / 8 \\
&\geq  \sum_{b = 0}^{k - 1} (2m - E(N_b)) \cdot (b - k) + \sum_{b = 0}^{k - 1} E(I_b) \cdot k / 8 \\
&\geq \sum_{b = 0}^{\lfloor \frac{7}{8} k\rfloor} \left(2m - E(N_b - I_b)\right) \cdot k/8.
\end{align*}
Similarly, consider the jobs with processing time $3$ and their intersections only, we have
\begin{align*}
\E(F^{(3)}) &\geq \sum_{b = 0}^{\lfloor \frac{7}{8} k\rfloor} (m - \E(N_{b,3} - I_{b,3})) \cdot k/8.
\end{align*}

Therefore, the total expectation of total flow time is at least
\begin{align*}
\E(F) = \frac 1 2 \cdot \left(2 \cdot \E(F)\right)
& \geq \frac 1 2 \cdot \left(\E(F^{(2, 3)}) + \E(F^{(3)}) \right) \\
& \geq \frac k {16} \cdot \left( \sum_{b = 0}^{\lfloor \frac{7}{8} k\rfloor} 3m - \left(\E(N_b - I_b) + \E(N_{b,3} -I_{b,3})\right)\right) \\
& \geq \frac k {16} \cdot \left( \sum_{b = 0}^{\lfloor \frac{7}{8} k\rfloor} m/2\right) = \Omega(k^ 2 m).
\end{align*}

By \Cref{fact:multi-restart-lb-opt}, $\OPT = O(k m)$, hence any deterministic algorithm is $\Omega(k) = \Omega(\sqrt{n / m})$ competitive and the theorem is proved.
\end{proof}


\subsection{Deterministic Algorithms: Single Machine with Kill-and-Restart}
\label{sec:single-restart-lb}
In this section, we show that the competitive ratio of any deterministic online algorithm with kill-and-restart is at least $\Omega(n / \log n)$ for the single-machine case. The adversarial instance, detailed in \Cref{alg:single-restart-lb}, uses a two-phase construction.

\begin{itemize} 
    \item \textbf{In Phase 1,} the adversary uses two large jobs and carefully timed small jobs to force the online algorithm, $\ALG$, to end the phase with at least one large job still unstarted. If $\ALG$ manages to complete both large jobs, it is forced to incur an $\Omega(n)$ flow time, ending the game.
    \item \textbf{In Phase 2,} the adversary exploits the presence of the unstarted large job. It releases a stream of tiny $\varepsilon$-jobs at logarithmically increasing intervals before each restart of the large job. As we will show, this construction forces $\ALG$ to incur a total flow time of $\Omega(n / \log n)$, regardless of its specific kill-and-restart strategy.
\end{itemize}

\begin{algorithm}[H]
    \caption{Adversary for Deterministic Lower Bound $\Omega(n / \log n)$}
    \label{alg:single-restart-lb}
    \SetAlgoLined
    \KwIn{Deterministic online algorithm $\ALG$, number of jobs $n$}
    $c \gets 0.5, \varepsilon \gets 0.5 n ^ {-2}$\;
    \SetKwBlock{PhaseOne}{\textbf{Phase 1: Construct an unsolved job}}{}
    \SetKwBlock{PhaseTwo}{\textbf{Phase 2: Exploit the unsolved job $u$}}{}
    \PhaseOne{
        Release job $1$ at $0$ with processing time $4$ \;
        Release job $2$ at $2$ with processing time $1$ \;
        \uIf{
            $\ALG$ is working on job $1$ before $3$
        }{
            Release $c/2 \cdot n$ jobs at $3$ with processing time $\varepsilon$ \;
            Release $c/2 \cdot n$ jobs at $7$ with processing time $\varepsilon$ \;
            $t^* \gets 7$\;
        }
        \Else{
            Release $c \cdot n$ jobs at $5$ with processing time $\varepsilon$ \; 
            $t^* \gets 5$\;
        }
        Wait until $\ALG$ starts any job released at $t^*$, let $t$ be the current time \;
        \uIf{
            jobs $1, 2$ are both completed or $t - t^* \geq 1$
        }{
            Release $n - 2 - cn$ jobs at time $t$ with processing time $\varepsilon$ \tcp*{Ensure $n$ jobs}
            \Return\tcp*{$\ALG = \Omega(n), \OPT = O(1)$}
        }
        \Else{
            Wait $c\cdot n\cdot \varepsilon$ units of time \tcp*{Ensure $\OPT$ finishes all jobs}
            \textbf{continue} to \textbf{Phase 2} with any unsolved job $u$ among jobs $1, 2$ \;
        }
    }
    \PhaseTwo {
        $T = \{ H_i / H_n : i = 1 \dots n, H_i = \sum_{k = 1}^{i} 1/k\}$ \tcp*{Set of thresholds}
        $l \gets n - 2 - cn$ \tcp*{Number of jobs left}
        \While{$l > 0$}{
            Wait until $\ALG$ is working on job $u$, let $t$ be the current time \;
            Let $t'$ be the time when $\ALG$ starts job $u$ \;
            $\tau \gets \min \{ x \in T \mid x > t - t' \}$ \;
            \If{
                $\ALG$ hasn't killed job $u$ before time $t' + \tau$
            }{
                Release a job at time $t' + \tau$ with processing time $\varepsilon$ \;
                $l \gets l - 1$ \;
            }
        }
    }
\end{algorithm}

\begin{lemma}
\label{lem:restart-phase-1}
Against the adversary in Phase $1$ of \Cref{alg:single-restart-lb}, any deterministic online algorithm either incurs a flow time of $\Omega(n)$ or proceeds to Phase 2 with at least one of job 1 or job 2 unstarted. In all cases, $\OPT$ remains $O(1)$ and all jobs are solved.
\end{lemma}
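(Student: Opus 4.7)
The plan is to prove three things in order: (i) the instance always carries exactly $n$ jobs and admits a non-preemptive offline schedule of total flow time $O(1)$; (ii) whenever the Phase~$1$ return branch fires, $\ALG$ has already incurred flow time $\Omega(n)$; and (iii) whenever the branch does not fire, at least one of job~$1$ or job~$2$ is uncompleted and is not currently being processed, so it can serve as the Phase~$2$ handle $u$.

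For (i), I would exhibit explicit non-preemptive offline schedules. In Case~$1$ (small releases at times~$3$ and~$7$), process job~$2$ in $[2,3]$, pack the $cn/2$ $\varepsilon$-jobs released at~$3$ into $[3,\,3+cn\varepsilon/2]$, run job~$1$ in $[3+cn\varepsilon/2,\,7+cn\varepsilon/2]$, and process later $\varepsilon$-jobs on arrival; Case~$2$ is analogous, with job~$1$ deferred until after the time-$5$ batch. The two large jobs contribute $O(1)$ to the total flow time, and because $\varepsilon = 0.5\,n^{-2}$ the $n$ size-$\varepsilon$ jobs together contribute at most $n \cdot n\varepsilon = O(1)$. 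A release count gives exactly $2 + cn + (n-2-cn) = n$ jobs whether the trailing $n-2-cn$ $\varepsilon$-jobs are released in the return branch or during Phase~$2$.

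For the $\ALG$ analysis I would split on the adversary's time-$3$ test, reading ``$\ALG$ is working on job~$1$ before~$3$'' as ``$\ALG$ is processing job~$1$ throughout some final interval $[3-\delta,3)$''. In Case~$2$ ($t^*=5$), at $3^-$ job~$1$ is either unstarted or already killed, so its successful $4$-unit run starts at some $s_1 \ge 3$ and hence cannot complete before time~$7$; consequently if the return branch fires then $t \ge 7$, so $t - t^* \ge 2$ and each of the $cn$ jobs released at~$5$ waits $\ge 1$, giving $\ALG = \Omega(n)$. If the branch does not fire, $\ALG$ is running an $\varepsilon$-job at some time $t<6$ while job~$1$ is not complete and not under execution, so $u := \text{job }1$ is the Phase~$2$ handle. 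In Case~$1$ ($t^*=7$), the disjunct $t-7\ge 1$ yields $\Omega(n)$ directly from the $cn/2$ jobs released at~$7$; the delicate subcase is ``both completed with $t<8$''. Let $a$ be the first time $\ALG$ starts any $\varepsilon$-job released at~$3$; the plan is to show $a \ge 4$, so each of the $cn/2$ such jobs contributes flow time $\ge 1$ and their sum is $\Omega(n)$. Suppose for contradiction $a<4$; since $\ALG$ is running an $\varepsilon$-job at $a$, neither the final run of job~$1$ nor that of job~$2$ is in progress at $a$, and moreover neither can have completed by $a$: job~$1$'s $4$-unit final run cannot complete by any $a<4$, while any $1$-unit final run of job~$2$ ending by $a$ would have $s_2 \in [2,3)$ and would place the machine on job~$2$ at $3^-$, contradicting the Case~$1$ invariant. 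Hence the combined $4+1 = 5$ units of the successful runs of jobs~$1$ and~$2$ must lie in $[a,t]$, which has length $t - a < 8 - 3 = 5$---a contradiction. If no trigger fires in Case~$1$, the same reasoning supplies an uncompleted $u \in \{1,2\}$ that is not currently being processed.

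The main obstacle is this ``both completed with $t<8$'' subcase, because kill-and-restart admits many scheduling patterns for job~$1$ (possibly several killed attempts interleaved with job~$2$ or with small jobs); the plan is to collapse all of them uniformly through the single $4$-unit block of job~$1$'s successful execution together with the pigeonhole bound $t - a < 5$ on the window available after the first time-$3$ $\varepsilon$-job starts.
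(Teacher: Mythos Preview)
Your argument is correct and is actually more rigorous than the paper's own proof. The paper proceeds by enumerating a handful of ``well-formed'' strategies (those that only change decisions at job arrivals and never idle), displays them in a figure, and then adds a disclaimer that the adversary remains valid against strategies outside this enumeration. You instead branch directly on the adversary's two exit conditions---the $t - t^* \ge 1$ trigger and the ``both completed'' trigger---and, for the delicate Case~1 subcase ``both completed with $t < 8$'', give a clean pigeonhole argument: the successful runs of jobs~$1$ and~$2$ together occupy $5$ units, must both lie in $[a,t]$, yet $t - a < 8 - 3 = 5$. This collapses all kill-and-restart patterns uniformly with no enumeration, which is exactly the robustness the paper's disclaimer gestures at but does not supply.

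One minor wording slip in your Case~2 paragraph: ``if the return branch fires then $t \ge 7$'' is only correct under the ``both completed'' disjunct; the $t - t^* \ge 1$ disjunct by itself only gives $t \ge 6$. Either way $t - t^* \ge 1$ holds and the $cn$ jobs at time~$5$ each wait at least~$1$, so the $\Omega(n)$ conclusion is unaffected. Also note that the lemma as stated says ``unstarted'', but what you (and the paper) actually establish is that $u$ is \emph{uncompleted} and not currently running at time~$t$; since $cn\varepsilon < 1 \le p_u$, it remains uncompleted through the end of Phase~1, and that is all Phase~2 needs.
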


\begin{proof}
    
See \Cref{fig:phase1} which enumerates all possible strategies for $\ALG$ that (1) only make new decisions at new job arrivals, and (2) do not let the machine idle since $\ALG$ has the ability to restart. Though it only enumerates these well-formed algorithmic decisions and does not detail other strategies that do not satisfy these conditions, we argue that the description of the adaptive adversary is sufficiently valid for them.

\begin{figure}[ht]
    \includegraphics[width=\linewidth]{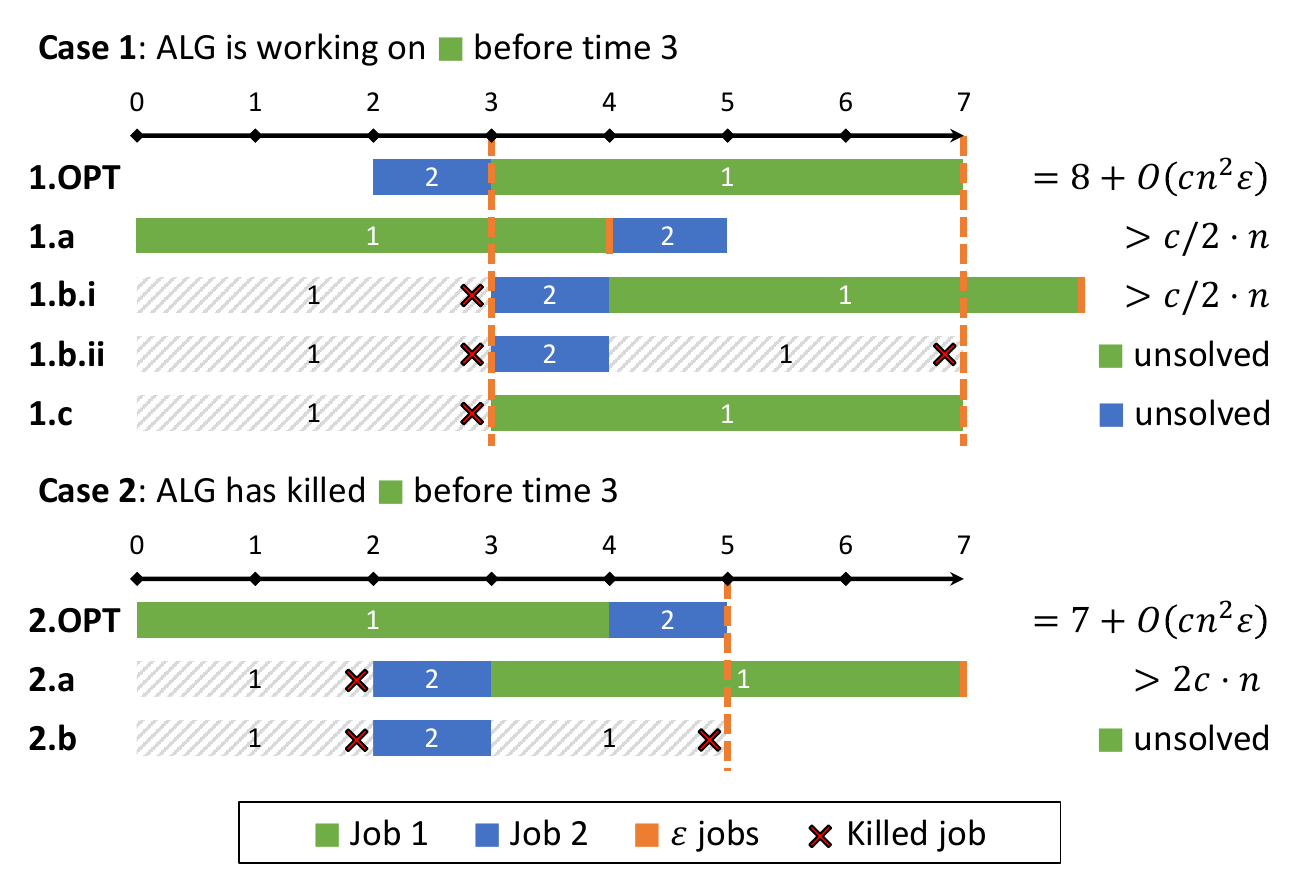}
    \caption{The cases in Phase 1 for $\ALG$.}
    \label{fig:phase1}
\end{figure}
\begin{enumerate}
    \item If $\ALG$ is working on job $1$ at time $3$, then $\OPT$ could solve job $2$ by $3$ and job $1$ by $7 + c/2 \cdot n \varepsilon$. For $\ALG$, however:
    \begin{enumerate}
        \item If $\ALG$ keeps working on job $1$ until finished, the flow time is at least $c / 2 \cdot n$;
        \item If $\ALG$ kills at time $3$ to solve $\varepsilon$-jobs, switch to job $2$ and then switch to job $1$:
        \begin{enumerate}
            \item If it doesn't switch to $\varepsilon$-jobs at time $7$, the flow time is at least $c / 2 \cdot n$;
            \item Otherwise $\ALG$ must switch to $\varepsilon$-jobs at time $7$, leaving job $1$ unsolved.
        \end{enumerate}
    \end{enumerate}
    \item Otherwise, $\ALG$ cannot finish job $1, 2$ by time $5$ while $\OPT$ can. In this case $\ALG$ would switch to job $2$ at time $2$ and then to solve job $1$, or it would fall back to case 1.
    \begin{enumerate}
        \item If $\ALG$ keeps working on job $1$ until finished, the flow time is at least $2c \cdot n$;
        \item Otherwise $\ALG$ must switch to $\varepsilon$-jobs at time $5$, leaving job $1$ unsolved.
    \end{enumerate}
\end{enumerate}
\end{proof}

\begin{lemma}
\label{lem:restart-phase-2}
In Phase 2 of \Cref{alg:single-restart-lb}, an online algorithm $\ALG$ must schedule the jobs released in this phase along with an unstarted large job $u$ (with $p_u \ge 1$) carried over from Phase 1. The total flow time for $\ALG$ on the Phase 2 jobs is at least $\Omega(n / \log n)$. In contrast, an optimal offline solution for an instance containing only the Phase 2 jobs (with no initial unstarted job) has a flow time of $O(1)$.
\end{lemma}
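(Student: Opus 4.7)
My plan is to prove the $O(1)$ upper bound on OPT and the $\Omega(n/\log n)$ lower bound on ALG separately.

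For OPT, removing $u$ leaves at most $l = n - 2 - cn = O(n)$ tiny jobs of size $\varepsilon = 1/(2n^2)$, with total work $l\varepsilon = O(1/n)$. A trivial greedy offline scheduler finishes every job within $O(1/n)$ of its release, so $\sum_j F_j = O(1)$.

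For ALG, I would partition its timeline into successive \emph{attempts} on $u$: maximal intervals $[t'_k, t'_k + \sigma_k)$ during which ALG processes $u$ without killing. Because $u$ must finish and $p_u \geq 1 = H_n/H_n$, at least one attempt (the \emph{completing} one) has $\sigma_k \geq 1$. By tracing the adversary's while-loop, attempt $k$ emits a tiny job at each threshold $t'_k + H_i/H_n$ with $H_i/H_n \leq \min(\sigma_k, 1)$ while the counter $l > 0$; writing $N_k$ for the number of emissions in attempt $k$ gives $\sum_k N_k = l$. Since ALG is occupied with $u$ during the attempt, a tiny job released at the $i$-th threshold has flow time at least $\min(\sigma_k, 1) - H_i/H_n$; summing over $i$ with the identity $\sum_{i=1}^{N} H_i = (N+1)H_N - N$ yields the per-attempt lower bound $(N_k - H_{N_k})/H_n$ on the tiny-job flow contribution of attempt $k$.

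A case analysis on where the $l$ emissions concentrate then closes the proof. If the completing attempt accounts for $\geq l/2$ of them, the direct estimate $\sum_{i=1}^{l/2}(1 - H_i/H_n) = \Omega(n/\log n)$ suffices. Otherwise $\geq l/2$ emissions occur in failed attempts; when a constant fraction of these come from attempts with $N_k$ past a small constant (so $N_k - H_{N_k} = \Omega(N_k)$), the per-attempt bound directly sums to $\Omega(n/\log n)$; when instead $\Omega(n)$ failed attempts release only $O(1)$ tiny jobs each, every such attempt still burns at least $H_1/H_n = 1/\log n$ of processing on $u$, accumulating $\Omega(n/\log n)$ of wasted work that delays $u$'s completion and inflates the aggregate flow time. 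The delicate step is this last ``singleton-attempt'' regime, where the tiny-job per-attempt bound degenerates at $N_k = 1$; the logarithmic spacing of $T$ is precisely what rescues the argument, forcing every cheap single-release attempt to cost at least $\Omega(1/\log n)$, so $\Omega(n)$ such attempts aggregate to the desired $\Omega(n/\log n)$.
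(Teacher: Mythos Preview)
Your argument is correct, and the $O(1)$ bound on $\OPT$ is identical to the paper's. For the lower bound on $\ALG$, however, you take a noticeably more roundabout path than the paper.

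The paper's proof uses the same partition into runs with $a_i$ releases in run $i$ (your $N_k$), but instead of bounding only the tiny-job flow time, it also counts the flow time accumulated by the large job $u$ itself during run $i$: since $u$ is being processed for the entire run, it contributes at least $H_{a_i}/H_n$ to the flow time in that run. With this extra $j=0$ term, the per-run contribution becomes
\[
\sum_{j=0}^{a_i}\frac{H_{a_i}-H_j}{H_n}
=\frac{1}{H_n}\sum_{k=1}^{a_i}\frac{1}{k}\cdot k
=\frac{a_i}{H_n},
\]
so the total is exactly $\sum_i a_i/H_n = N/H_n = \Theta(n/\log n)$, \emph{independently of how the releases are distributed among runs}. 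No case analysis is needed.

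Your bound $(N_k-H_{N_k})/H_n$ for the tiny-job flow time is exactly the paper's bound minus the $j=0$ term $H_{N_k}/H_n$. Because you drop that term, you are forced into the three-case split, and the ``singleton regime'' is precisely where you must reintroduce $u$'s flow time through the burnt-work argument. That argument is valid (each attempt with $N_k\ge 1$ has $\sigma_k\ge H_1/H_n=1/H_n$, so $\Omega(n)$ such attempts give $F_u\ge\Omega(n/\log n)$), but it duplicates what the paper's inclusion of the $j=0$ term accomplishes uniformly across all runs. In effect, you rediscover the large-job contribution only in the worst case, whereas the paper folds it into every run from the start and gets a one-line total.
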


\begin{proof}

Let $N = n - 2 - cn$ be the total number of $\varepsilon$-jobs released in this phase. Consider any deterministic strategy employed by $\ALG$. This strategy will result in some number of restarts of the large job $u$, say $r \ge 0$. These restarts partition the $N$ small jobs into $r+1$ groups. Let $a_i$ be the number of $\varepsilon$-jobs released during the $i$-th run of job $u$ (for $i=1, \dots, r+1$), where a \emph{run} is a period of continuous work on $u$. We have $\sum_{i=1}^{r+1} a_i = N$.

The key insight is that the total flow time of the small jobs is lower-bounded by the same quantity, regardless of how the $a_i$ are partitioned. By construction of the adversary, we can count total flow time before the next kill (or completion when $i = r + 1$) of the large job by summing up the flow time of $j$-th job in the $i$-th run, where we use the $j = 0$ to describe the large job and $j = 1, \dots, a_i$ to describe the $\varepsilon$-jobs in release order:
\begin{align*}
    \sum_{i = 1}^{r + 1} \sum_{j = 0}^{a_i} (H_{a_i} - H_{j}) / H_n
    &= \sum_{i = 1}^{r + 1} \sum_{j = 0}^{a_i} \left(\sum_{k=j + 1}^{a_i} \frac{1}{k}\right) / H_n \\
    &= \sum_{i = 1}^{r + 1} \sum_{k = 1}^{a_i} \frac{1}{k} \sum_{j = 0}^{k - 1} 1 / H_n \\
    &= \sum_{i = 1}^{r + 1} \frac{a_i} {H_n} = \frac{N}{H_n}. 
\end{align*}
Since $N=\Theta(n)$ and $H_n = \Theta(\log n)$, the flow time is $\Omega(n / \log n)$.
In contrast, an offline $\OPT$ can serve every $\varepsilon$-job upon its arrival, leading to a total flow time of $N \cdot \varepsilon = O(1)$.
\end{proof}

\begin{theorem}
    \label{thm:restart-lb}
    Any deterministic online algorithm with kill-and-restart for total flow time minimization on a single machine has a competitive ratio of at least $\Omega(n / \log n)$.
\end{theorem}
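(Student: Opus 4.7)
The plan is to combine the two phase-specific lemmas already established. Running the adaptive adversary of \Cref{alg:single-restart-lb} against any deterministic online algorithm $\ALG$, \Cref{lem:restart-phase-1} leaves only two possibilities, which I would treat separately.

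First, if Phase~1 terminates via its early-return branch, then $\ALG$ has already incurred flow time $\Omega(n)$ while $\OPT = O(1)$, and the ratio $\Omega(n)$ trivially dominates $\Omega(n/\log n)$. Otherwise, the game enters Phase~2 with at least one unstarted large job $u$ (of processing time at least $1$) carried over from Phase~1. By \Cref{lem:restart-phase-2}, no matter how $\ALG$ schedules $u$ together with the Phase~2 $\varepsilon$-jobs (including any sequence of kill-and-restart operations on $u$), its flow time on those Phase~2 jobs alone is $\Omega(n/\log n)$, and hence so is its total flow time.

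To upgrade this into a competitive lower bound, I must confirm that $\OPT$ for the entire combined instance is $O(1)$. The $c\cdot n\cdot\varepsilon$ wait inserted between the two phases in \Cref{alg:single-restart-lb} is chosen precisely so that an offline schedule can clear all Phase~1 jobs (both large jobs, in the correct order determined by the Phase~1 branch, and the $cn$ tiny jobs) before any Phase~2 job is released. After that point, every Phase~2 job has size $\varepsilon$ and arrives while the machine is idle in the offline schedule, contributing only $O(N\varepsilon)=O(1)$ additional offline flow time. Hence $\OPT = O(1)$ in this branch as well, and the ratio $\ALG/\OPT$ is $\Omega(n/\log n)$.

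I do not anticipate a significant obstacle, since the two heavy lifting lemmas are already in hand; the only point requiring attention is that the adversary in \Cref{alg:single-restart-lb} must be well-defined against \emph{every} deterministic $\ALG$, not only those that change decisions at arrival epochs. This is handled exactly as in the proof of \Cref{lem:restart-phase-1}: without loss of generality one may restrict to non-idling algorithms that reconsider their schedule only at event points (job arrivals, completions, and the adversary's release/kill triggers), since any other $\ALG$ is dominated by such a canonicalization. Combining the two cases above completes the proof.
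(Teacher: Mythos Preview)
Your proposal is correct and follows essentially the same approach as the paper: combine \Cref{lem:restart-phase-1} and \Cref{lem:restart-phase-2} to conclude that $\ALG = \Omega(n/\log n)$ in both branches while $\OPT = O(1)$. The paper's own proof is in fact a two-line appeal to these two lemmas, so your version is simply a more fleshed-out rendering of the same argument, including the helpful explicit justification that the inter-phase wait of $c\cdot n\cdot\varepsilon$ lets the offline schedule clear Phase~1 before any Phase~2 job arrives.
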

\begin{proof}
    From \Cref{lem:restart-phase-1} and \Cref{lem:restart-phase-2}, the flow time of any deterministic algorithm is at least $\Omega(n / \log n)$. Meanwhile, the flow time of $\OPT$ is $O(1)$ in both phases. Therefore, the competitive ratio is $\Omega(n / \log n)$.
\end{proof}

\subsection{Deterministic Algorithms: Multiple Machines without Kill-and-Restart}
\label{sec:multi-nonrestart-lb}
The lower bound $\Omega(n/m^2 + \sqrt{n/m})$  follows by combining our randomized lower bound of $\Omega(\sqrt{n/m})$ with the deterministic $\Omega(n/m^2)$ lower bound from~\cite{DBLP:journals/dam/EpsteinS06}, as presented below. 

\begin{theorem}
    \label{thm:no-restart-lb}
    Any deterministic online algorithm for total flow time minimization on parallel machines has a competitive ratio of at least $\Gamma(n,m)$, for every fixed pair $(n, m)$, where $\Gamma(n,m) = \Omega(n/m^2 + \sqrt{n/m})$.
\end{theorem}
\begin{proof}
    We first show that for every $n$, $m$, we can construct an adversary to prove $\Omega(n/m^2)$, the same as \cite{DBLP:journals/dam/EpsteinS06}. We still present it for completeness. 
    Consider the following case for any deterministic algorithm $\ALG$:
    \begin{itemize}
        \item A job with processing time $1$ is released at time $0$.
        \item Once $\ALG$ starts a job at $t$, the adversary releases $n$ jobs with processing time $1/n$ in $n / m$ batches, where each batch contains $m$ jobs released at time $t + 1/n, t + 2/n, \dots, t + 1/m$ respectively.
    \end{itemize}
    $\OPT$ can start all small jobs at their release time immediately; it can also solve the first job at $0$ if $t > 1$, or solve the first job at $2$ otherwise. Hence, $\OPT$ is $O(1)$ in both cases. For $\ALG$, as it can only use $m - 1$ machines to solve the small jobs, at most $m - 1$ jobs can be solved between the releases of two small job batches. The total flow time for $\ALG$ is at least $\sum_{i = 1}^{n / m} i/n = \Omega(n / m ^ 2)$. Therefore, the competitive ratio is $\Omega(n / m ^ 2)$.

    Finally, combining with our $\Omega(\sqrt{n/m})$ lower bound for all $n$, $m$, it proves the lemma. 
\end{proof}

\section{Lower Bounds for Algorithms without the Knowledge of \texorpdfstring{$n$}{n}}
\label{sec:unknownn_lower}

In this section, we establish strong lower bounds for the ``unknown $n$'' setting by the construction of a powerful \emph{oblivious adversary}. Unlike an adaptive adversary, this adversary does not react to an algorithm's random choices. Instead, it operates by analyzing an algorithm's description to derive its probability distribution for a single, initial job released at time $0$. Based on this distribution, the adversary classifies the algorithm into one of two categories: those that tend to start the job early, and those that tend to delay it. For each category, the adversary then commits to a specific instance to exploit that particular behavior.

\begin{theorem}
\label{thm:unknown-single-lb}
No randomized algorithm can achieve a competitive ratio of $o(n)$ without prior knowledge of $n$ in the single-machine setting.  
\end{theorem}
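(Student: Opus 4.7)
The plan is, for a fixed randomized algorithm $A$, to probe its behavior on a single job released at time $0$ with processing time $1$; because $A$ does not know $n$, the CDF $F$ of the resulting start time $s_1$ is a property of $A$ alone. Let $C_A := \inf\{t \geq 0 : F(t) \geq 1/2\}$ denote the median of $s_1$. Based on whether $C_A$ is finite, the oblivious adversary commits to one of two instances of size $n$, matching the dichotomy advertised in the section preamble.

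\emph{Category 1 ($C_A = \infty$, ``tends to delay'').} The adversary plays job~$1$ at time $0$ with $p_1 = 1$, together with $n-1$ tiny dummy jobs of size $\varepsilon = n^{-3}$ released at time $D = n$. Because $F(t) < 1/2$ for every finite $t$, the truncated expectation satisfies $\E[\min(s_1, D)] = \int_0^D (1 - F(t))\,dt > n/2$, so the expected flow time of job~$1$ alone is at least $n/2$, whereas $\OPT = O(1)$ by scheduling job~$1$ at time $0$ and handling each dummy upon arrival; the ratio is $\Omega(n)$.

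\emph{Category 2 ($C_A < \infty$, ``tends to start early'').} The adversary plays job~$1$ at time $0$ with $p_1 = 1$, plus $n-1$ small jobs of size $\varepsilon$ released simultaneously at a carefully chosen time $t^* \in [0, C_A + 1]$. Introduce the ``committed-blocking'' integral
\[
g(t) \;:=\; \int_{\max(0,\, t-1)}^{t}(s + 1 - t)\,dF(s),
\]
which equals the expected blocking suffered by a small job arriving at time $t$ from a job~$1$ that the algorithm has already committed to starting. A Fubini swap of the iterated integral yields $\int_0^{C_A + 1} g(t)\,dt \geq F(C_A)/2 \geq 1/4$, so some $t^* \in [0, C_A + 1]$ satisfies $g(t^*) \geq 1/(4(C_A + 1))$. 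Releasing the $n-1$ small jobs at this $t^*$ forces expected total small-job delay at least $(n-1)\,g(t^*) = \Omega(n / C_A)$, while $\OPT = O(1)$ (either schedule job~$1$ first, or the small batch first when $t^*$ is small).

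The main obstacle is ensuring that Category~2 yields an honest $\Omega(n)$ bound rather than something weaker such as $\Omega(\sqrt n)$. The resolution is precisely that $C_A$ is a universal constant of the algorithm, independent of $n$ --- this is exactly where the unknown-$n$ hypothesis is used --- so $1/(4(C_A + 1))$ folds into the hidden constant of $\Omega(\cdot)$, and $\Omega(n/C_A)$ is genuinely linear in $n$. A secondary subtlety is that when $s_1 > t^*$ the algorithm can adapt upon seeing the small jobs and incur no forced delay; but $g(t^*)$ already restricts to the committed event $\{s_1 \leq t^*\}$, so discarding the uncommitted contribution only further absorbs into the same constant. Putting both cases together shows $\Gamma_A(n) = \Omega(n)$ for all sufficiently large $n$, contradicting any claim of $o(n)$-competitiveness.
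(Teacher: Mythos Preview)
Your proof is correct and follows the same high-level strategy as the paper: since the algorithm cannot see $n$, its behavior on a solitary job released at time $0$ is a fixed distribution, and the oblivious adversary exploits this by appending a batch of tiny jobs at a time chosen to maximize blocking. The technical execution differs. The paper discretizes time into unit intervals $[1,2),\ldots,[x,x+1)$ and applies pigeonhole: either some interval carries probability mass at least $y$ (Type~A, release dummies just after it), or the total mass up to time $x$ is at most $xy$ (Type~B, the big job's own flow time is large). You instead work directly with the median $C_A$ of the start-time CDF and use a Fubini averaging over $t\in[0,C_A+1]$ of the committed-blocking integral $g(t)$ to locate a single good release time $t^*$; the Category~1 case ($C_A=\infty$) becomes essentially vacuous once one notes that $F(t)<1/2$ for all $t$ forces infinite expected flow time anyway. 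Your route is arguably cleaner—no discretization, and the dependence on the algorithm enters through the single scalar $C_A$—while the paper's version makes the two-case structure from the section preamble more explicit. One minor point worth tightening: when $s_1=t^*$ exactly, the algorithm may observe the small jobs before committing; this is handled either by taking $s<t^*$ strictly in $g$ (the Fubini value is unchanged) or by releasing the batch at $t^*+\delta$.
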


\begin{proof}
Recall that if a randomized algorithm $\ALG$ could achieve a competitive ratio of $o(n)$, then for every constant $C > 0$, there would exist a positive integer $n_0$ such that 
\[
\frac{\mathbb{E}[\ALG]}{\OPT} \le C \cdot n, \quad \text{for all } n \ge n_0.
\]
We now prove a contradiction: for any randomized algorithm $\ALG$, there exists a constant $C > 0$ such that for every positive integer $n_0$, one can construct an instance $I$ with $|I| = n \ge n_0$ satisfying
\[
\frac{\mathbb{E}[\ALG]}{\OPT} > C \cdot n.
\]
Hence, no randomized algorithm can achieve a competitive ratio of $o(n)$.

\medskip
\noindent
\textbf{Adversary construction.}
We construct an \emph{oblivious adversary} as follows. At time $0$, the adversary releases a single job of size $2$. Let $p(t)$ denote the probability density that $\ALG$ starts this job at time $t$, assuming no other jobs are present. Since $p(t)$ depends only on the algorithm's description, it is known to the adversary in advance. The algorithm must fall into one of the following categories (the parameters $x$ and $y$ will be fixed later):

\begin{enumerate}
    \item[\textbf{A.}] There exists an interval $[t, t+1)$, where $t \in \{1,2,\dots,x\}$, such that 
    \[
    \int_{t}^{t+1} p(s)\,ds \ge y.
    \]
    In this case, the adversary releases $n_1$ dummy jobs of size $0$ at time $t+1$. 
    With probability at least $y$, $\ALG$ eventually starts the size-$2$ job within $[t,t+1)$, all these $n_1$ jobs complete no earlier than time $t+2$, so each contributes at least $1$ to flow time and the algorithm’s total flow time is at least $n_1$, while an optimal solution completes all jobs by time $4$, i.e., $\OPT \le 4$. 
    Therefore,
    \[
    \frac{\mathbb{E}[\ALG]}{n \cdot \OPT} > \frac{y n_1}{4 (n_1 + 1)} \to \frac{y}{4}, \quad \text{as } n_1 \to \infty.
    \]
    \item[\textbf{B.}] For all $t \in \{1,2,\dots,x\}$, it holds that $\int_{t}^{t+1} p(s)\,ds < y$. 
    In this case, the probability that the size-$2$ job has not been started by time $x$ is at least $1 - xy$. 
    The adversary releases $n_0$ dummy jobs of size $0$ at time $x + 1$. 
    If $\ALG$ delays the size-$2$ job until after time $x$, its flow time alone is at least $x + 2$, while $\OPT \le 2$.
    Therefore,
    \[
    \frac{\mathbb{E}[\ALG]}{n \cdot \OPT} > \frac{(1 - xy) x}{2 (n_0 + 1)}.
    \]
\end{enumerate}

\paragraph{Parameter setting.}
Let $x = n_0$ and $y = \frac{1}{2n_0}$, so that $xy = \frac{1}{2}$. 
Then, in the Type~A case, there exists an instance $I$ with $|I| = n \gg n_0$ such that $\mathbb{E}[\ALG]/\OPT > (1/(8n_0)) \cdot n$;  
in the Type~B case, there exists an instance $I$ with $|I| = n = n_0 + 1 > n_0$ such that $\mathbb{E}[\ALG]/\OPT > (1/8) \cdot n$. 

Therefore, for every constant $0 < C \leq 1/8$, there exists a corresponding integer $n_0 = 1/(8C)$ such that
\[
\frac{\mathbb{E}[\ALG]}{\OPT} > C \cdot n, 
\]
for some instance with $n \ge n_0$.  
Consequently, no randomized algorithm can achieve a competitive ratio of $o(n)$.
\end{proof}

\begin{theorem}
\label{thm:unknown-multi-lb}
No randomized algorithm can achieve a competitive ratio of $o(n/m^2 + \sqrt{n/m})$ without prior knowledge of $n$ in the multiple-machine setting.
\end{theorem}

\noindent
The $\sqrt{n/m}$ term follows from the known-$n$ lower bound (\Cref{thm:multi-lb}). For the $n/m^2$ term, the intuition resembles \Cref{lem:opt-m-m-k}: losing one machine out of $m$ incurs an $\Omega(n/m^2)$ penalty in flow time. 

\begin{proof}
Since the $\Omega(\sqrt{n/m})$ lower bound holds even when $n$ is known (\Cref{thm:multi-lb}), it suffices to construct a hard instance achieving ratio $\Omega(n/m^2)$.

\medskip
\noindent
\textbf{Adversary construction.}
We use the same setup as in the single-machine case: the adversary first releases a size-2 job at time $0$ and classifies the algorithm into Type~A or Type~B based on the density $p(t)$. The only difference lies in how we construct the hard instance for Type~A.

\begin{enumerate}
    \item[\textbf{A.}] If $\ALG$ is of Type~A, the adversary releases $n_1$ additional jobs of size $1/k$, where $k = n_1/m$. The release pattern is:
    $$
    t+1,\; t+1+\frac{1}{k},\; t+1+\frac{2}{k},\; \dots,\; t+1+\frac{k-1}{k},
    $$
    and at each of these $k$ release times, $m$ new jobs (one per machine) are released. Hence, the total number of small jobs is $n_1 = m \cdot k$.

    The optimal scheduler can avoid any conflict between the size-2 job and the small jobs, so that all small jobs start immediately at their release times. The total flow time is therefore at most $m + 4$. However, with probability at least $y$, the algorithm starts the size-2 job in $[t, t+1)$, leaving only $m-1$ machines for small jobs. In this case, the total flow time is at least
    $$
    m + k \cdot 1 = m + \frac{n_1}{m} \ge \frac{n_1}{m}.
    $$
    Therefore,
    $$
    \frac{\mathbb{E}[\ALG]}{\OPT} \ge \frac{y \cdot (n_1/m)}{m + 4} \ge \frac{y n_1}{5 m^2},
    $$
    where the last inequality uses $m \ge 1$. As $n_1 \to \infty$, this ratio approaches $y n / (5m^2)$.

    \item[\textbf{B.}] The construction and argument are identical to the single-machine case: if the algorithm delays the size-2 job beyond time $x$, the expected ratio satisfies
    $$
    \frac{\mathbb{E}[\ALG]}{\OPT} > \frac{(1 - xy)x}{2(n_0 + 1)}.
    $$
\end{enumerate}

\paragraph{Parameter setting.}
We use the same parameters as before, setting $x = n_0$ and $y = 1/(2n_0)$ so that $xy = 1/2$. Under this choice, the ratio for Type~A algorithms is at least $(1/(10n_0)) \cdot n/m^2$, while for Type~B algorithms it is at least $(1/8) \cdot n \ge (1/8) \cdot n/m^2$.

Therefore, for every $0 < C \leq 1/10$, there exists a corresponding integer $n_0 = 1/(10C) \ge 1$ such that
$$
\frac{\mathbb{E}[\ALG]}{\OPT} > C \cdot \frac{n}{m^2},
$$
for some instance with $n > n_0$. Consequently, no randomized algorithm can achieve a competitive ratio of $o(n/m^2)$.
\end{proof}

\bibliographystyle{alphadin}
\bibliography{ref}

\appendix
\section{Generalized Version of Non-Preemptive Shortest Job First}
\label{sec:SJF}

In this section, we analyze the performance of the Non-Preemptive Shortest Job First (\NSJF) algorithm in a generalized version, with the concept of \emph{blocking period}: for blocking vector $\vec{b}$, where each $b_i$ denotes the initial blocking time of machine $i$ --- that is, machine $i$ is unavailable for processing during the interval $[0, b_i)$, we aim to prove the following bound:
$$
F(\NSJF(\vec{b})) \leq F(\OPT(\vec{0})) + \frac{nB}{m} + 2n\tau.
$$
where $B = \sum_{i=1}^m b_i$.

Since the context is clear, we will refer to $\NSJF(\vec{b})$ simply as \NSJF and $\OPT(\vec{0})$ as \OPT. Due to the initial blocking times, the number of active machines may differ between \NSJF and \OPT. To capture this, we define a function $a(t)$ to be the number of active machines at time $t$ under \NSJF. We also define the total active power of \NSJF up to time $t$ as:
$$
A(t) = \int_0^t a(y) \, dy.
$$

For any processing time threshold $p$ and time $t$, let $J_{\leq p}(t)$ be the set of jobs with $p_j \leq p$ and release time $r_j \leq t$.

\paragraph{Step 1: Volume Matching.}
For each algorithm, we track the volume of work associated with jobs in $J_{\leq p}(t)$. Let $V_{\leq p}(t)$ denote the total volume of work processed on jobs in $J_{\leq p}(t)$ by time $t$ under \OPT, and let $V'_{\leq p}(t)$ denote the total volume of jobs in $J_{\leq p}(t)$ that have been started by time $t$ under \NSJF.

\begin{lemma}[Volume Matching]
\label{lem:sjfvolume}
For every $t \geq 0$, define $t' = A^{-1}(m(t + \tau))$. Then for every threshold $p$, we have
$V'_{\leq p}(t') \geq V_{\leq p}(t)$.
\end{lemma}

\begin{proof}
Fix an arbitrary $p$, and let $t_0$ be the last time some machine is idle under $\NSJF$ before $t'$. Let $j$ be the job with $p_j > p$ that is started latest by \NSJF within the interval $(t_0, t')$. If no such job exists, we let $s_j = t_0$ for a fictitious job $j$ of infinite size.

Since \NSJF starts job $j$ with $p_j > p$ at time $s_j$, it must have already started all jobs with size at most $p$ released up to $s_j$ (due to its SJF policy). Thus, we have:
$$
    V'_{\leq p}(s_j) = \sum_{h \in J_{\leq p}(s_j)} p_h \geq V_{\leq p}(s_j).
$$

We now analyze two cases. If $s_j \geq t$, we immediately have:
$$
    V'_{\leq p}(t') \geq V'_{\leq p}(s_j) \geq V_{\leq p}(s_j) \geq V_{\leq p}(t).
$$
Otherwise, if $s_j < t$, then since $s_j$ is the latest start of a job with $p_j > p$ after the last idle time $t_0$, all processing in $[s_j + \tau, t')$ is on jobs of size at most $p$, and no machine is idle in this interval. Therefore:
\begin{align*}
    V'_{\leq p}(t')
    &\geq V'_{\leq p}(s_j) + \int_{s_j + \tau}^{t'} a(y) \, dy \\
    &\geq V_{\leq p}(s_j) + A(t') - A(s_j + \tau) \\
    &\geq V_{\leq p}(s_j) + m(t + \tau) - m(s_j + \tau) \\
    &\geq V_{\leq p}(s_j) + m(t - s_j) \\
    &\geq V_{\leq p}(t).
\end{align*}
\end{proof}

\paragraph{Step 2: Completion Matching.}
Let $N_{\leq p}(t)$ and $N'_{\leq p}(t)$ denote the number of jobs in $J_{\leq p}(t)$ completed by time $t$ under \OPT and \NSJF, respectively. When the subscript ${\leq p}$ is omitted, we count all completed jobs regardless of size.

\begin{lemma}[Completion Matching]
\label{lem:sjfnumber}
    For every $t \geq 0$, define $t' = A^{-1}(m(t+2\tau))$. For every threshold $p$, we have $N'_{\leq p}(t') \geq N_{\leq p}(t)$.
\end{lemma}
\begin{proof}
Define $t_1 = A^{-1}(m(t + \tau))$.

For each rank $i$, let $p_i$ be the size of the $i$-th smallest job completed under \OPT by time $t$, and let $p'_i$ be the size of the $i$-th smallest job started under \NSJF by time $t_1$. Let $k$ and $k'$ denote the total number of such jobs, respectively. We claim that $k' \ge k$ and $p'_i \le p_i$ for all $i \le k$.

Assume for contradiction that $i \le k$ is the smallest rank such that either $i > k'$ or $p'_i > p_i$. For \OPT, at least $i$ jobs of size at most $p_i$ are completed by time $t$, so:
$$
    V_{\leq p_i}(t) \geq \sum_{j=1}^{i} p_j.
$$
For \NSJF, all started jobs of size $\leq p_i$ are among the first $i - 1$ jobs (either because only $i - 1$ jobs have started, or because the $i$-th smallest is strictly larger than $p_i$). Therefore:
$$
    V'_{\leq p_i}(t_1) \leq \sum_{j=1}^{i-1} p'_j \leq \sum_{j=1}^{i-1} p_j,
$$
where the second inequality uses $p'_j \leq p_j$ for all $j < i$. This gives $V'_{\leq p_i}(t_1) < V_{\leq p_i}(t)$, contradicting \Cref{lem:sjfvolume}.

Therefore $k' \ge k$ and $p'_i \leq p_i$ for all $i \le k$, which implies that for every threshold $p$, the number of jobs of size $\leq p$ started by \NSJF by time $t_1$ is at least the number completed by \OPT by time $t$. Moreover, since the number of active machines in \NSJF is non-decreasing over time, all jobs started by time $t_1$ are completed by time $t'$. This completes the proof.
\end{proof}

\paragraph{Step 3: Bounding the Flow Time.}

\begin{lemma}
\label{lem:SJFgeneral}
Consider running \NSJF on jobs with processing times bounded by $\tau$, even in the presence of an initial blocking vector $\vec{b}$ on machines, where the total blocking time is $B = \sum_{i=1}^m b_i$. Then, compared to an offline preemptive optimal solution, even with migration, and without any blocking, the flow time of \NSJF is bounded by:
$$
    F(\NSJF(\vec{b})) \leq F(\OPT(\vec{0})) + 2n\tau + \frac{nB}{m}.
$$
\end{lemma}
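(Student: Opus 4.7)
The plan is to adapt the three-step strategy used in the warm-up (Lemma~\ref{lem:sjfmain}), now using the already-established generalized volume and number lemmas (Lemma~\ref{lem:sjfvolumn} and Lemma~\ref{lem:sjfnumber}), which compare \NSJF at time $t' = A^{-1}(m(t+2\tau))$ with \OPT at time $t$. The only new ingredient I need is to translate the time shift $t' - t$ (measured in \NSJF's ``machine-time'' coordinates) into an actual wall-clock shift, since a direct substitution $t' = t + 2\tau$ no longer holds once machines are unavailable during initial blocking periods.

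First I would establish the wall-clock bound $t' \le t + 2\tau + B/m$. The idea is that $A(t') = mt' - \sum_{i=1}^m \min(b_i, t')$, so the total machine-time lost to blocking is at most $B = \sum_i b_i$. Hence $A(t') \ge mt' - B$, and from the defining equation $A(t') = m(t + 2\tau)$ one concludes $t' \le t + 2\tau + B/m$. Since $N'$ is non-decreasing in time, combining with Lemma~\ref{lem:sjfnumber} gives the cleaner wall-clock inequality
\[
N'\!\bigl(t + 2\tau + B/m\bigr) \;\ge\; N'(t') \;\ge\; N_{\le p}(t) \quad (\text{even taking } p = \infty).
\]

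With that in hand, I would mimic the closing calculation of Lemma~\ref{lem:sjfmain} verbatim, but with shift $\Delta := 2\tau + B/m$ in place of $2\tau$. Writing $u$ for the last completion time under \NSJF and $|J(t)|$ for the number of jobs released by $t$, the flow time decomposes as
\begin{align*}
F(\NSJF(\vec b))
&= \int_0^u \bigl(|J(t)| - N'(t)\bigr)\,dt \\
&\le \int_0^{u-\Delta}\bigl(|J(t)| - N'(t+\Delta)\bigr)\,dt + \int_{u-\Delta}^u |J(t)|\,dt \\
&\le \int_0^{u-\Delta}\bigl(|J(t)| - N(t)\bigr)\,dt + n\Delta \\
&\le F(\OPT(\vec 0)) + 2n\tau + \frac{nB}{m},
\end{align*}
where the wall-clock shift bound plus Lemma~\ref{lem:sjfnumber} justifies the middle inequality and the crude $|J(t)| \le n$ handles the boundary strip of width $\Delta$.

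The only subtle point I expect to be worth spelling out is the first step in this chain, since it is the one place the generalized setting differs meaningfully from the warm-up: strictly speaking, the inequality requires both the monotonicity of $N'$ (to replace $N'(t)$ by $N'(t+\Delta)$ on the truncated interval) and the estimate $\int_{u-\Delta}^u |J(t)|\,dt \le n\Delta$, both of which hold unchanged in the presence of blocking. Once this is written out carefully, everything else is mechanical, and in particular no new property of the blocking pattern is needed beyond $B = \sum_i b_i$, so the bound $n B/m$ arises purely from the wall-clock shift estimate $t' - t \le 2\tau + B/m$.
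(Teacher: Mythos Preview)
Your proposal is correct and mirrors the paper's proof essentially step for step: the paper likewise bounds $A^{-1}(m(t+2\tau)) \le t + 2\tau + B/m$ from $A(\theta) \ge m\theta - B$, then plugs the wall-clock shift $\Delta = 2\tau + B/m$ into the same integral decomposition you wrote out. One minor remark: the first inequality in the chain is justified not by monotonicity of $N'$ but simply by nonnegativity of $N'$ (equivalently, by the change of variables $s = t+\Delta$ and dropping $\int_0^\Delta N'(s)\,ds \ge 0$); otherwise your plan is exactly what the paper does.
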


\begin{proof}
Let $u$ denote the maximum completion time under \NSJF. The total flow times can be expressed as:
$$
    F(\OPT) = \int_0^\infty \left( |J(t)| - N(t) \right) \, dt, \quad
    F(\NSJF) = \int_0^u \left( |J(t)| - N'(t) \right) \, dt,
$$
where $|J(t)|$ is the number of jobs released by time $t$.

By \Cref{lem:sjfnumber}, \NSJF dominates \OPT after a shifted time point, specifically after $A^{-1}(m(t+2\tau))$. To bound the shift, observe that for any $\theta$,
$$
    A(\theta) = \int_0^\theta a(y) \, dy \geq m\theta - B,
$$
which implies:
$$
    A^{-1}(m(t+2\tau)) \leq t + 2\tau + \frac{B}{m}.
$$

Using this, we bound the flow time of \NSJF by shifting the completion curve by at most $2\tau + \frac{B}{m}$:
\begin{align*}
    F(\NSJF)
    &= \int_0^u \left( |J(t)| - N'(t) \right) \, dt \\
    &\leq \int_0^{u - 2\tau - \frac{B}{m}} \left( |J(t)| - N'(t + 2\tau + \frac{B}{m}) \right) \, dt + \int_{u - 2\tau - \frac{B}{m}}^u |J(t)| \, dt \\
    &\leq \int_0^{u - 2\tau - \frac{B}{m}} \left( |J(t)| - N(t) \right) \, dt + 2n\tau + \frac{nB}{m} \\
    &\leq F(\OPT) + 2n\tau + \frac{nB}{m}.
\end{align*}
\end{proof}

\section{Randomized Non-Preemptive Algorithm: Multiple Machines}
\label{sec:multi-random}

In this section, we extend \Cref{alg:dynamic} from the single-machine case to the multi-machine case. The intuition is introduced in \Cref{sec:single-random}. We detail the dynamic but online stable algorithm in \Cref{alg:multi-dynamic}.

\begin{algorithm}[ht]
    \caption{Dynamic (Online Stable) Non-Preemptive Randomized Algorithm for Multi-Machine}
    \label{alg:multi-dynamic}
    \SetKwFunction{FPartition}{Partition}
    \SetKwInOut{Input}{Include}  
    \Input{\Cref{alg:partition}}
    \KwData{Every job's current states as determined by \Cref{alg:partition}.}
    \SetKwBlock{OnJobRelease}{\textbf{On job $i$ release:}}{}
    \SetKwBlock{OnEvent}{\textbf{On any job completion or after any job release:}}{}
    \OnJobRelease
    {
        Call \FPartition{$i, \ell$} with $\ell = \lfloor \sqrt{nm} \rfloor$ to classify job $i$ and update related jobs\; 
        \If{$i$ is classified as large} {
            Sample $w_i \sim \mathrm{Unif}\{1, 2, \dotsc, \lfloor \sqrt{n/m} \rfloor \}$ \; 
            Sample $m_i \sim \mathrm{Unif}\{1, 2, \dotsc, m\}$ \; 
        }
        Run $\NSJF$ on all small jobs to get schedule $\S_1$ \tcp*{Including proxy jobs.}
        $\S_2 \gets \S_1$ \; 
        \For{each unproxied (active or committed) large job $j$ in order of release time $r_j$ } {
            $t \gets $ the first time in $\S_2$ where the cumulative idle time on machine $m_j$ since $r_j$ is at least $w_j p_j$\;
            Insert $j$ into $\S_2$ to start at $t$ on $m_j$ \tcp*{Overlapping jobs are shifted right.}
        }
        Schedule jobs according to $\S_2$\;
    }
\end{algorithm}

\begin{theorem}
\label{thm:random-no-restart-multi-alg}
    \Cref{alg:multi-dynamic} is an online polynomial-time randomized non-preemptive algorithm for total flow time minimization that is $O(\sqrt{n/m})$-competitive against the preemptive offline solution.
\end{theorem}
\begin{proof}
The algorithm is online stable and runs in polynomial time, for the same reason as in \Cref{lem:single-online-simulatable}. We use $S$ and $L$ to denote the sets of small and large jobs, where $S$ also includes proxy jobs. We use $J$ to denote the real job set, which includes only actual jobs and excludes proxy jobs.

For small jobs, by \Cref{lem:num_partition}, and \Cref{lem:sjfmain}, we have:
\[
F(\S_1) = O(\sqrt{n/m}) \cdot F(\OPT(S)) = O(\sqrt{n/m}) \cdot \OPT.
\]

For large jobs, we bound both the external delay and the self-delay. The external delay $\delta_j$ of a large job $j$ (i.e., the flow time increase caused to earlier jobs by inserting $j$) satisfies $\E[\delta_j] = O(\sqrt{n/m}) \cdot p_j$. The proof is as follows: Let $D_{k,l}$ be the set of jobs delayed by $j$ when $w_j = k$, $m_j = l$, where $k \in \{1, \dots, \lfloor \sqrt{n/m} \rfloor \}$ and $l \in \{1, \dots, m\}$. Since there is $p_j$ idle time between two possible adjacent insertion locations, all $D_{k,l}$ are disjoint. Then:
\[
\E[|D_{w_j,m_j}|] = \sum_{k=1}^{\lfloor \sqrt{n/m} \rfloor } \sum_{l=1}^{m} |D_{k,l}| \cdot \Pr[w_j = k] \cdot \Pr[m_j = l] = O(\sqrt{n/m}),
\]
so $\E[\delta_j] = O(\sqrt{n/m}) \cdot p_j$.

For the self-delay, if $j$ is proxied by $j'$, we count $r_{j'} - r_j$; otherwise, we count $\hat{C}_j - r_j$, where $\hat{C}_j$ is the completion time of $j$ immediately after insertion. This self-delay is denoted by $\hat{F}_j$. 
Ignoring the final $p_j$ processing time, the delay term only comes from collecting idle time or passing through busy time.

Each job collects at most $\sqrt{n/m} \cdot p_j$ units of idle time, so the total contribution from this is:
\[
\sqrt{n/m} \cdot \sum_{j \in L} p_j \leq \sqrt{n/m} \cdot \OPT.
\]

For busy time, each unit of busy period can be charged to at most $\sqrt{nm}$ jobs (by \Cref{lem:num_partition}), and the total busy time is at most $\OPT/m$, so (putting back $p_j$ for each $j$):
\[
\sum_{j\in L} \hat{F}_j \leq \sum_{j \in L} p_j + \sqrt{nm} \cdot \frac{\OPT}{m} = O(\sqrt{n/m}) \cdot \OPT.
\]

Putting everything together:
\[
F(\S_2) 
\leq F(\S_1) + \sum_{j \in L} \E[\delta_j] + \sum_{j \in L} \hat{F}_j = O(\sqrt{n/m}) \cdot \OPT.
\]
\end{proof}

\end{document}